\documentclass[compsoc, conference, letterpaper, 10pt, times]{IEEEtran} 
\usepackage{epsfig}
\usepackage{subfigure}
% !TEX root = sp-main.tex
%\usepackage{euscript}
\usepackage{amsmath,amssymb,amsfonts}
\usepackage{caption}
\usepackage{euler}
\usepackage{fancybox}
\usepackage{hyperref}
\usepackage{paralist}
\usepackage{color}
\usepackage{wrapfig}
\usepackage{tikz}
\usepackage{setspace}
\usepackage[framemethod=tikz]{mdframed}
\usepackage{xspace}
\usepackage{pgfplots}
\pgfplotsset{compat=1.5}

\newenvironment{proof}{\noindent{\bf Proof : \ }}{\hfill$\Box$\par\medskip}

\newtheorem{theorem}{Theorem}
\newtheorem{corollary}[theorem]{Corollary}

\newtheorem{claim}[theorem]{Claim}

\newenvironment{remindertheorem}[1]{\medskip \noindent {\bf Reminder of  #1.  }\em}{}

\newenvironment{proofof}[1]{\begin{trivlist} \item {\bf Proof
#1:~~}}
  {\qed\end{trivlist}}
\renewenvironment{proofof}[1]{\par\medskip\noindent{\bf Proof of #1: \ }}{\hfill$\Box$\par\medskip}

%---------------------------------------------------------------------------------------------------------------------------------------------------

%---------------------------------------------------------------------------------------------------------------------------------------------------

\newcommand{\COMMENTED}[1]{{}}

\newcommand{\PPr}[1]{\ensuremath{\mathbf{Pr}\Big[#1\Big]}}

\newcommand{\EEx}[1]{\ensuremath{\mathbf{E}\Big[#1\Big]}}

\newcommand{\eps}{\epsilon}

%%% Unified Formating
  % Math Def which can also be used in normal text.
     % Distribution should use mathcal.
       % (Important) Sets should use mathbb.
       % Algorithms should use mathtt.
      % Functions denoted with text (e.g. size()) should use mathsf.
               % Vector
                         % Shorthand for ``displaystyle''.

%%% Text notation

     % 'The i-th entry it a list...' --> i\th
     % 'The 1-st entry it a list...' --> 1\st
     % 'The 2-nd entry it a list...' --> 2\nd
     % 'The 3-rd entry it a list...' --> 3\rd

%%% Common Sets & Symbols & Functions
                   % Natural Numbers {0,1,2,...}
               % Natural Numbers {1,2,3,...}
                   % Integers
                   % Real Numbers
                      % Binary Set
                       % Set of all binary strings
                    % Set of indicies
%\renewcommand \i {\mdef{\mathsf{ind}}}               % Index symbol (alternative to just 'i')
                     % x n
                   % 'Sampled From'
                  % Event
  % Result of a computation
                         % More mathy l
                % Function outputing vector of cardinalities of a given vector of sets.
                % Negliglbe function
             % Polylogarithm function
       % Any function that is polynomial in the logarithm of the logarithm of the argument
                   % Hint space for predictors
                  % Image of a function
         % Absolute value
              % Absolute value
               % Absolute value
                     % Dimension of a vector
                        % Absolute value
 % Expected value

\newcommand{\ignore}[1]{}

%\newif\ifnotes\notestrue %set this to true if notes are visible and to false (next line) if they should be hidden
\newif\ifnotes\notesfalse
\ifnotes
\newcommand{\bnote}[1]{\textcolor{red}{{\bf (Ben:} {#1}{\bf ) }} \marginpar{\tiny\bf
             \begin{minipage}[t]{0.5in}
               \raggedright b:
                \end{minipage}}}
\newcommand{\jnote}[1]{\textcolor{blue}{{\bf (Jeremiah:} {#1}{\bf ) }} \marginpar{\tiny\bf
             \begin{minipage}[t]{0.5in}
               \raggedright J:
                \end{minipage}}}

\newcommand{\snote}[1]{\textcolor{green}{{\bf (Samson:} {#1}{\bf ) }} \marginpar{\tiny\bf
             \begin{minipage}[t]{0.5in}
               \raggedright S:
            \end{minipage}}}            							
\else
\newcommand{\bnote}[1]{}
\newcommand{\snote}[1]{}
\newcommand{\jnote}[1]{}
\fi

\title{On the Economics of Offline Password Cracking} 
\author{Jeremiah Blocki \\ Purdue University \and Ben Harsha \\ Purdue University \and Samson Zhou \\ Carnegie Mellon University}
\raggedbottom
\begin{document}

\maketitle
% !TEX root =sp-main.tex
\begin{abstract}
%\subsection*{Abstract}
We develop an economic model of an offline password cracker which allows us to make quantitative predictions about the fraction of accounts that a rational password attacker would crack in the event of an authentication server breach. We apply our economic model to analyze recent massive password breaches at Yahoo!, Dropbox, LastPass and AshleyMadison. All four organizations were using key-stretching to protect user passwords. In fact, LastPass' use of PBKDF2-SHA256 with $10^5$ hash iterations exceeds 2017 NIST minimum recommendation by an order of magnitude. Nevertheless, our analysis paints a bleak picture: the adopted key-stretching levels provide insufficient protection for user passwords. In particular, we present strong evidence that most user passwords follow a Zipf's law distribution, and characterize the behavior of a rational attacker when user passwords are selected from a Zipf's law distribution. We show that there is a finite threshold  which depends on the Zipf's law parameters that characterizes the behavior of a rational attacker ---  if the value of a cracked password (normalized by the cost of computing the password hash function) exceeds this threshold then the adversary's optimal strategy is {\em always} to continue attacking until each user password has been cracked. In all cases (Yahoo!, Dropbox, LastPass and AshleyMadison) we find that the value of a cracked password almost certainly exceeds this threshold meaning that a rational attacker would crack all passwords that are selected from the Zipf's law distribution (i.e., most user passwords). This prediction holds even if we incorporate an aggressive model of diminishing returns for the attacker (e.g., the total value of $500$ million cracked passwords is less than $100$ times the total value of $5$ million passwords). On a positive note our analysis demonstrates that memory hard functions (MHFs) such as SCRYPT or Argon2i can significantly reduce the damage of an offline attack. In particular, we find that because MHFs substantially increase guessing costs a rational attacker will give up well before he cracks most user passwords and this prediction holds even if the attacker does not encounter diminishing returns for additional cracked passwords. Based on our analysis we advocate that password hashing standards should be updated to require the use of memory hard functions for password hashing and disallow the use of non-memory hard functions such as BCRYPT or PBKDF2. 
 %Thus, our model can provide concrete guidance for an organization to follow when tuning the hardness parameters of its password hashing algorithm. When the user password distribution follows Zipf's law we show that there is a finite threshold  which characterizes the behavior of the attacker. In particular, if the value of a cracked password (normalized by the cost of computing the password hash function) exceeds this threshold then the adversary's optimal strategy is {\em always} to continue attacking until each user password has been cracked. We provide strong evidence that the distribution of user passwords does indeed follow Zipf's law by showing that it fits a dataset of $70$ million passwords. Finally, we apply our economic model to analyze recent massive password breaches at Yahoo!, Dropbox, LastPass and AshleyMadison. All four organizations adopted key-stretching tools to protect user passwords. Nevertheless, our analysis paints a bleak picture: the adopted key-stretching levels provide insufficient protection for user passwords. In particular, if the password distribution follows Zipf's law and the adversary's valuation of a cracked password is not {\em significantly} less than blackmarket prices then a rational attacker will eventually crack all user passwords.  On a positive note our analysis suggests that memory hard functions can significantly reduce the damage of an offline attack by forcing a rational attacker to give up much earlier.
\end{abstract}
% !TEX root =sp-main.tex
\section{Introduction} \label{sec:intro}
In the last few years breaches at organizations like Yahoo!, Dropbox, Lastpass, AshleyMadison, LinkedIn, eBay and Adult FriendFinder have exposed over a billion user passwords to offline attacks. Password hashing algorithms are a critical last line of defense against an offline attacker who has stolen password hash values from an authentication server. An attacker who has stolen a user's password hash value can attempt to crack each user's password offline by comparing the hashes of likely password guesses with the stolen hash value. Because the attacker can check each guess offline it is no longer possible to lockout the adversary after several incorrect guesses. 

An offline attacker is limited only by the cost of computing the hash function. Ideally, the password hashing algorithm should be moderately expensive to compute so that it is prohibitively expensive for an offline attacker to crack most user passwords e.g., by checking millions, billions or even trillions of password guesses for each user. It is perhaps encouraging that AshleyMadison, Dropbox, LastPass and Yahoo! had adopted slow password hashing algorithms like BCRYPT and PBKDF2-SHA256 to discourage an offline attacker from cracking passwords. In the aftermath of these breaches, the claim that slow password hashing algorithms like BCRYPT~\cite{provos1999bcrypt} or PBKDF2~\cite{kaliski2000pkcs} are sufficient to protect most user passwords from offline attackers has been repeated frequently. For example, LastPass~\cite{LastPassBreach} claimed that ``Cracking our algorithms [PBKDF2-SHA256] is extremely difficult, even for the strongest of computers.'' Security experts have made similar claims about BCRYPT e.g., after the Dropbox breach~\cite{DropboxBreach} a prominent security expert confidently stated that ``all but the worst possible password choices are going to remain secure'' because Dropbox had used the BCRYPT hashing algorithm. 

Are these strong claims about the security of BCRYPT and PBKDF2 true? Despite all of their problems passwords remain prevalent and are likely to remain entrenched as the dominant form of authentication on the internet for years to come because they are easy to use and deploy, and users are already familiar with them~\cite{bonneau2012quest,Herley2012,bonneau2015passwords}. It is therefore imperative to develop tools to quantify the damages of password breaches, and provide guidance to organizations on how to store passwords. In this work we seek to address the following question:

\begin{quote}
Can we quantitatively predict how many user passwords a rational attacker will crack after a breach?
\end{quote}

We introduce a game-theoretic model to answer this question and analyze recent data-breaches. Our analysis strongly challenges the claim that BCRYPT and PBKDF2-SHA256 provide adequate protection for user passwords. On the positive side our analysis indicates that more modern password hashing algorithms~\cite{PHC} (e.g., memory hard functions~\cite{Per09}) can provide meaningful protection against offline attackers. 

\subsection{Contributions} We first develop a new decision-theoretic framework to quantify the damage of an offline attack. Our model generalizes the stackelberg game-theoretic model of Blocki and Datta~\cite{BlockiD16}. A rational password attacker is economically motivated and will quit guessing once his marginal guessing costs exceed his marginal reward. The attacker's marginal reward is given by the probability $p_i$ that the next ($i$th) password guess is correct times the value of an {\em additional} cracked password to the adversary e.g., the additional revenue of selling that password on the black market or the expected amount of additional money that could be extorted from this user. Given the average value $v$ of each cracked password for the adversary\footnote{More precisely, if there are $N$ users in the dataset and the total value of all $N$ cracked passwords is $V$ then $v= V/N$. When there are diminishing returns for additional cracked passwords the parameter $v$ may be significantly lower than the value of the first cracked password. }, the cost $k$ of computing the password hash function and the probability distribution $p_1 > p_2  > \ldots$ over user selected passwords, our model allows us to predict exactly how many passwords a rational adversary will crack. Unlike the model of Blocki and Datta~\cite{BlockiD16} we can use our framework to model a setting in which the attacker encounters diminishing returns as we would expect in most (black)markets i.e., the total value of $500$ million cracked passwords may be significantly less than $100$ times the total value of $5$ million passwords.  

Second, we present the strongest evidence to date that Zipf's law models the distribution of user selected passwords (with the possible exception of the tail of the distribution). These findings strongly support previous conclusions of Wang and Wang~\cite{WangW16}. In particular, we show that Zipf's law closely fits the Yahoo! password frequency corpus. This dataset was collected by Bonneau~\cite{bonneau2012yahoo} and later published by Blocki et al.~\cite{DPPLists}. In contrast to datasets from password breaches the Yahoo! dataset was collected by trusted parties, and is representative of active Yahoo! users (researchers have observed that hacked datasets contain many passwords that appear to be fake~\cite{yang2016comparing}). Our sample size, $70$ million users, is also more than twice as large as the datasets Wang and Wang\cite{WangW16} used to support their argument that Zipf's law closely models password datasets. 

Third, we show that there is a finite threshold $T(.)$ which characterizes the behavior of a rational value $v$-adversary whenever the distribution over passwords follows Zipf's law. In particular, if the first cracked password has value $v\geq T(.) \times k$ then the adversary's optimal strategy is always to continue guessing until he cracks the user's password. The threshold $T(y,r,a)$ is parameterized Zipf's law parameters $y$ and $r$ and a parameter $a$ representing the rate of password value decay. We remark that, even if Zipf's law fails to model the tail of the password distribution, the threshold $T(y,r,a)$ still provides a useful characterization of the attacker's behavior. In particular, if $(1-x)\%$ of passwords in a distribution follow Zip's law and the other $x\%$ follow some unknown (possibly uncrackable) distribution then our bounds imply that an attacker will compromise at least $(1-x)\%$ of user passwords whenever $v\geq T(y,r,a) \times k$.

Fourth, we also derive model independent upper and lower bounds on the fraction of passwords that a rational adversary would crack. While these bounds are slightly weaker than the bounds we can derive using Zipf's law these bounds do not require any modeling assumptions e.g., it is impossible to determine for sure whether or not Zipf's law fits the tail of the password distribution. Interestingly, the lower bounds we derive suggest that state of the art password crackers~\cite{melicher2016fast} could still be improved substantially.

Fifth, we apply our framework to analyze recent large scale password breaches including LastPass, AshleyMadison, Dropbox and Yahoo! Our analysis strongly challenges the claim that BCRYPT and PBKDF2-SHA256 provide adequate protection for user passwords. In fact, if the password distribution follows Zipf's law then our analysis indicates that a rational attacker will almost certainly crack 100\% of user passwords e.g.,  unless the value of Dropbox/LastPass/AshleyMadison/Yahoo! passwords is {\em significantly} less valuable than black market projections~\cite{passwordBlackMarket}. 

Finally, we derive {\em model independent} upper and lower bounds on the $\%$ of passwords cracked by a rational adversary. These bounds do not rely on the assumption that Zipf's law models the tail of the password distribution\footnote{Wang and Wang~\cite{WangW16} observed that the tails of empirical password datasets are not inconsistent with a Zipf's law distribution. However, we cannot be entirely confident that Zipf's law models the tail of the distribution since, by definition, we do not have many samples for passwords in the tail of the distribution.  }. Nevertheless, our predictions are still quite dire e.g., a rational adversary will crack $51\%$ of Yahoo! passwords {\em at minimum}. Our analysis indicates that, to achieve sufficient levels of protection with BCRYPT or PBKDF2, it would be {\em necessary} to run these algorithms for well over a second on modern CPU which would constitute an unacceptable authentication delay in many contexts \cite{millerDelay}. On a more positive note our analysis suggests that the use of more modern password hashing techniques like memory hard functions {\em can} provide strong protection against a rational password attacker {\em without} introducing inordinate delays for users during authentication. In particular, our analysis suggests that it could be possible to reduce the $\%$ of cracked passwords below $22.2\%$ without increasing authentication delays to a full second.  

\subsection{Discussion} 
In light of our analysis we contend that that there is a clear need to update standards for password storage to provide developers with clear guidance about the importance of using memory hard functions such as SCRYPT~\cite{Per09} or Argon2id~\cite{Argon2}. In a recent recent user study Naiakshina et al.~\cite{PasswordStorageUserStudy} asked developers to select a password hash function for a new social networking platform. None of the developers in this study selected a memory hard function\footnote{On a positive note the authors did find that priming developers about the importance password security resulted in the selection of stronger password hashing algorithms.} and the strongest password hashing algorithms selected were PBKDF2 with 20,000 hash iterations and BCRYPT with 1,024 iterations. The selection of PBKDF2 with 20,000 hash iterations would be deemed acceptable under 2017 NIST standards~\cite{NIST2017digital} --- PBKDF2 with at least $10,000$ iterations is presented an acceptable selection for password hashing\footnote{An upgrade from $1,000$ iterations as the minimal acceptable number of hash iterations for PBKDF2 in an older 2010 NIST standard~\cite{NISTold}.}. In this sense, LastPass' use of PBKDF2-SHA256 with $100,000$ iterations greatly exceeds current NIST standards.  Nevertheless, our analysis suggests that even PBKDF2-SHA256 with $100,000$ hash iterations is insufficient to protect a majority a user passwords while memory hard functions such as SCRYPT~\cite{Per09} or Argon2id~\cite{Argon2} would provide meaningful protection.  In addition to memory hard functions we also advocate for the use of secure distributed password hashing protocols~\cite{camenisch2012practical,everspaugh2015pythia,phoenix} whenever feasible so that an attacker cannot mount an offline attack without breaching {\em multiple} authentication servers. 

\ignore{

While all four organization performed key-stretching with BCRYPT our analysis indicates that in all four cases the level of key-stretching performed is inadequate. While we do not have concrete data on the value of a cracked Dropbox/LastPass/AshleyMadison/Yahoo! password will almost certainly\footnote{That is unless the value of Dropbox/LastPass/AshleyMadison/Yahoo! passwords is {\em significantly} less valuable than black market projections~\cite{passwordBlackMarket}. } have $v \geq T(.) \times k$ meaning that a rational adversary will not quit guessing user passwords until he successfully cracks all user passwords. Furthermore, our analysis further suggests the use of alternate password hashing techniques like memory hard functions are {\em necessary} to perform key-stretching without introducing inordinate delays for users during authentication. It is imperative for organizations to upgrade their password hashing algorithms.

In particular, if the ratio $v/k$ exceeds a certain threshold (and it almost certainly does) then

Recent evidence that the password distribution generally follows Zipf's law~\cite{WangW16,WangZipfLaw14,malone2012investigating} though some concerns about the ecological validity of these findings have been raised~\cite{yang2016comparing,bonneau2012yahoo}. We address several of these concerns about ecological validity by showing that Zipf's law fits the Yahoo! password frequency dataset. In particular, the Yahoo! dataset was collected~\cite{bonneau2012yahoo} and released~\cite{DPPLists} by trusted parties and the data represents active Yahoo! users. Furthermore, the Yahoo! dataset is more than twice as large as any of the datasets used in~\cite{WangW16,WangZipfLaw14}. 

While our Zipf fitting does include the tail of the Yahoo! distribution we do not have enough data to accept or reject the hypothesis that Zipf's law also describes the tail of the password distribution. The tail of the password distribution is notoriously challenging to model since we may not even observe many of the passwords in the tail --- even with a large sample size like $N=70$ million~\cite{DPPLists}. If one accepts the hypothesis that Zipf's law models the tail of the password distribution then our results imply that a rational adversary will crack $100\%$ of passwords (some may view this conclusion as a reason to reject the hypothesis that Zipf's law describes the entire tail of the password distribution). The second interpretation is that a rational adversary will, at minimum, continue cracking passwords until he reaches the tail of the distribution. Such an adversary will still crack most of the user passwords. In addition to our Zipf's law analysis we also show that it is possible to derive high confidence lower bounds on the fraction of passwords that a rational adversary will crack without making {\em any} assumptions about the distribution of user passwords. In particular, we can show that a rational adversary will crack $57\%$ of Yahoo! passwords {\em at minimum} without any modeling assumptions.

There are several key explanations underlying our pessimistic findings: First, the cost to evaluate a hash function like PBKDF2 can be dramatically reduced by implementing the function on customized hardware (e.g., FPGAs or ASICs)~\cite{bitcoinBook,BF15,BP15,Coi15,MR15}. The key-stretching parameters selected by LastPass, AshelyMadison and Dropbox are not high enough to compensate for this gain. Both factors combined reduce marginal guessing costs for an adversary. Finally, the distribution over user selected passwords is weak enough that marginal guessing rewards for the adversary are quite high. 
}

% !TEX root = sp-main.tex
\section{Economic Model} \label{sec:prelim}

\subsection{Preliminaries}
Given a dataset $D$ of $N$ user passwords we use $f_i$ to denote the frequency of the $i$'th most common password in the dataset and we use $pwd_i$ to denote the $i$'th most common password in the dataset. We use $p_1,p_2, \ldots$ to denote the actual distribution over passwords $pwd_1,pwd_2,\ldots$. That is $p_i$ is the probability that a random user selects password $pwd_i$. We use $\hat{p}_i = f_i/N$ to denote an empirical estimate of $p_i$ given a dataset $D$ which was sampled from the real password distribution. We also use $\lambda_i = \sum_{j=1}^i p_j$ to denote the cumulative probability of the $i$ most likely passwords. Equivalently, $\lambda_i$ denotes the probability that an adversary cracks the user's password within the first $i$ guesses. 

We say that the probability distribution $p_1 \geq p_2 \ldots$ follows Zipf's law if $p_i = \frac{z}{i^s}$ for some constants $s$ and $z$.  We say that a probability distribution follows a CDF-Zipf distribution if $\lambda_i = yi^{r}$ for some constants $r$ and $y$. 

\paragraph{Offline Attack} To authenticate users password authentication servers traditionally store salted password hashes. In more detail to authenticate user $u$ the authentication stores a record like the following: $\left(u,s_u,H\left(pwd_u|s_u\right)\right)$. Here, $u$ is the the username and $pwd_u$ is the user's password, $s_u$ is a random string called the salt value used to protect against rainbow table attacks and $H$ is a cryptographic hash function. An adversary who breaches the authentication server will be able to obtain the hash value along with the secret salt value. This adversary can now attempt as many guesses as he desires offline by computing the hashes of likely passwords guesses $H(g_1,s_u), H(g_2,s_u),\ldots$ and comparing these values with the stolen password hash. The attacker is only limited by the resources that he is willing to invest trying to crack the user's password. 

\paragraph{Key Questions and Parameters} We aim to address the following questions: How many guesses will our rational adversary attempt? What fraction of the user passwords will an adversary manage to break? The answer to these questions will depend on several factors. How valuable is a cracked password to the adversary? How much does it cost to compute $H$ each time we validate a new password guess? What does the distribution over user passwords look like?

We use $v$ to denote the value of a cracked password to the adversary measured in units of $C_H$, where $H$ is an underlyng cryptographic hash function like SHA256. We can estimate $v^\$$ by looking at black market prices for cracked passwords. For example, Fossi et al. \cite{passwordBlackMarket} found that the market price for hacked passwords tends to lie in the range $[\$4,\$30]$. A more recent analysis of Yahoo! passwords found that they sell for between $\$0.7$ and $\$1.2$~\cite{stockley_2016} --- the drop in price may be due to an increased supply of Yahoo! passwords. Herley and Florencio found that dishonest behavior can significantly inhibit trade on black markets~\cite{goldForSilver}. Thus, these prices may underestimate the true value of a cracked password. 

Password hash functions are often constructed from an underlying cryptographic hash function $H$. For example, PBKDF2-SHA256 simply iterates the SHA256 hash function multiple times. We use $k$ to denote the cost of a computing the final password hash function --- once again measured in units of $C_H$.  We use $v^\$ = v \times C_H$ (resp. $k^\$ = k \times C_H$) to denote the value (resp. cost) in USD given an estimate of $C_H$.

\subsection{Rational Adversary}
We model a rational adversary who has obtained the salted password hash of a user's password. Our model generalizes the stackelberg game-theoretic framework of Blocki and Datta~\cite{BlockiD16} by introducing a parameter $0 \leq a \leq 1$ which models diminishing returns. We assume that adversary knows the password distribution $p_1,p_2,\ldots$ as well as the corresponding passwords $pwd_1,pwd_2,\ldots$. However, the adversary does not know which password the user selected. 

\paragraph{Attacker Game}We model password cracking using a single-shot game. In the game we sample a random password $pwd$ from the password distribution $\Pr[pwd_i]=p_i$. The adversary picks a threshold $t \geq 0$. The threshold $t$ specifies an ordered list $L(t) = pwd_1,\ldots,pwd_t$ of the $t$ most likely passwords. If the real password is contained in the list of adversary guesses, $pwd \in L(t)$, then the adversary receives a payment of $v$ and we charge the adversary $j \cdot k$, where $j$ is the index of the correct password guess $pwd = pwd_j$. If the real password is not contained in the list $pwd_1,\ldots,pwd_t$ of adversary guesses then the adversary receives no payment ($v=0$) and the adversary is charged $t \cdot k$. Notice that $t=0$ corresponds to the strategy in which the adversary gives up without guessing, and $t=\infty$ corresponds to the strategy in which the adversary never quits. Observe that $\lambda_t = \sum_{j=1}^t p_j$  denotes the fraction of user passwords that are cracked by a threshold $t$ adversary.

\paragraph{About the Attacker} In our analysis we consider an attacker that is 
\begin{enumerate}
\item {\bf Informed:} The attacker knows the password distribution $p_1,p_2\ldots$ and the associated passwords $pwd_1,pwd_2\ldots$. However, the attacker does not know which password a particular user $u$ selected.
\item {\bf  Untargeted:} The attacker does not have personal knowledge about the user that can be exploited to improve the guessing attack. 
\item {\bf Rational:} The attacker is economically motivated, and will stop attacking the user once marginal guessing costs exceed the marginal guessing rewards.
\end{enumerate}

\paragraph{Discussion} Our attacker model captures the most common types of password attacks. It is generally reasonable to assume that the attacker knows the password distribution --- possibly excluding of the tail of the distribution. In particular, previous password breaches provide plenty of training data for the attacker and it is reasonable to assume that password cracking models will continue to improve as attackers obtain more and more training data from future password breaches.  We focus on an untargeted attacker in our analysis. However, we stress that our model may also be useful when considering a targeted attacker with background knowledge of the user (e.g., name, birthdate, hobbies etc...). In particular, let $p_i$ denote the probability that a targeted adversary's $i$'th guess is correct. Wang et al. observed that a targeted distribution over user passwords $p_1,p_2 \ldots$ still seems to follow Zipf's law~\cite{wang2016targeted}. 

\paragraph{Rational Attacker Behavior} If the adversary chooses a threshold $t$ then his expected guessing costs are 

\[C(t)= t\left(1- \sum_{j=1}^t p_j \right) k + k\sum_{j=1}^t j\cdot p_j  \ . \]
Similarly, his expected reward is 
\[ R(t) = v \left(\sum_{j=1}^t p_j\right)^a \ \]
where the parameter $0 \leq a \leq 1$ allows us to model diminishing returns for the attacker as he obtains additional cracked passwords.  For example, let $t_{1\%}$ (resp. $t_{2\%}$) be  given such that $p_1+\ldots+p_{t_{1\%}}=0.01$ ($p_1+\ldots+p_{t_{2\%}}=0.02$) then for $a<1$ we have $R(t_{2\%}) = 2^a R(t_{1\%}) < 2 \times R(t_{1\%})$ even though an adversary cracks twice as many passwords by increasing his threshold from $t_{1\%}$ to $t_{2\%}$.  

\noindent {\bf Diminishing Returns: } We note that the original model of Blocki and Datta~\cite{BlockiD16} is a special case of our model when $a=1$ (no diminishing returns). There are a number of reasons why an attacker may encounter diminishing returns $(a<1)$ for additional cracked passwords. First, if the attacker plans to sell the passwords on the black market then basic economics suggests that increasing the supply of cracked passwords is likely to drive down prices. In the case of a large breach like Yahoo! (500 million passwords) it is conceivable the number of available passwords on the black market might quickly increase by two orders of magnitude. Second, the more user accounts that are hacked/actively exploited the more likely it is that the original breach will be detected. If the breach is detected then an organization can ask (or require) users to change their passwords or require two-factor authentication, which will reduce the value of each cracked password\footnote{However, the cracked passwords arguably still have significant value after the breach is detected for two reasons. First, many users will not update their passwords unless they are required to do so. Second, many of the users that do update their passwords may do so in a predictable way~\cite{CCS:ZhaMonRei10}. Third, many users will have the same password for other accounts.}. 

\noindent {\bf Interpreting model parameter $v$: } We note that we have $v = R(\infty)$, where $R(\infty) \times N$ denotes the total value of a completely cracked password dataset of size $N$. Thus, the parameter $v$ denotes the average value of a cracked password given that all password have been cracked. We can estimate this parameter $v$ based on black market sales data. For example, suppose that we know that $R(t_{1\%}) = \$4 \times 1\%$ e.g., from equilibrium black market prices when only $1\%$ of cracked passwords are on the market. In this case we can extrapolate 
\begin{equation} v = R(\infty) = R(t_{100\%}) = 100^a R(t_{1\%}) = \frac{\$4}{100^{1-a}} \ . \label{eq:valueFormula} \end{equation}

\noindent{\bf Rational Attacker Behavior:} Formally, the rational adversary will select the threshold $t^*$ maximizing his overall utility \[t^* = \arg\max_{t} \left( R(t)-C(t) \right) \ . \] 

Intuitively, a rational adversary should stop guessing if the marginal cost of one more password guess exceeds the marginal benefit of that guess. Thus, we will have $MC(t^*)=C(t^*)-C(t^*-1) \approx MR(t^*) = R(t^*)-R(t^*-1)$. The marginal cost of increasing the threshold from $t-1$ to $t$ is  \begin{equation} MC(t) = C(t)-C(t-1) = k\left(1-\sum_{j=1}^{t-1} p_j\right) \ . \label{eq:mc} \end{equation}
Intuitively, the attacker pays an extra cost $k$ to hash $pwd_t$ if and only if the first $t-1$ guesses are incorrect. Similarly, the attacker's marginal revenue is $MR(t) = R(t)-R(t-1)$ when $a=1$ we have $MR(t)=v \times p_t$ otherwise \begin{equation}MR(t) = v\left( \left(\sum_{j=1}^t p_j\right)^a-\left(\sum_{j=1}^{t-1} p_j\right)^a \right)\times p_t \ . \label{eq:MR}\end{equation}
Note that $\lambda_{t^*}$ denotes the expected fraction of passwords compromised by an rational attacker.  Given a specific assumption about the password distribution (e.g., Zipf's law) we can derive bounds on $\lambda_{t^*}$. 
 
\paragraph{Competition} We do not attempt to directly model the behavior of an adversary who faces competition from other password crackers. Many breaches (e.g., Yahoo!, LinkedIn, Dropbox) remained undetected for several years. In these cases it may be reasonable to assume that the password cracker faced no competition. However, competition certainly could occur in the event that the breach is public (e.g., Ashley Madison). In an extremely competitive setting (e.g., password for a cryptocurrency wallet) only the first attacker to crack the password will be rewarded\footnote{However, we remark that in many instances attackers may unknowingly ``share'' the benefit of a cracked account. For example, an attacker who cracks a password may not actually change the password since such an action would alert the legitimate user of the breach.}. Such competition would decrease the expected reward for each cracked password and could potential reduce the total $\%$ of passwords cracked by {\em each individual} attacker. 

However, from the defender's point of view the goal is to minimize the $\%$ of passwords that are cracked by {\em any} attacker. Thus, we can argue that competition will have a minimal impact on the total $\%$ of cracked passwords. In particular, even in an extremely competitive setting where only the first attacker to find the password is rewarded we still have 
\[ \mathbf{CompCrack}(v,a) \geq \min_{0 \leq p \leq 1} \max \{ \mathbf{Cracked}(pv,a), 1-p\} \ . \]
Here $\mathbf{CompCrack}(v)$ (resp. $\mathbf{Cracked}(v)$) denotes the $\%$ of passwords that are cracked by some attacker when the value of a password is $v$ and attackers face competition (resp. do not face competition). This follows because the expected reward for attacker when faced with competition is at least $R_{comp}(t) \geq p_{first} \times R(t)$ where $p_{first}$ is the probability that no competing attacker managed to crack the password already. If $p_{first}$ is small then the marginal rewards will also be small so the attacker may quit earlier, but in this case it is likely  that another attacker has already compromised the account ($1-p_{first}$).  
 
 \paragraph{Defender Actions}
 The value $\lambda_{t^*}$ will depend on $k$, $v$ as well as the underlying password distribution $p_1 \geq p_2 \geq \ldots$. The goal of key-stretching is to increase $k$ so that we can reduce $\lambda_{t^*}$, the fraction of compromised accounts, in the event of an authentication server breach. However, the defender is constrained by server workload and by authentication times. In particular, the number of sequential hash iterations ($\tau$) is bounded by usability constraints as users may be unhappy if they need to wait a long time to authenticate e.g., it would at least a second to compute PBKDF2-SHA256 with $\tau=10^7$ hash iterations on a modern CPU~\cite{BonneauS14}. Similarly, the total workload $k$ is similarly bounded by workload constraints e.g., the authentication server must be able to handle all of the authentication requests even during traffic peaks. If the value $v$ is sufficiently large (in proportion to the cost $k$ of a password guess) then a rational attacker will crack every password $\lambda_{t^*} = 1$. In this case we say that all of the key-stretching effort was useless against a value $v$ rational adversary.
 
  Password hashing algorithms like BCRYPT, PBKDF2 and SCRYPT have parameters that control the running time (number of hash iterations) $\tau$ and total cost $k$ of computing the password hash function. Thus, the cost $k$ of computing PBKDF2 or BCRYPT is $k=\tau \times C_H$, where $C_H$ denotes the cost of computing the underlying hash function (e.g., SHA256 or Blowfish). We will treat $C_H$ as a unit of measurement when we report the cost $k$ and write $k=\tau$ for the BCRYPT and PBKDF2 functions. Given an estimate of $C_H$ in USD we will use $k^\$ = k \times C_H$ to denote the cost of computing the password hash function in USD. 
  
Intuitively, a memory hard function is a function whose computation requires large amounts of memory. One of the key advantages of a memory hard function is that cost $k$ potentially scales with $\tau^2$ instead of $\tau$ making it possible to increase costs without introducing intolerable authentication delays.  An ideal memory hard function runs in time $\tau$ and requires $\tau$ blocks of memory to compute. Thus, the Area x Time (AT) complexity of computing the Memory Hard Function scales with $\tau^2$ because the adversary must allocate $\tau$ blocks of memory for $\tau$ units of time. In particular, we use $k=\tau \times C_H+\tau^2 \times C_{mem}$ to model the approximate cost of computing a memory hard function which iteratively makes $\tau$ calls to the underlying hash function $H$ and requires $\tau$ blocks of memory. By contrast, the AT complexity of BCRYPT and PBKDF2 is just $k = \tau$ since these functions can be computed with a single block of memory. Here, $C_{mem}$ is a constant representing the core memory-area ratio. That is the area of one block of memory on chip divided by the the area of a core evaluating $H$ on chip. In this paper we use the estimate $C_{mem} \approx 1/3000$ as in ~\cite{BK15,AB16} though we stress that our analysis could be easily repeated with different parameter choices.

\paragraph{Model Limitations} To keep exposition simple we do not attempt to incorporate any model of equilibrium prices for cracked passwords on the black market and instead assume that the value of a cracked password $v^\$$ is static for all users. A targeted adversary may have higher valuations for specific user passwords e.g., celebrities, politicians. Similarly, an attacker who floods a black market with cracked passwords may drive equilibrium prices down. Our primary findings would not be altered in any significant way by including such a model unless equilibrium prices drop by $1$--$2$ orders of magnitude~\cite{passwordBlackMarket}. We also remark that our intention is to model an untargeted economically motivated attacker and not a nation state focused on cracking the passwords of a particular person of interest. However, it may still be reasonable to believe that a nation state attacker will be largely be constrained by economic considerations (e.g., expected value of additional intelligence gained by cracking the password versus expected cost to crack password).

%  !TEX root=sp-main.tex
\section{Yahoo! Passwords follow Zipf's Law} \label{sec:ZipfLawYahoo}

\graphicspath{ {images/} }
 Zipf's law states that the frequency of an element in a distribution is related to its rank in the distribution. There are two variants of Zipf's law for passwords: PDF-Zipf and CDF-Zipf. In the CDF-Zipf model we have $\lambda_t = \sum_{j=1}^t p_i = y \cdot t^r$, where the constants $y$ and $r$ are the CDF-Zipf parameters. In the PDF-Zipf model we have  $f_i = \frac{C}{i^s}$, where $s$ and $C$ are the PDF-Zipf parameters. Normalizing by $N$ the number of users we have $p_i = \frac{z}{i^s}$, where $z = \frac{C}{N}$.  

Wang et al.~\cite{WangZipfLaw14} previously found that password frequencies tend to follow PDF-Zipf's law if the tail of the password distribution (e.g., passwords with frequency $f_i < 5$) is dropped. Wang and Wang~\cite{WangW16} subsequently found that CDF-Zipf's model is superior in that the CDF-Zipf fits were more stable than PDF-Zipf fits and that the CDF-Zipf fit performed better under Kolmogorov-Smirnov (KS) tests.  Furthermore, the CDF-Zipf model can fit the entire password distribution (e.g., without excluding passwords with frequency $f_i < 5$). These claims were based on analysis of several smaller password datasets ($N \leq 32.6$ million users) which were released by hackers. 

In 2016 Yahoo! allowed the release of a differentially private list of password frequencies for users of their services~\cite{DPPLists}. We refer an interested reader to ~\cite{bonneau2012yahoo,DPPLists} for additional details about how the Yahoo! data was collected and how it was perturbed to preserve differential privacy. The Yahoo! dataset is superior to other datasets in that it offers the largest sample size $N=70$ million and the dataset was collected and released by trusted parties. We show that the Yahoo! dataset is also well modeled by CDF-Zipf's law. Our analysis comprises the strongest evidence to date of Wang and Wang's premise~\cite{WangW16} that password distributions follow CDF-Zipf's law due to the advantages of the Yahoo! dataset. We focus on the CDF-Zipf's law model in this section since it can fit the entire password distribution~\cite{WangW16}. We also verified that the Yahoo! dataset is also well modeled by PDF-Zipf's law if we drop passwords with frequency $f_i < 5$ like Wang et al.~\cite{WangZipfLaw14}, but we omit this analysis from the submission due to lack of space.

The rest of this section is structured as follows: First, in section \ref{subsec:EcologicalValidity} we discuss the advantages of using the Yahoo! dataset over leaked datasets like RockYou. In \ref{subsec:CDFZipfDP} we show that the noise that was added to preserve differential privacy will have a negligibly small impact on CDF-Zipf fittings. In section \ref{subsec:CDFZipfStability} we use subsampling to show that the CDF-Zipf fittings for Yahoo! converge to a stable solution. Finally, in section \ref{subsec:CDFFitYahoo} we present the CDF-Zipf fitting for the entire Yahoo! dataset.

%Wang et al.~\cite{WangZipfLaw14} and Wang and Wang~\cite{WangW16} previously found that password frequencies tend to follow Zipf's law, basing this claim on analysis of several smaller password datasets ($N \leq 32.6$ million users) released by hackers. 
\subsection{On Ecological Validity} \label{subsec:EcologicalValidity}
The Yahoo! frequency corpus offers many advantages over breached password datasets such as RockYou or Tianya.

\begin{itemize}
	\item The Yahoo! password frequency corpus is based on $70$ million Yahoo! passwords --- more than twice as large as any of the breached datasets analyzed by Wang and Wang~\cite{WangW16}.
	\item The records were collected in a trusted fashion. No infiltration, hacking, tricks, or general foul play was used to obtain any of this data. There was no ulterior motive behind collecting these passwords other than to provide valuable data in a way that can be used for scientific research. By contrast, it is possible that hackers strategically omit (or inject) password data before they release a breached dataset like RockYou or Tianya! Why should we trust rogue hackers to provide researchers with representative password data?
	\item Breached password datasets often contain many passwords/ accounts that look suspiciously fake. In 2016 Yang et al \cite{yang2016comparing} suggested that such passwords can be removed with DBSCAN \cite{ester1996density}. Cleansing operations ended up removing a reasonable portion of the dataset (e.g., 5 million passwords were removed from RockYou's data). With the Yahoo! data such cleansing is not needed, as it was collected in a manner that ensured collected passwords were in use. Previous work that has been done on Zipf distributions in breached password datasets~\cite{WangW16} did not perform any sort of sanitizing step on the data. It is unclear how such operations would affect the Zipf law fit. 
	\item The information is released in a responsible way that preserves users' privacy. The differential privacy mechanism means that even with the released data it is not possible to determine any new information about Yahoo's users that an adversary would not be able to obtain anyways.
	\item Data from the Yahoo! password frequency corpus ultimately is derived from the passwords of active Yahoo! users who were logging in during the course of the study as opposed to passwords from throwaway accounts that have been long forgotten. 

\end{itemize}

\subsection{On the Impact of Differential Privacy on CDF-Zipf Fits} \label{subsec:CDFZipfDP}

\begin{table}
\centering
\begin{tabular}{ | l | c c |}
	\hline
	List Version 			& $y$ 			&  $\sigma_y$		\\ \hline
	RockYou Standard 		& $0.0288$		& 					\\ \hline
	RockYou Diff. Private 	& $0.0302$		& $1.348 * 10^{-6}$	\\ \hline
							& $r$ 			& $\sigma_{r}$		 \\ \hline
	RockYou Standard 		& $0.2108$	 	& 					\\ \hline
	RockYou Diff. Private 	& $0.2077$		& $2.94 * 10^{-6}$ 	\\ \hline
							& $R^2$			& $\sigma_{R^2}$	\\ \hline
	RockYou Standard		& $0.9687$		&					\\ \hline
	RockYou Diff. Private	& $0.9681$		& $6.50 * 10^{-7}$	\\ \hline
% $0.99771$ & $8.28 * 10^{-7}$ &
	 
	 %$.0239$ & $2.74 * 10^{-6}$ \\
%\hline

\end{tabular}
\centering
\caption{Impact of Differential Privacy on CDF Fit}
\label{tab:DP_CDF}
\end{table}

The published Yahoo! password frequency lists were perturbed to ensure differential privacy. Before attempting to fit this dataset using Zipf's law we seek to answer the following question: Does this noise, however small, affect our CDF-Zipf fitting process in any significant way? We claim that the answer is no, and we offer strong empirical evidence in support of this claim. In particular, we took the RockYou dataset ($N\approx 32.6$ million users) and generated $30$ different perturbed versions of the frequency list by running the $\left( \epsilon, \delta \right)$-differentially private algorithm of Blocki et al.~\cite{DPPLists}. We set $\epsilon=0.25$, the same value that was used to collect the Yahoo! dataset that we analyze. For each of these perturbed frequency lists we compute a CDF-Zipf law fit using linear least squares regression. To apply Linear Least Squares regression we apply logarithms to the  CDF-Zipf equation  $\lambda_t = y \cdot t^r$  to obtain a linear equation $\log \lambda_t = \log y + r \log t$.

Our results, shown in Table \ref{tab:DP_CDF}, strongly suggest that the differential privacy mechanism does not impact the parameters $y$ and $r$ in a CDF-Zipf fitting in any significant way. In particular, the parameters $y$ and $r$ we obtain from fitting the original data with a CDF-Zipf model are virtually indistinguishable from the parameters we obtain by fitting on one of the perturbed datasets. Similarly, differential privacy does not affect the $R^2$ value of the CDF-Zipf fit. Here, $R^2$ measures how well the linear regression models the data ($R^2$ values closer to 1 indicate better fittings). Thus, one can compute CDF-Zipf's law parameters for the Yahoo! data collected by \cite{DPPLists} and \cite{bonneau2012yahoo} without worrying about the impact of the $\left( \epsilon, \delta \right)$-differentially private algorithm used to perturb this dataset. We also verified that the noise added to the Yahoo! dataset will also have a negligible affect on the  parameters $s$ and $z$ in a PDF-Zipf fitting. 

%	To investigate the impact of the noise added by differential privacy to the Yahoo! dataset we run a series of experiments on the well-known rockyou dataset. The rockyou data was run through the privacy mechanism $n=30$ times, converted into a CDF, and run through linear regression to obtain the Zipf parameters. The results of these experiments are shown in table \ref{tab:DP_CDF}. From these results we can see that while the privacy mechanism does affect the results, the parameters are still close to the true values of the original dataset.

\subsection{Testing Stability of CDF-Zipf Fit via Subsampling} \label{subsec:CDFZipfStability}
There are two primary ways to find a CDF-Zipf fit: Golden Section Search (GSS) and Linear Least Squares (LLS). Wang et al.~\cite{WangW16} previously found that CDF-Zipf fits stabilize more quickly with GSS than with LLS. This was particularly important because the largest dataset they tested had size $\approx 3\times 10^7$. In this section we test the stability of LLS by subsampling from the much larger Yahoo! dataset.  In particular, we subsample (without replacement) datasets of size $15$ million, $30$ million, $45$ million and $60$ million and use LLS to compute the CDF-Zipf parameters $y$ and $r$ for each subsampled dataset. Our results are shown in table \ref{tab:CDFStability} graphically in Figure \ref{fig:CDFsubsample}. While the CDF-Zipf fit returned by LLS does take longer to stabilize our results indicate that it does eventually stabilize at larger (sub)sample sizes (e.g., the Yahoo! dataset).  
\begin{figure}
\centering
\includegraphics[scale=0.40]{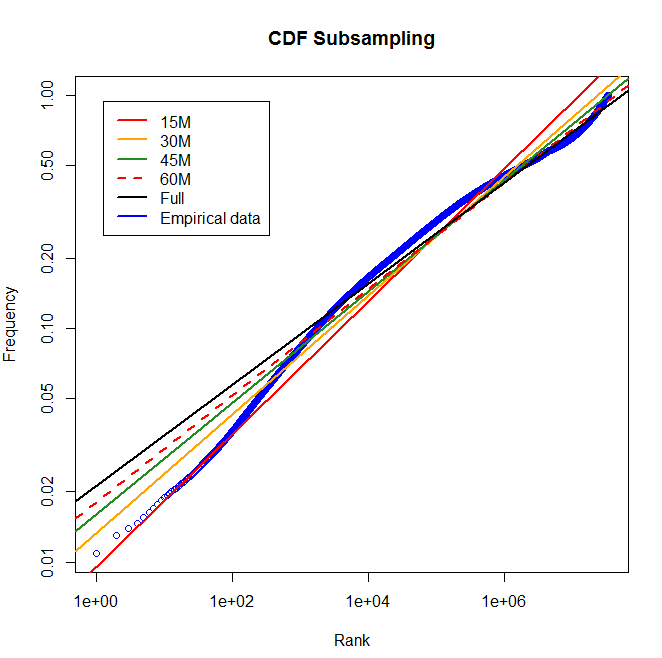}
\caption{Yahoo! CDF-Zipf Subsampling}
\label{fig:CDFsubsample}
\end{figure}

\begin{table}
\centering
\begin{tabular}{| p{0.7in} | c | c | c | }

\hline

Sample Size (Millions)  & $y$  		& $r$     	&  $R^2$   	\\ \hline
15         				& 0.00949  	& 0.2843  	& 0.9542   	\\ \hline
30        				& 0.01321  	& 0.2544 	& 0.9531   	\\ \hline
45        				& 0.01592  	& 0.2384 	& 0.9529 	\\ \hline
60        				& 0.01810  	& 0.2277  	& 0.9530   	\\ \hline
Full       				& 0.02112   & 0.2166  	& 0.9544 	\\ \hline

\end{tabular}
\caption{Yahoo! CDF-Zipf with Sub-sampling}
\label{tab:CDFStability}
\end{table}

\jnote{Mention that there are two ways to find a CDF-Zipf fit: Golden Section search and Linear Least Squares. Wang et al.~\cite{WangW16} previously found that CDF-Zipf fits stabilize quickly with GSS. We find that LLS does stabilize before $N=70$ million examples (albeit slightly more slowly). LLS also has several advantages of GSS in that it can (1) be computed faster, (2) minimizes KS distance, (3) minimizes $R^2$.  }

We also found that the PDF-Zipf parameters $s$ and $z$ stabilize before $N=7\times 10^7$ samples. 
\subsection{Fitting the Yahoo! data set with CDF-Zipf} \label{subsec:CDFFitYahoo}
We used both LLS regression and GSS to obtain separate CDF-Zipf fittings for the Yahoo! dataset. The results, shown in table \ref{tab:YahooCDF_KS} and graphically in Figure \ref{fig:BothFit} showed that both methods produce high quality fittings.  In addition to the parameters $y$ and $r$ we report $R^2$ values and Kolmogorov-Smirnov (KS) distance. The KS test can be thought of as the largest distance between the observed discrete distribution $F_n\left(x\right)$ and the proposed theoretical distribution $F\left(x\right)$. Formally,
$$
D_{KS} = sup\left|F_n(x) - F(x) \right|
$$
Intuitively, smaller $D_{KS}$ values (resp. larger $R^2$ values) indicates better fits. 

%Wang et al.~\cite{WangW16}  recently proposed an improved method of fitting password distributions with Zipf distributions. Rather than using the probability density function for a Zipf distribution they use the cumulative distribution function. In particular, in the CDF-Zipf model we have $\lambda_t = y\cdot i^r$. Wang et al.~\cite{WangW16} previously found that CDF-Zipf fits were more stable  than CDF-Zipf fits and that the fits performed better under Kolmogorov-Smirnov (KS) tests, which measures the maximum distance between fit CDF-Zipf distribution and the empirical distribution. Furthermore, it was not necessary to drop infrequent passwords (e.g., those observed $< 5$ times) to obtain the CDF-Zipf fit. 

%The parameters returned by the GSS, available in table \ref{tab:YahooCDF_KS}, were consistent with data from Wang and Wang's analysis of previous password frequency lists. In addition to running the Golden Section search method, we also ran the LLS regression method used in the PDF Zipf fitting. This was done by using the same procedure of taking the log of the CDF equation. $\lambda_t = y \cdot t^r$ becomes $\log \lambda_t = \log y + r \log t$. 
\ignore{
\begin{table}
\begin{tabular}{|l|c|c|}
\hline
Data set	&	$y$			& 	$r$ 		\\ \hline
Yahoo!		& 	$0.0211$ 	& 	$0.2166$ 	\\ \hline
\end{tabular}
\centering
\caption{Yahoo! CDF-Zipf Parameters}
\label{tab:YahooCDF}
\end{table}
}

\subsubsection{Discussion}
Both LLS and GSS produce high quality CDF-Zipf fittings (e.g., $R^2=0.9544$) for the Yahoo! dataset.  LLS regression outperforms the golden section search under both $R^2$ and Kolmogorov-Smirnov (KS) tests. Wang and Wang~\cite{WangW16} had previously adopted golden section search because the results stabilized quickly. While this was most likely the right choice for smaller password datasets like RockYou, our analysis in the previous section suggest that LLS eventually produces stable solutions when the sample size is large (e.g., $N \geq 60$ million samples) as it is in the Yahoo! dataset. Thus, in the remainder of the paper we use the CDF-Zipf  parameters $y=0.0211$ and $=0.2166$ from LLS regression. We stress that the decision to use the CDF-Zipf parameters from LLS instead of the parameters returned by GSS does not affect our findings in any significant way. 

We  remark that LLS is also more efficient computationally. While we were able to run GSS to find a CDF-Zipf fit for the Yahoo! dataset $(N\approx 7 \times10^7$), running GSS on a dataset of $N=1$ billion passwords (e.g., the size of the most recent Yahoo! breach~\cite{YahooBillionBreach}) would be difficult if not intractable. By contrast, LLS could still be used to find a CDF-Zipf fitting and our analysis suggests that the fit would be superior. 

%Wang and Wang showed experimentally that the results obtained from a Golden Section search are stable when subsampling the available distribution. Our finding was that the CDF fittings converge in the same manner as the PDF converges under subsampling. Taken as a whole, the fitting results suggest that, if possible, it is better to use LSS regression as opposed to golden section. However, with recent breaches such as Yahoo's breach of 1 billion user records  it become increasingly impractical to compute these Zipf parameters using LSS regression. In cases where the size of the dataset becomes large enough that memory becomes an issue, methods that work well with subsamples, such as GSS, are more desirable.

\begin{table}
\vspace{-0.3cm}
\begin{tabular}{|l|c|c|c|c|}
\hline
Method 			&	$y$		&	$r$		&		$R^2$			& 	KS 			\\ \hline
LLS				&	0.0211	&	0.2166	& 		$0.9544$ 		& 	$0.0094328$ 	\\ \hline
GSS	&	0.03315	&	0.1811	& 		$0.9498$ 		& 	$0.022282$ 	\\ \hline
\end{tabular}
\centering

\caption{Yahoo! CDF-Zipf Test Results}
\label{tab:YahooCDF_KS}
\vspace{-0.3cm}
\end{table}

% !TEX root=sp-main.tex
\section{Analysis of Rational Adversary Model for Zipf's Law} \label{sec:ZipfLawAnalysis}
In this section, we show that there is a finite threshold $T(y,r,a)$ which characterizes the behavior of a rational offline adversary when user passwords follow CDF-Zipf's law with parameters $y$ and $r$ i.e., $\lambda_i = yi^r$.  In particular, Theorem \ref{thm:threshold} gives a precise formula for computing this threshold $T(y,r,a)$\footnote{We remark that when $a=1$ it is possible to derive a closed form expressing for the threshold $T(y,r,a)$.}. If $v/k \ge T(y,r,a)$ then a rational value $v$ adversary will proceed to crack all user passwords as marginal guessing rewards will {\em always} exceed marginal guessing costs for a rational attacker. In Table \ref{tab:MinThreshold} we use this formula to explicitly compute $T(y,r,a)$ for the Yahoo! dataset as well as for nine other password datasets analyzed by Wang and Wang~\cite{WangW16}.

 We note that we choose to focus on CDF-Zipf's law in this section as it is believed to be better than PDF-Zipf models. However, we stress that similar bounds can be derived using PDF-Zipf's law though we omit these results from the submission for lack of space.  %and we refer an interested reader to appendix \ref{apdx:PDFZipfLawAnalysis}. 

%(1-y (((y/(1-r))^(-1/r))-1)^r)/(y r ((y/(1-r))^(-1/r))^(r-1)) at  r=0.187227 and y=0.037433

\begin{table}
\begin{tabular}{|c|cc|c|c|}\hline
Dataset & $y$ & $r$ & $T(y,r,1)$ & $T(y,r,0.8)$ \\\hline
RockYou & $0.0374$ & $0.1872$ & $1.70\times10^7$ & $2.04\times10^7$ \\\hline
000webhost & $0.0059$ & $0.2816$ & $3.67\times10^7$ & $4.27\times 10^7$  \\\hline
Battlefield & $0.0103$ & $0.2949$ & $2.37\times10^6$  & $2.77 \times 10^6$\\\hline
Tianya & $0.0622$ & $0.1555$ & $2.28\times10^7$ & $2.76\times10^7$  \\\hline
Dodonew & $0.0194$ & $0.2119$ & $4.92\times10^7$ & $5.87\times 10^7$  \\\hline
CSDN & $0.0588$ & $0.1486$ & $7.63\times10^7$ & $9.24\times 10^7$ \\\hline
Mail.ru & $0.0252$ & $0.2182$ & $8.75\times10^6$ &$1.04 \times 10^7$\\\hline
Gmail & $0.0210$ & $0.2257$ & $1.14\times10^7$ & $1.36\times 10^7$\\\hline
Flirtlife.de & $0.0346$ & $0.2916$ & $4.44\times10^4$ & $5.19\times 10^4$   \\\hline
\textcolor{blue}{Yahoo!} & \textcolor{blue}{0.0211} & \textcolor{blue}{0.2166} & \textcolor{blue}{$2.25\times 10^7$} & \textcolor{blue}{$2.69\times10^7$} \\\hline
\end{tabular}

\caption{CDF-Zipf threshold $T(y,r,a)<v/k$ at which adversary cracks $100\%$ of passwords for $a\in \{1,0.8\}$.}
\label{tab:MinThreshold}
\vspace{-0.2cm} 
%\caption{CDF-Zipf threshold $T(y,r)<v/k$ at which adversary cracks $100\%$ of passwords}
%{(1-y*(((1-r)/y)^(1/r)-1)^r)/(y*r*((1-r)/y)^(1-1/r))} for y=0.033150 and r=0.181059
\end{table}

% full precision results
\ignore{
\begin{table}
\begin{tabular}{|c|cc|c|c|}\hline
Dataset & $y$ & $r$ & $T(y,r)$ \\\hline
RockYou & $0.037433$ & $0.187227$ & $1.69657\times10^7$ & $2.03629\times10^7$ \\\hline
000webhost & $0.005858$ & $0.281557$ & $3.64583\times10^7$ & $4.27467\times 10^7$  \\\hline
Battlefield & $0.010298$ & $0.294932$ & $2.37227\times10^6$  & $2.77168 \times 10^6$\\\hline
Tianya & $0.062239$ & $0.155478$ & $2.27871\times10^7$ & $2.75527\times10^7$  \\\hline
Dodonew & $0.019429$ & $0.211921$ & $4.91730\times10^7$ & $5.86719\times 10^7$  \\\hline
CSDN & $0.058799$ & $0.148573$ & $7.63300\times10^7$ & $9.2439\times 10^7$ \\\hline
Mail.ru & $0.025211$ & $0.218212$ & $8.74988\times10^6$ &$1.04242 \times 10^7$\\\hline
Gmail & $0.020963$ & $0.225653$ & $1.14154\times10^7$ & $1.3575\times 10^7$\\\hline
Flirtlife.de & $0.034577$ & $0.291596$ & $4.4399\times10^4$ & $51919.5$   \\\hline
\textcolor{blue}{Yahoo!} & \textcolor{blue}{0.0211} & \textcolor{blue}{0.2166} & \textcolor{blue}{$2.25435\times 10^7$} &$2.68677\times10^7$ \\\hline
\end{tabular}

\caption{CDF-Zipf threshold $T(y,r,a)<v/k$ at which adversary cracks $100\%$ of passwords for $a=1$.}
\label{tab:MinThreshold}
\vspace{-0.2cm} 
%\caption{CDF-Zipf threshold $T(y,r)<v/k$ at which adversary cracks $100\%$ of passwords}
%{(1-y*(((1-r)/y)^(1/r)-1)^r)/(y*r*((1-r)/y)^(1-1/r))} for y=0.033150 and r=0.181059
\end{table}

}

%inaccurate?
\ignore{
\begin{table}
\begin{tabular}{|c|c|c|}\hline
Dataset & $a=1$ & $a=0.8$ \\\hline
RockYou & $1.69657\times10^7$ & $2.68149\times10^8$ \\\hline
000webhost & $3.64583\times10^7$ & $3.74136\times10^8$\\\hline
Battlefield & $2.37227\times10^6$ & $2.29444\times10^7$ \\\hline
Tianya & $2.27871\times10^7$ & $4.30185\times10^8$ \\\hline
Dodonew & $4.91730\times10^7$ & $6.89559\times10^8$ \\\hline
CSDN & $7.63300\times10^7$ & $1.51933\times10^9$ \\\hline
Mail.ru & $8.74988\times10^6$ & $1.18025\times10^8$ \\\hline
Gmail & $1.14154\times10^7$ & $1.48897\times10^8$ \\\hline
Flirtlife.de & $4.4399\times10^4$ & $4.24454\times10^5$\\\hline
\textcolor{blue}{Yahoo!} & \textcolor{blue}{$2.25435\times 10^7$} & \textcolor{blue}{$3.07930\times10^8$} \\\hline
\end{tabular}
\caption{Comparison of $T(y,r)$ for $a=1$ compared to upper bound of $T(y,r)$ for $a=0.8$. }
\label{tab:thresholds}
\vspace{-0.2cm} 
%\caption{CDF-Zipf threshold $T(y,r)<v/k$ at which adversary cracks $100\%$ of passwords}
%{(1-y*(((1-r)/y)^(1/r)-1)^r)/(y*r*((1-r)/y)^(1-1/r))} for y=0.033150 and r=0.181059
\end{table}
}

%This analysis allows us to compute $T(y,r)$ for 10 empirical datasets in Section~\ref{sec:apps}.

\begin{theorem} \label{thm:threshold}
Let $k$ denote the cost of attempting a password guess. If 
\[ \frac{v}{k}\ge T(y,r,a) = \max_{t \leq Z} \left( \frac{1-y(t-1)^r}{y^a (r a)t^{r a - 1}} \right) 
\]
where 
\[
Z = \left\lceil \left( \frac{1}{y} \right)^{1/r} \right\rceil + 1
\]
then a value $v$ rational attacker will crack $100\%$ of passwords chosen from a Zipf's law distribution with parameters $y$ and $s$.
\end{theorem}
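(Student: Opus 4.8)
The plan is to show that the hypothesis $v/k \ge T(y,r,a)$ forces the attacker's utility $U(t) = R(t) - C(t)$ to be monotonically non-decreasing in $t$ over the whole range in which passwords remain to be cracked. Once the utility never decreases, the optimal threshold $t^*$ is pushed all the way out to the index where $\lambda_{t^*} = 1$, i.e., the attacker cracks $100\%$ of the passwords. Concretely, $U$ is non-decreasing precisely when the marginal reward weakly dominates the marginal cost at every step, $MR(t) \ge MC(t)$ for all relevant $t$, so the entire proof reduces to verifying this one inequality step-by-step.

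Next I would make the marginal inequality explicit. Using $MC(t) = k\left(1 - \lambda_{t-1}\right)$ from \eqref{eq:mc} together with $R(t) = v\lambda_t^a$, so that $MR(t) = R(t) - R(t-1) = v\left(\lambda_t^a - \lambda_{t-1}^a\right)$, the condition $MR(t) \ge MC(t)$ is equivalent to
\[ \frac{v}{k} \ge \frac{1 - \lambda_{t-1}}{\lambda_t^a - \lambda_{t-1}^a} \ . \]
Substituting the CDF-Zipf law $\lambda_t = y t^r$ turns the right-hand side into the exact ratio $E(t) = \dfrac{1 - y(t-1)^r}{y^a\left(t^{ra} - (t-1)^{ra}\right)}$. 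It therefore suffices to establish $v/k \ge \max_t E(t)$, and the whole theorem collapses to the claim that the stated $T(y,r,a)$ is an upper bound on $\max_t E(t)$.

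The main technical step, and the part I expect to be the crux, is to replace the awkward denominator $t^{ra} - (t-1)^{ra}$ by the clean expression $ra\, t^{ra-1}$ appearing in $T(y,r,a)$, in the direction that is \emph{safe}. Since $r < 1$ and $0 \le a \le 1$ give $0 < ra < 1$, the map $g(x) = x^{ra}$ is concave and increasing, so its secant slope over $[t-1,t]$ is bounded below by the derivative at the right endpoint: $t^{ra} - (t-1)^{ra} = g(t) - g(t-1) \ge g'(t) = ra\, t^{ra-1}$. Enlarging the denominator in this way can only increase the ratio, so whenever the common numerator $1 - y(t-1)^r$ is non-negative we get $E(t) \le \dfrac{1 - y(t-1)^r}{y^a (ra) t^{ra-1}}$, hence $\max_t E(t) \le T(y,r,a)$ and the hypothesis $v/k \ge T(y,r,a)$ indeed implies $MR(t) \ge MC(t)$ at every step.

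Finally I would justify truncating the maximization to $t \le Z$ with $Z = \left\lceil (1/y)^{1/r} \right\rceil + 1$. The quantity $(1/y)^{1/r}$ is exactly the index at which the Zipf CDF reaches $1$; for every $t > Z$ we have $\lambda_{t-1} = y(t-1)^r \ge 1$, so the numerator $1 - \lambda_{t-1}$ is non-positive and the marginal condition holds trivially (there is effectively nothing left to crack past this point, so the model's valid range ends here). Thus the only binding steps are those with $t \le Z$, and the finite maximum $\max_{t \le Z}$ captures the worst case. Combined with the concavity bound above, this yields the stated formula for $T(y,r,a)$ and completes the argument that a value-$v$ attacker with $v/k \ge T(y,r,a)$ cracks every password drawn from the Zipf distribution.
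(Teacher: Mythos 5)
Your proof is correct, and its skeleton matches the paper's: reduce to the step-wise marginal condition $MR(t)\ge MC(t)$, rewrite it as $v/k \ge E(t)$ with $E(t)=\frac{1-y(t-1)^r}{y^a\left(t^{ra}-(t-1)^{ra}\right)}$, and replace the denominator by $y^a(ra)t^{ra-1}$ --- your concavity bound $g(t)-g(t-1)\ge g'(t)$ is exactly the paper's observation that $t^{ra}-(t-1)^{ra}=\int_{t-1}^{t}(ra)x^{ra-1}\,dx \ge (ra)t^{ra-1}$. Where you genuinely diverge is the justification for truncating the maximization at $Z$. The paper stays with the surrogate $f(t)=\frac{1-y(t-1)^r}{y^a(ra)t^{ra-1}}$ and proves, via an explicit derivative computation, that $f'(t)\le 0$ for all $t\ge Z$, so $\max_{t\le Z}f(t)$ is the global maximum of $f$. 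You instead observe that for $t>Z$ we have $y(t-1)^r>1$, so the numerator $1-\lambda_{t-1}$ is negative and $MR(t)\ge 0\ge MC(t)$ holds trivially --- no calculus at all. Your route is more elementary and semantically cleaner (past $Z$ the CDF has reached $1$, nothing remains to crack, and the marginal cost term is non-positive); it also quietly covers the corner case $t=Z$, where the numerator may already be negative so that the inequality $E(t)\le f(t)$ would flip direction --- there, too, $E(t)<0\le v/k$ holds trivially. What the paper's heavier computation buys is the slightly stronger structural fact that $f$ itself is decreasing on $[Z,\infty)$, i.e., $T(y,r,a)$ is the true global maximum of the surrogate over all $t$, not merely over $t\le Z$; but for the theorem as stated that extra fact is unnecessary, and your argument suffices.
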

\begin{proof}
Suppose a password frequency distribution follows Zipf's Law, for some parameters $0<r<1$ and $y$, so that $\lambda_n=yn^r$.
% and $p_n=\lambda_n^a-\lambda_{n-1}^a$. 
Since the marginal revenue is $MR(t)=v(\lambda_t^a-\lambda_{t-1}^a)$ and the marginal cost is $MC(t)=k\left(1-\sum_{n=1}^t p_n\right)$, a rational adversary can be assumed to continue attacking as long as $MR(t) \geq MC(t)$. 
Therefore, the attacker will not quit as long as
\begin{align*}
v(\lambda_t^a-\lambda_{t-1}^a) &\geq k\left(1-\sum_{n=1}^t p_n\right) \\
v(y^at^{ra}-y^a(t-1)^{ra}) &\geq k(1-y(t-1)^r)
\end{align*}
In particular, the attacker will not quit as long as
\begin{align*}
\frac{v}{k} &\geq \frac{1-y(t-1)^r}{y^at^{ra}-y^a(t-1)^{ra}} \ . 
\end{align*}
Notably, if $\frac{v}{k}\ge\max_t\left(\frac{1-y(t-1)^r}{y^at^{ra}-y^a(t-1)^{ra}}\right)$ for all $t$, then a rational adversary will eventually crack \emph{all} passwords. 
For $g(t):=y^a (ra) t^{ra-1}$, we have $y^at^{ra}-y^a(t-1)^{ra}=\int_{t-1}^t g(x) \, \mathrm{d}x$. 
Since $ra\le 1$, then $g(t)\le g(x)\le g(t-1)$ for all $x\in[t-1,t]$.  
Thus we have $y^a (ra)t^{ra-1} \leq y^at^{ra}-y^a(t-1)^{ra} \leq y^a (ra)(t-1)^{ra-1}$ and
%Since $y^at^{ra}-y^a(t-1)^{ra}= \int_{t-1}^t y^a (ra) x^{ra-1} \, \mathrm{d}x$, then for $ra\le 1$, we have $y^a (ra)t^{ra-1} \leq y^at^{ra}-y^a(t-1)^{ra} \leq y^a (ra)(t-1)^{ra-1}$ and 
\[\max_t\left(\frac{1-y(t-1)^r}{y^at^{ra}-y^a(t-1)^{ra}}\right) \le \max_t\left(\frac{1-y(t-1)^r}{y^a (ra) t^{ra-1}}\right) \ . \] 
Thus, it suffices to prove that $\frac{v}{k} \geq \max_t f(t)$ where $f(t) := \left(\frac{1-y(t-1)^r}{y^a (ra) t^{ra-1}}\right)$. 
From the theorem statement we have $\frac{v}{k} \geq  f(t)$ holds for any $t \leq Z$; it remains to argue that the same is true when $t > Z$. 
Since we already know that $f(Z) \leq v/k$, it suffices to show that the function $f(\cdot)$ is decreasing over $[Z,\infty)$  i.e.,  $ f'(t) \leq 0$  for all $t \geq Z$.
%Since $y^at^{ra}-y^a(t-1)^{ra}= \int_{t-1}^t y^a (ra) x^{ar-1} \, \mathrm{d}x$ we have $y^a (ra) t^{ra-1}\le y^at^{ra}-y^a(t-1)^{ra}\le y^a (ra)(t-1)^{ra-1}$ and 
%\[\max_t\left(\frac{1-y(t-1)^r}{y^at^{ra}-y^a(t-1)^{ra}}\right) \geq \max_t\left(\frac{1-y(t-1)^r}{y^a (ra) (t-1)^{ra-1}}\right) \ . \] 
%Let $f(t) = \left(\frac{1-y(t-1)^r}{y^a (ra) (t-1)^{ra-1}}\right)$. 

We calculate the derivative $ f'(t)$ as follows
\begin{align*}
f'(t)&=-\frac{(t-1)^{r-1}t^{1-ra}y^{1-a}}{a}\\
&+\frac{(1-ra)t^{-ra}y^{-a}(1-(t-1)^{r}y)}{ra},
\end{align*}
so that $f'(t)\le 0$ if and only
\begin{align*}
\frac{(1-ra)y^{-a}(1-(t-1)^{r}y)}{rat^{ra}}\le\frac{y^{1-a}t(t-1)^{r-1}}{at^{ra}}\\
(1-ra)(1-(t-1)^ry)\le yt(t-1)^{r-1}r\\
(1-ra)\le y(t-1)^{r-1}((t-1)(1-ra)+tr).
\end{align*}
%\begin{align*}
%f'(t)&=-\frac{(t-1)^{r-1}(t-1)^{1-ra}y^{1-a}}{a}\\
%&+\frac{(1-ar)(t-1)^{-ra}y^{-a}(1-(t-1)^{r}y)}{ar},
%\end{align*}
%then $f'(t)\le 0$ if and only
%\begin{align*}
%\frac{(1-ar)y^{-a}(1-(t-1)^{r}y)}{ar(t-1)^{ra}}\le\frac{y^{1-a}(t-1)^r}{a(t-1)^{ra}}\\
%(1-ar)(1-(t-1)^ry)\le y(t-1)^{r}r\\
%(1-ar)\le y(t-1)^{r}(1-ar+r)
%\end{align*}
Since $(t-1)(1-ra)\le(t-1)(1-ra)+tr$, then the last expression certainly holds true if $(1-ra)\le y(t-1)^{r-1}(t-1)(1-ra)$ or equivalently, $\frac{1}{y}\le(t-1)^r$. 
%The last expression certainly holds true if $\frac{1}{y}\le(t-1)^r$. 
Since $Z:=\left\lceil 1+\left(\frac{1}{y}\right)^{1/r}\right\rceil$, it follows that $f'(t)\le 0$ for all $t \geq Z$.
\end{proof}

%\subsubsection{Discussion} There is strong evidence that the password distribution actually follows CDF-Zipf's law --- see section \ref{sec:ZipfLawYahoo} and~\cite{WangW16}. It is possible that the tail of the password distribution does not fit Zipf's law as it is not always possible to confidently estimate the true entropy of passwords that were observed with low frequency in a dataset. However, Zipf's law is consistent with empirical observations~\cite{WangW16}. We stress that even if CDF-Zipf's law does not fit the tail of the password distribution that $T(y,r)$ still characterizes adversary for as long as the distribution follows CDF-Zipf's law. In this case, whenever $v/k \geq T(y,r)$ we can claim that the adversary will crack {\em almost all} of the user's passwords perhaps excluding a few of the strongest passwords.

% !TEX root = sp-main.tex
\section{Analysis of Previous Password Breaches}
\label{sec:apps}
In this section, we apply our economic model to analyze the consequences of recent password breaches and the impact of defenses that could have been adopted. 
\subsection{Breaches}
We focus on the following breaches in our analysis:
%\vspace{-0.3cm}

\subsubsection{Yahoo!}Attackers stole password hashes for $500$ million Yahoo! users in 2014, though the breach was unknown to the general public until 2016~\cite{YahooBreach}. While Yahoo! used BCRYPT to hash passwords~\footnote{An earlier 2013 Yahoo! breach affected approximately $1$ billion Yahoo! users~\cite{YahooBillionBreach}. We focus on the 2014 breach because the breach occurred after Yahoo! upgraded their password hashing algorithm from MD5 to BCRYPT. We note that any negative findings about the 2014 breach will certainly extend to the earlier breach since a weaker hashing algorithm was involved. }, they have not publicly specified the number of hash iterations $\tau$ that they used. However, we do have empirical password frequency data from 70 million Yahoo! users which allowed us to derived CDF-Zipf parameters $y=0.0211$ and $r=0.2166$ for Yahoo! passwords. Thus, we can predict the $\%$ of cracked passwords for different values of $\tau$ that Yahoo! might have chosen.   

%\vspace{-0.2cm} 
\subsubsection{ Dropbox}  Attackers stole password hashes for $\approx 68.7$ million Dropbox users though the breach was unknown to the general public until 2016~\cite{DropboxBreach}. Dropbox used BCRYPT at level $8$ (i.e., $\tau = 2^8 = 256$ hash iterations) to hash passwords. We don't have empirical password data from Dropbox users from which we can derive Zipf's law parameters $y$ and $r$. However, we have Zipf's law parameters for many other datasets such as RockYou, Tianya, CSDN and Yahoo! allowing us to predict how many passwords a value $v$ adversary would crack if, say, Dropbox passwords and RockYou passwords have similar strength.  Arguably, Dropbox passwords could be quite valuable as they are often used to protect sensitive data.

\subsubsection{AshleyMadison}  Attackers stole nearly $40$ million AshleyMadison password hashes~\cite{AMWeak} in 2015 and released the stolen data publicly a month later. AshleyMadison primarily used  BCRYPT at level $12$ $(\tau=2^{12}=4,096$ hash iterations) to hash passwords~\cite{AMStrong}. However, CynoSure Prime noticed that some passwords were effectively protected with MD5 instead of BCRYPT due to an implementation error. CynoSure Prime managed to crack approximately $11$ million of these MD5 hashes in just $10$ days~\cite{AMWeak}, though it has been claimed that most of the passwords protected by BCRYPT are uncrackable~\cite{AMStrong}. Similar to Dropbox, we do not have Zipf's law parameters for AshleyMadison users. However, it is plausible to believe that these parameters are comparable to the parameters derived from other datasets such as Yahoo! or RockYou!

\subsubsection{LastPass}   LastPass was using PBKDF2-SHA256 with $\tau= 10^5$ rounds of iteration when they were breached in 2015. Similar to AshleyMadison and Dropbox breaches we don't currently have Zipf's law parameters for LastPass passwords though we can still predict how many passwords would be breached under the assumption that these passwords have similar strength to passwords in other datasets like RockYou or Yahoo! Arguably master passwords will be more valuable to an attacker than regular passwords as a master password will unlock multiple user accounts. On the other hand previous research~\cite{bonneau2012yahoo} has not found a clear correlation between password strength and account value. 

%While this might suggest that users would select stronger passwords for LastPass than in other datasets like RockYou or Yahoo! However, Bonneau~\cite{bonneau2012yahoo} found that password strength does not seem to be correlated with the importance of an account. 

\paragraph{Estimating $v$} As described in Section \ref{sec:prelim} the value $v$ represents the value per password when all passwords are released on the market. Thus, although the actual black market prices may vary with supply, the parameter $v$ is fixed. Our estimate of this value parameter will depend on the current black market price, and model parameter $a$ (diminishing returns). In Table \ref{tab:v_conversion} we show various estimates of $v$ obtained from multiple estimates of black market password prices. These estimates include measurements from Fossi \cite{passwordBlackMarket} and more recent estimates from \cite{stockley_2016}, which finds that Yahoo! passwords go for 0.70-1.20 USD on the black market. To obtain the estimates in Table \ref{tab:v_conversion}, we assume that the black market prices were observed when just 1\% of the passwords were on the market. This allows us to esimate the value $v$ if all passwords were to be released using equation \ref{eq:valueFormula}. We remark that the difference between the two estimates \cite{stockley_2016} and \cite{passwordBlackMarket} may be explained due to additional black market supply. We view $a=0.8$ as substantial diminishing returns e.g., the marginal revenue decreases by a factor of $1/3$ when the attacker compromises all accounts. An interesting direction for future work may be to estimate the parameter $a$ from a longitudinal study of black markets.
%It is worth noting that the 4 USD estimate from Fossi is larger than the estimates. It may be the case that since the large Yahoo! breach prices have fallen due to market forces, and we note that the $a=0.8$ value for Fossi's 4 USD estimate roughly matches our calculations.

\paragraph{Translating between $v$ and $v^{\$}$} Bonneau and Schechter~\cite{BonneauS14} observed that in 2013, Bitcoin miners were able to perform approximately $2^{75}$ SHA-256 hashes in exchange for bitcoin rewards worth about $\$257M$. Correspondingly, one can estimate the cost of evaluating a SHA-256 hash to be approximately $C_H=\$7\times10^{-15}$. Alternatively, the cost can be viewed as the economic opportunity cost of evaluating each hash function (for instance, renting a botnet or computing on a cloud platform.) Because Bitcoin mining is almost exclusively performed on application specific integrated circuits (ASICs) the above cost analysis implicitly assumes that the attacker is willing to fabricate an an ASIC to evaluate PBKDF2-SHA256 or BCYRPT.  We contend that this is a plausible scenario for a rational attacker, since fabrication costs would amortize over the number of user accounts being attacked (e.g., $500+$ million). Furthermore, we note that an attacker who is not willing to pay to fabricate an ASIC could obtain similar performance gains using a field programmable gate array (FPGA).

\subsection{Results}

In section~\ref{sec:ZipfLawAnalysis} we showed that, if passwords follow CDF-Zipf's law with parameters $y$ and $r$, and $v/k \geq T(y,r,a)$ then a rational adversary will crack $100\%$ of user passwords. Figure~\ref{fig:cdf2} plots $v = k\times T(y,r,0.8)$ for various thresholds from Table \ref{tab:MinThreshold} including Yahoo! and RockYou. Thus, for a point $(v,\tau)$ lying on the blue line, a value $v$ rational adversary will crack $100\%$ of Yahoo! passwords when he can compute the hash function at cost $k=\tau$. Note that $\tau=k$ for hash functions like BCRYPT and PBKDF2  --- the ones used by Yahoo!, Dropbox, AshleyMadison and LastPass. For reference, Figure~\ref{fig:cdf2} includes the actual values of $\tau$ selected by AshleyMadison, Dropbox and LastPass as well as the value $\tau=10^7$. Bonneau and Schechter estimated that SHA256 can be evaluated $10^7$ times in $1$ second on a modern CPU~\cite{BS14}. Thus, $10^7$ upper bounds the value of $\tau$ that one could select without delaying authentication for more than $1$ second when using PBKDF2-SHA256.

\begin{center}
\begin{table}
\begin{tabular}{|c|c|c|c|c|}
\hline
$R(t_{1\%})$ (USD) & a = 0.8 & a = 0.9 & a = 1.0 \\ \hline
0.70	& 0.28	&	0.44	&	0.70	\\ \hline
1.20	& 0.48	&	0.76	&	1.20	\\ \hline
4.00	& 1.59	& 	2.52	&	4.00	\\ \hline
30.00	& 11.94	& 	18.93	&	30.00	\\ \hline

\end{tabular}
\caption{$v$ conversion chart}
\label{tab:v_conversion}
\end{table}
\end{center}
The plots predict that, unless we set $\tau \gg 10^7$, the adversary will crack $100\%$ of passwords in almost every instance. In particular, the levels of key-stretching performed by Dropbox, AshleyMadison and even Lastpass are all well below the thresholds necessary to protect Yahoo!, RockYou or CSDN passwords.

Figure~\ref{fig:cdf3} is similar to Figure~\ref{fig:cdf2} except that we rescale to $y$ axis to show $v^{\$}$, given monetary estimations of computation cost and password values, so that we can focus on the number of hash iterations necessary to simply avoid all passwords being cracked.

\begin{figure*}[!h]
\centering
\subfigure[$v/k = T(y,r,0.8)$ for RockYou, CSDN and Yahoo!]{
\centering
\begin{tikzpicture}[scale=0.63] 
\begin{axis}[title style={align=center},
   xlabel={$\log_2(\tau)$},
   ylabel={$v(\times10^{16})$},
	 xmin={7},
	 xmax={24},
	 ymin={0},
	 ymax={0.05},
   ylabel shift = -3pt,
   grid=major,
	 cycle list = {{red, mark=circle},  {orange, mark=none}, {purple, mark=none}},
   legend style = {font=\small, at={(.05,.95)}, anchor=north west},
	 legend entries = {RockYou, CSDN, Yahoo!, $v^\$=\$0.28$ (estimate),$v^\$=\$0.48$ (estimate), Dropbox $\tau$, AshleyMadison $\tau$, NIST $\tau$ (min), LastPass $\tau$, $\tau=10^7$ (1sec)}
  ]

\addplot[color=black, mark=r, domain=7:24]{2.03629*10^7*(2^x)/10^16};
%CSDN
\addplot[color=red, mark=-, domain=7:24]{9.2439*10^7*(2^x)/10^16};
%enter yahoo T(y,r) value here:
\addplot[color=blue, mark=o, domain=7:24]{2.68677*10^7*(2^x)/10^16};
\addplot[style=dotted, domain=7:24]{0.057143*0.28/4};
\addplot[style=dashed, domain=7:24]{0.057143*0.48/4.0};
\addplot[mark=x] coordinates{(8, 0)  (8,0.006) (8,0.1/4) (8,0.15/4) (8, 0.2/4)};
\addplot[mark=triangle] coordinates{(12, 0)(12,0.006)  (12,0.1/4) (12,0.15/4) (12, 0.2/4)};
\addplot[color=red,mark=star] coordinates {(13.2877,0) (13.2877,0.006) (13.2877,0.1/4) (13.2877,0.15/4) (13.2877,0.2/4)};
\addplot[mark=square] coordinates{(16.66667, 0) (16.66667,0.05/4) (16.66667,0.006) (16.66667,0.15/4) (16.66667, 0.2/4)};
\addplot[color=red,mark=diamond] coordinates {(23.2635,0) (23.2635,0.006) (23.2635,0.1/4) (23.2635,0.15/4) (23.2635,0.2/4)};
\end{axis}
\end{tikzpicture}
%\caption{$v/k = T(y,r)$ for RockYou and Yahoo! plus $\tau=k$ for Dropbox, AshleyMadison and LastPass.}
\label{fig:cdf2}}
\subfigure[$v^{\$}$ vs. $\tau$ for $v = k \times T(y,r,0.8)$. ]{

\begin{tikzpicture}[scale=0.63] 
\begin{axis}[title style={align=center},
  xlabel={$\log_2(\tau)$},
  ylabel={$v^\$$},
	xmin={7},
	xmax={24},
	ymin={0},
	ymax={0.5},
  ylabel shift = -3pt,
  grid=major,
  legend style = {font=\small, at={(.05,.95)}, anchor=north west},
	legend entries = {RockYou, CSDN, Yahoo!, $v^{\$}=\$0.28$ (estimate), $v^{\$}=\$0.48$ (estimate),$\tau=10^7$ (1sec), Dropbox $\tau$, AshleyMadison $\tau$, NIST $\tau$ (min), LastPass $\tau$}
  ] 
\addplot[color=black, domain=7:24]{(2.03629*10^7)*(7*10^-15)*(2^x)};
%CSDN
\addplot[color=red, mark=-,domain=7:24]{(9.2439*10^7)*(7*10^-15)*(2^x)};
%Enter yahoo T(y,r) value here:
\addplot[color=blue, mark=o, domain=7:24]{(2.68677*10^7)*(7*10^-15)*(2^x)};
\addplot[color=black, style=dotted, domain=7:24]{0.28};
\addplot[color=black, style=dashed, domain=7:24]{0.48};
\addplot[color=red,mark=diamond] coordinates {(23.2635,0) (23.2635,0.05) (23.2635,0.5/2) (23.2635,0.5)};
\addplot[color=red,mark=star] coordinates {(13.2877,0) (13.2877,0.05) (13.2877,0.5/2) (13.2877,0.5)};
\addplot[mark=x] coordinates{ (8, 0)  (8,0.05) (8,0.5/2) (8, 0.5)};
\addplot[mark=triangle] coordinates{ (12, 0)(12,0.05)  (12,0.5/2) (12, 0.5)};
\addplot[mark=star] coordinates {(13.2877,0) (13.2877,0.05) (13.2877,0.5/2) (13.2877,0.5)};
\addplot[mark=square] coordinates{ (16.66667, 0) (16.66667,0.05) (16.66667,0.5/2)  (16.66667, 0.5)};
%rockyou coordinates
%\addplot[mark=*] coordinates {(25.00545,4)};
%\addplot[mark=*] coordinates {(27.91234,30)};
%tianya coordinates
%\addplot[mark=*] coordinates {(24.57985,4)};
%\addplot[mark=*] coordinates {(27.48674,30)};
%yahoo coordinates
%solve((2.25435*10^7)*(7*10^-15)*(2^x)=30)
%\addplot[mark=*] coordinates {(24.5954,4)};
%\addplot[mark=*] coordinates {(27.5022,30)};
\end{axis} 
\end{tikzpicture}
%\caption{$v^{\$}$ vs. $\tau$ for $v = k \times T(y,r)$. }
\label{fig:cdf3}}
\subfigure[$v^{\$}$ versus $T(y,r,0.8)$ when $v = k\times T(y,r,0.8)$, at fixed values of $k$]{
\begin{tikzpicture}[scale=0.63] 
\begin{axis}[title style={align=center},
   xlabel={$\log_{10}($T(y,r,0.8)$)$},
   ylabel={$v^{\$}$},
	 xmin={5},
	 xmax={12},
	 ymin={0},
	 ymax={1},
	 %extra x ticks = {7.04034, 6.741211, 5.728109, 7.35503},
	 %extra x tick labels = {},
	 %extra x tick style = {major grid style=red, tick align=outside, tick style=red},
   ylabel shift = -3pt,
   grid=major,
	 cycle list = {{red, mark=none},  {orange, mark=none}, {yellow, mark=none}, {purple, mark=none}},
	 legend style = {font=\small, at={(.05,.95)}, anchor=north west},
	 legend entries = {AshleyMadison ($k=2^{12}$), Dropbox ($k=2^8$), LastPass $(k=10^5)$, NIST MIN $(k=10^4)$, $v^{\$}=\$0.28$, $v^{\$}=\$0.48$, RockYou, CSDN, Yahoo!}
  ] 
\addplot[color=green,mark=o, domain=5:12]{(10^x)*(7*10^-15)*(2^12)};
\addplot[color=blue, mark=diamond, domain=5:12]{(10^x)*(7*10^-15)*(2^8)};
\addplot[color=black, domain=5:11]{(10^x)*(7*10^-15)*(5000+10^5)};
\addplot[color=red, mark=star, domain=5:12]{(10^x)*(7*10^-15)*(10^4)};
\addplot[color=black, style=dotted, domain=5:12]{0.28};
\addplot[color=black, style=dashed, domain=5:12]{0.48};

%rockyou
\addplot[mark=x] coordinates{ (7.308840, 0) (7.308840, 0.1) (7.308840, 1)};
%CSDN
\addplot[mark=triangle, color=red] coordinates{ (7.965855, 0) (7.965855, 0.1)  (7.965855, 1)};
%insert log(T(y,r)) base 10 for Yahoo here
\addplot[mark=square, color=blue] coordinates{ (7.42923, 0) (7.42923, 0.1) (7.42923, 1)};

\end{axis} 
\end{tikzpicture}
%\caption{$v^{\$}$ versus $T(y,r)$ when $v = k\times T(y,r)$, at fixed values of $k$ selected by AshleyMadison, Dropbox and LastPass. Vertical lines show thresholds $T(y,r)$ for  RockYou and Yahoo! }
\vspace{-0.2cm} 
\label{fig:cdf4}}
\caption{($a=0.8$)}
\label{mainFigure}
\end{figure*}

While we do not have CDF-Zipf parameters for other breaches such as AshleyMadison, Dropbox, or LastPass, we do have the value $\tau=k$ for each of these breaches. Figure~\ref{fig:cdf4} plots $v = k \times T(y,r,0.8)$ only this time we hold $k$ constant and allow $T(y,r,0.8)$ to vary. For example, in the black line we fix $k=\tau=10^5$ since LastPass used PBKDF2-SHA256 with $\tau=10^5$ hash iterations and allow $T(y,r,0.8)$ to vary. The vertical lines represent the thresholds $T(y,r,0.8)$ we derive from CDF-Zipf's law fits for RockYou, Tianya and Yahoo!    Table \ref{tab:MinThreshold} shows the value of $T(y,r,0.8)$ obtained from $10$ different password datasets. Observe that in all of cases we had $T(y,r,0.8) \leq 7.64 \times 10^7$. As in Figure \ref{fig:cdf3} the $y$-axis in Figure~\ref{fig:cdf4}  is scaled to show the value $v^\$$ in USD (estimated). Thus, if Dropbox (resp. AshleyMadison/LastPass) passwords have comparable strength to Yahoo! passwords (resp. Tianya, RockYou) then a rational adversary would crack $100\%$ of these passwords. Indeed, Figure~\ref{fig:cdf4} shows that unless the thresholds $T(y,r,a)$ for Dropbox/LastPass/AshleyMadison are significantly larger than the previously observed thresholds, a rational adversary would be compelled to crack all passwords, given the range of password values. For example, even if the threshold $T(y,r,a)$ for Dropbox exceeds the threshold for Yahoo! by four orders of magnitude then the adversary will still crack $100\%$ of these passwords.

\subsection{Discussion} \label{subsec:DiscussionTail}
Figures \ref{fig:cdf2}, \ref{fig:cdf3} and \ref{fig:cdf4} paint a grim picture. PBKDF2 and BCRYPT most likely provide dramatically insufficient protection for most AshleyMadison, Dropbox, Yahoo! and LastPass users --- even if we used the lowest estimation of the value parameter $v$ from Table \ref{tab:v_conversion} ($v^{\$} = 0.28$ USD) and we assume that the attacker faces substantial diminishing returns $(a=0.8)$ for additional cracked passwords. Furthermore, it would not have been possible to provide sufficient protection for users using PBKDF2 or BCRYPT without introducing intolerable authentication delays ($\geq1$ second).

Our analysis assumes that the password distribution truly follows CDF-Zipf's law. While previous research~(e.g., \cite{WangZipfLaw14,WangW16} and our own results in Section \ref{sec:ZipfLawYahoo}) strongly supports the hypothesis that {\em most} of the password distribution follows Zipf's law, it is not possible to definitively state that the tail of the password distribution does not follow Zipf's law since each of the passwords in the tail were (by definition) observed with low frequency. We stress that even if CDF-Zipf's law does not fit the tail of the password distribution that $T(y,r,a)$ still characterizes adversary behavior. For example, suppose that the $(100-x)\%$ of passwords follow a  Zipf's law distribution with parameters $y,r$ while $x\%$ of passwords in the tail of the password distribution do not.  In this case, whenever $v/k \geq T(y,r,a)$ we a rational adversary will crack {\em at least} $(100-x)\%$ of the user's passwords which follow Zipf's Law. 

 %However, there is strong evidence that the password distribution actually follows CDF-Zipf's law with the possible exception of the tail --- see section \ref{sec:ZipfLawYahoo}.  it is not possible to definitively state that the tail of the password distribution does not follow Zipf's law since each of the passwords in the tail were (by definition) observed with low frequency. 

%\subsubsection{Discussion} There is strong evidence that the password distribution actually follows CDF-Zipf's law --- see section \ref{sec:ZipfLawYahoo} and~\cite{WangW16}. It is possible that the tail of the password distribution does not fit Zipf's law as it is not always possible to confidently estimate the true entropy of passwords that were observed with low frequency in a dataset. However, Zipf's law is consistent with empirical observations~\cite{WangW16}.  In this case, whenever $v/k \geq T(y,r)$ we can claim that the adversary will crack {\em almost all} of the user's passwords perhaps excluding a few of the strongest passwords.

\subsection{Memory Hard Functions}
Memory hard functions potentially provide a way of increasing computation cost without drastically increasing computation time. As the name suggests memory hard functions require a large amount of memory to evaluate. Thus, the cost of purchasing/renting hardware for password cracking, approximated by a functions Area x Time (AT) complexity, can be substantial for an attacker. Specifically, AT complexity of SCRYPT~\cite{Per09}, scales quadratically with the number of time steps~\cite{ACKKPT16}. Thus, as discussed in Section~\ref{sec:prelim}, we estimate $k^{\$}=\tau C_H+\tau^2 C_{mem}$, where  $C_H \approx \$7\times10^{-15}$~\cite{BonneauS14} and $C_{mem}\approx\frac{C_H}{3000}$ as in \cite{BK15,AB16}. 

In the last section we assumed that the attacker faced aggressive diminishing marginal returns for additional cracked passwords and we used the lowest possible estimations of adversary value finding that an attacker still cracks $100\%$ of passwords from a Zipf's law distribution. By contrast, in this section we operate under the conservative assumptions that the attacker does not face diminishing returns and we use the larger estimations of adversary value in our analysis. Nevertheless, we find that the use of MHFs can substantially reduce the $\%$ of cracked passwords.   

Figure~\ref{fig:cdf:mhf} plots $v^\$$ (estimate) versus the minimum value of $\tau$ necessary to prevent a rational attacker from cracking $100\%$ of passwords. For example, the blue line predicts that if Yahoo! had adopted memory hard functions with only $\tau=2^{20}$ iterations ($0.1$ seconds) then a value $\$30$ adversary will not crack all passwords selected from a CDF-Zipf's law distribution with the parameters $y=0.0211$ and $r=0.2166$, the parameters for our CDF-Zipf's fit for Yahoo! passwords. By contrast, Yahoo! would need to set $\tau=2^{26}$ ($\approx 7$ seconds) when using a function like PBKDF2 or BCRYPT just to ensure that the adversary does not crack $100\%$ of passwords when $a=1$.
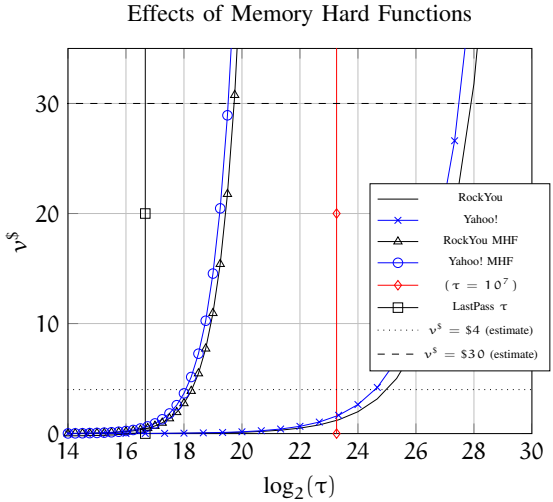
\begin{figure}[htb]
\centering
\begin{tikzpicture}[scale=0.9] 
\begin{axis}[title style={align=center},
	 title={Effects of Memory Hard Functions},
   xlabel={$\log_2(\tau)$},
   ylabel={$v^{\$}$},
	 xmin={14},
	 xmax={30},
	 ymin={0},
	 ymax={35},
   ylabel shift = -3pt,
   grid=major,
   legend style = {font=\tiny, at={(.65,.65)}, anchor=north west},
	 legend entries = {RockYou, Yahoo!, RockYou MHF, Yahoo! MHF, $(\tau=10^7)$,  LastPass $\tau$, $v^{\$}=\$4$ (estimate), $v^{\$}=\$30$ (estimate)}	
  ] 

\addplot[color=black, domain=14:30]{(1.69657*10^7)*7*(10^-15)*(2^x)};
%tianya
%\addplot[color=red, domain=14:30]{(2.27871*10^7)*7*(10^-15)*(2^x)};
%enter yahoo T(y,r) below
\addplot[color=blue, mark=x, domain=14:30]{(2.25435*10^7)*7*(10^-15)*(2^x)};
\addplot[color=black, mark=triangle, domain=14:20]{(1.69657*10^7)*((7*10^-15)*(2^x)+7/3*(10^-18)*(2^x)^2)};
%tianya
%\addplot[color=blue, domain=14:20]{(2.27871*10^7)*((7*10^-15)*(2^x)+7/3*(10^-18)*(2^x)^2)};
%enter yahoo T(y,r) below
\addplot[color=blue, mark=o, domain=14:20]{(2.25435*10^7)*((7*10^-15)*(2^x)+7/3*(10^-18)*(2^x)^2)};
\addplot[color=red,mark=diamond] coordinates {(23.2635,0) (23.2635,20) (23.2635,40) (23.2635,50)};

\addplot[mark=square] coordinates{ (16.66667, 0) (16.66667,20) (16.66667,40)  (16.66667, 50)};
\addplot[color=black, style=dotted, domain=0:30]{4};
\addplot[color=black, style=dashed, domain=0:30]{30};

\end{axis} 
\end{tikzpicture}
\caption{Memory Hard Functions: $v^\$$ vs $\tau$ when $v = k\times T(y,r,1)$ using thresholds $T(y,r,1)$ for RockYou and Yahoo! $k=\tau C_H+\tau^2 C_{mem}$ for MHFs and $k=C_H \times \tau$ otherwise.  }
\label{fig:cdf:mhf}
\end{figure}

Figure \ref{fig:cdf:mhf} predicts that MHFs prevent a rational adversary from cracking {\em all} passwords from a Zipf's law distribution. Of course, if the adversary still cracks $99.9\%$ of passwords then this result would not be particularly exciting.  Figure \ref{fig:cdf:mhf2} plots $\%$ cracked passwords vs. $\tau$ against a value $v^\$=\$4$ adversary. These plots provide an optimistic outlook for MHFs. For example, the plots predict that we can significantly reduce the $\%$ of cracked passwords (easily below $20\%$) with out introducing unacceptably long authentication delays when passwords follow a Zipf's law distribution. By contrast, the plots predict that we would need to set $\tau \approx 2^{32}$ ($400+$ seconds) to achieve the same result using PBKDF2 or BCRYPT when $a=1$.

\begin{figure}[htb]
\centering
\begin{tikzpicture}[scale=0.9] 
\begin{axis}[title style={align=center},
	 title={Effects of Memory Hard Functions},
   xlabel={$\log_2(\tau)$},
   ylabel={$\%$ cracked},
	 xmin={18},
	 xmax={37},
	 ymin={0},
	 ymax={100},
   ylabel shift = -3pt,
   grid=major,
	 cycle list = {{black, mark=triangle}, {blue, mark=o},  {black, mark=none}, {blue, mark=x}},
   legend style = {font=\tiny, at={(.99,.95)}, anchor=north east},
	 legend entries = {RockYou MHF, Yahoo! MHF, RockYou, Yahoo!, $\max \#$ SHA256 hashes in 1sec $(10^7)$}
  ] 
%solve((1-0.037433*(10^1-1)^0.187227)/(0.037433*0.187227*(10^1)^(0.187227-1)) = 4 / ((y*7*10^-15)+(2.33*10^-18)*y^2), y)
%log(4.42*10^7)/log(2)
%0.037433*x^0.187227 at x=10^1
%solve((1-0.0211*(10^1-1)^0.2166)/(0.0211*0.2166*(10^1)^(0.2166-1)) = 4 / ((y*7*10^-15)+(2.33*10^-18)*y^2), y)
\addplot coordinates{(25.4, 5.76078) (24.07, 8.86561) (22.76, 13.6438) (21.477, 20.9973) (20.23, 32.3139) (19.101, 49.7298) (18.297, 76.5321)};
\addplot coordinates{(25.1241, 5.76078) (23.8403, 8.86561) (22.56785, 13.6438) (21.3164, 20.9973) (20.105, 32.3139) (18.9827, 49.7298) (18.135, 76.5321)};
%solve((1-0.037433*(10^7-1)^0.187227)/(0.037433*0.187227*(10^7)^(0.187227-1)) = 4 / ((y)*7*10^-15), y)
\addplot coordinates{(36.5977, 8.86561) (33.985, 13.6438) (31.403, 20.9973) (28.92875, 32.3139) (26.65676, 49.7298) (25.0567, 76.5321) (25, 100)};
%solve((1-0.0211*(10^7-1)^0.2166)/(0.0211*0.2166*(10^7)^(0.2166-1)) = 4 / ((y)*7*10^-15), y)
\addplot coordinates{(36.12807, 8.86561) (33.5836, 13.6438) (31.08165, 20.9973) (28.6616, 32.3139) (26.42107, 49.7298) (24.73308, 76.5321)};
\addplot[color=red,mark=diamond] coordinates {(23.2635,0) (23.2635,33) (23.2635,67) (23.2635,100)};
\end{axis}
\end{tikzpicture}
\caption{Memory Hard Functions: $\%$ cracked by value $v=\$4$ adversary against MHF with running time parameter $\tau$. }
\label{fig:cdf:mhf2}
\end{figure}
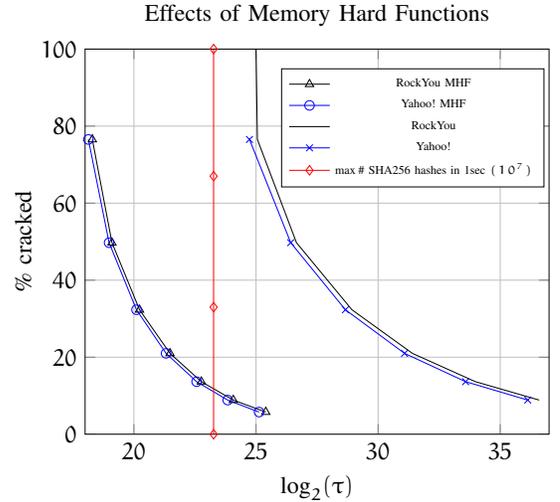

% !TEX root =sp-main.tex
\section{Model Independent Analysis} \label{sec:ModelIndependent}
In this section we derive model-independent upper and lower bounds on the $\%$ of users whose passwords would be cracked by a rational adversary. The advantage of a model independent analysis is that the bounds we derive apply even if we do not make any assumptions about the shape of the password distribution. As we observed previously it is not possible to definitively claim that the tail of the password distribution follows Zipf's law --- even if the tail of the distribution is not known to be {\em inconsistent} with Zipf's law~\cite{WangZipfLaw14,WangW16}. The disadvantage of a model independent analysis is that the bounds we are able to derive may not always be tight as the bounds we may be able to derive using specific modeling assumptions e.g., Zipf's law. In this section we assume for the sake of simplicity that $a=1$ i.e., the marginal value of each additional cracked password remains constant.  

Suppose that we are given $N$ independent samples $pwd^1$,$\ldots$, $pwd^N$ $\leftarrow \mathcal{X}$ from an (unknown) distribution $\mathcal{X}$. As before, we will let $f_i$ denote the number of users who chose password $pwd_i$ in a dataset and without loss of generality assume that these frequencies are sorted so that $f_i \geq f_{i+1}$. We can use $f_i$ to obtain an estimate $\hat{p}_i = \frac{f_i}{N}$ for $p_i$, the true probability that that a random user selects the password $pwd_i$. While we do have $\hat{p}_i \geq \hat{p}_{i+1}$ we stress that we may no longer assume that $p_i \geq p_{i+1}$ since our empirical value $\hat{p}_i$ (resp. $\hat{p}_{i+1}$) may over/under estimate the true probability $p_i$. 

\subsection{Lower Bound}
Theorem \ref{thm:ModelIndependentLB1} lower bounds the number of passwords that will be cracked by a rational adversary in expectation. The expectation is taken over $N$ passwords sampled from $\mathcal{X}^N$.

\newcommand{\thmLowerBound}{If $\frac{V}{k} \geq NL$ and $a=1$ then a rational adversary will crack at least 
\[ \sum_{i: f_i\geq j} f_i - \frac{N}{(j-1)!L^{j-1}} \  \]
user passwords, in expectation.}
\begin{theorem} \label{thm:ModelIndependentLB1}
\thmLowerBound
\end{theorem}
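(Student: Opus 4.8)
The plan is to show that a rational attacker is \emph{guaranteed} to crack every password string whose true probability exceeds a small threshold, and then to argue that the observable quantity $\sum_{i:f_i\ge j}f_i$ overshoots this guaranteed set by only a small amount in expectation. First I would convert the hypothesis into a probability threshold. Since $a=1$, the marginal revenue of the $t$-th guess is $MR(t)=v\,p_t$ while the marginal cost $MC(t)=k\bigl(1-\lambda_{t-1}\bigr)\le k$. Because the $p_i$ are sorted in decreasing order, if $v\,p_i\ge k$ then every earlier guess also has marginal revenue at least its marginal cost, so the utility $R(t)-C(t)$ is still increasing at $t=i$ and hence the optimal threshold satisfies $t^*\ge i$. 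Thus the attacker cracks \emph{at least} every $pwd_i$ with $p_i$ above the threshold supplied by $V/k\ge NL$, and the expected number of cracked accounts is at least $\mathbf{E}\bigl[\sum_{i:\,p_i\ge\tau}f_i\bigr]$, where $\tau=1/(NL)$ is the probability cutoff that this hypothesis guarantees.

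Second, I would pass to the observable count. For every sample, dropping the nonnegative terms with $f_i<j$ and then partitioning $\{i:f_i\ge j\}$ by whether $p_i\ge\tau$ gives the pointwise inequality
\[
\sum_{i:\,p_i\ge\tau}f_i \ \ge\ \sum_{i:\,f_i\ge j}f_i \ -\ \sum_{i:\,f_i\ge j,\ p_i<\tau}f_i \ .
\]
Taking expectations over the sample, the theorem reduces to the single estimate $\mathbf{E}\bigl[\sum_{i:\,p_i<\tau}f_i\,\mathbf{1}[f_i\ge j]\bigr]\le N/\bigl((j-1)!\,L^{j-1}\bigr)$, i.e.\ to bounding the expected ``false-positive'' mass contributed by passwords that appear at least $j$ times by chance despite having small true probability.

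Third, and this is the technical heart, I would control that false-positive mass. Since $f_i\sim\mathrm{Bin}(N,p_i)$, the identity $m\binom{N}{m}=N\binom{N-1}{m-1}$ gives $\mathbf{E}\bigl[f_i\,\mathbf{1}[f_i\ge j]\bigr]=N\,p_i\,\Pr\bigl[\mathrm{Bin}(N-1,p_i)\ge j-1\bigr]$. Bounding the tail by the union-type estimate $\Pr\bigl[\mathrm{Bin}(N-1,p_i)\ge j-1\bigr]\le\binom{N-1}{j-1}p_i^{\,j-1}\le\frac{N^{j-1}}{(j-1)!}\,p_i^{\,j-1}$ yields $\mathbf{E}[f_i\,\mathbf{1}[f_i\ge j]]\le\frac{N^{j}}{(j-1)!}\,p_i^{\,j}$. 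Summing over all $i$ with $p_i<\tau$, using $p_i^{\,j}<\tau^{\,j-1}p_i$ and $\sum_i p_i\le 1$, collapses the sum to $\frac{N^{j}}{(j-1)!}\,\tau^{\,j-1}=\frac{N^{j}}{(j-1)!}(NL)^{-(j-1)}=\frac{N}{(j-1)!\,L^{j-1}}$, exactly the correction term. Interchanging sum and expectation is justified by nonnegativity (Tonelli), and the possibly infinite support is harmless because every term is weighted by the convergent mass $\sum_i p_i$.

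The step I expect to be the main obstacle is the calibration of the threshold $\tau$: the tail bound inevitably produces a factor $N^{j}$, and recovering the stated correction $N/((j-1)!\,L^{j-1})$ requires precisely that the probability cutoff satisfy $\tau^{\,j-1}=(NL)^{-(j-1)}$, i.e.\ $\tau=1/(NL)$. I would therefore verify most carefully how the hypothesis $V/k\ge NL$ translates, through the comparison of $MR(t)=v\,p_t$ against $MC(t)\le k$, into the cracking guarantee for all $p_i\ge 1/(NL)$; the entire cancellation of the spurious powers of $N$ hinges on this conversion, and any looseness there changes the exponent of $N$ in the final bound.
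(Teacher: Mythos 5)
Your proposal is correct and follows essentially the same route as the paper's proof: the same cracking guarantee for passwords with $p_i \ge 1/(NL)$, the same decomposition of the cracked mass as $\sum_{i:f_i\ge j} f_i$ minus the mass of what the paper calls $(j,L)$-bad overestimates (your ``false positives''), and the same union-bound tail estimate $\Pr\left[\mathrm{Bin}(N-1,p_i)\ge j-1\right]\le \binom{N-1}{j-1}p_i^{j-1}\le \frac{N^{j-1}}{(j-1)!}\,p_i^{j-1}$. The only real difference is bookkeeping in the final expectation bound: you sum per password, via the identity $\mathbf{E}\left[f_i\,\mathbf{1}[f_i\ge j]\right]=Np_i\Pr\left[\mathrm{Bin}(N-1,p_i)\ge j-1\right]$ together with $p_i^{\,j}<\tau^{\,j-1}p_i$ and $\sum_i p_i\le 1$, whereas the paper re-indexes the same quantity over the $N$ samples (writing $\sum_i f_iB_i=\sum_{t\le N}Y_t$) and bounds each $\Pr[Y_t=1]\le \frac{1}{(j-1)!\,L^{j-1}}$, so the expectation is at most $N$ times that. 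Both computations are equivalent and give exactly $\frac{N}{(j-1)!\,L^{j-1}}$.

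On the step you flagged as the main obstacle: your instinct is right that this is the one delicate point, and the paper resolves it by convention rather than by argument. The paper's proof writes the cracking condition as $Vp_i\ge k$, i.e., it treats the $V$ appearing in the hypothesis $V/k\ge NL$ as the value that multiplies $p_i$ in the single-account marginal-revenue comparison; then $p_i\ge 1/(NL)$ immediately gives $Vp_i\ge k\ge \max_t MC(t)$, which is precisely the cutoff $\tau=1/(NL)$ your argument needs. If instead one insists on the Section 2 convention $v=V/N$ and compares $vp_i$ against $k$ (as your first paragraph literally does, with lowercase $v$), the hypothesis only guarantees cracking for $p_i\ge 1/L$, and the factor $N^{j-1}$ in your tail bound no longer cancels. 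So to make your write-up airtight you should either state the hypothesis as a condition on the per-account value (``$v/k\ge NL$'') or adopt the paper's reading of $V$; with that fixed, the rest of your argument goes through verbatim.
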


The proof of Theorem \ref{thm:ModelIndependentLB1} is in appendix \ref{apdx:MissingProofs}. The proof begins with the observation that a password $pwd^t=pwd_i$ will certainly be cracked by a value $V$ adversary if $p_i \geq \frac{1}{NL}$. We then introduce the notion of a {\em $(j,L)$-bad overestimate}. In particular, a $(j,L)$-bad overestimate for $pwd^t$  occurs when $p_i < \frac{1}{NL}$ but $f_i \geq j$. If we have $f_i \geq j$ then either $p_i \geq \frac{1}{NL}$ and the password will be cracked, or we have a $(j,L)$-bad overestimate for the password $pwd^t$. We can then show that $\frac{N}{(j-1)!L^{j-1}}$ upper bounds the expected number of passwords $pwd^t$ with $(j,L)$-bad overestimates.

\subsection{Upper Bound}
In contrast to Theorem \ref{thm:ModelIndependentLB1} , Theorem \ref{thm:ModelIndependentUB1} upper bounds the $\%$ of passwords that we expect an attacker to compromise. 
\newcommand{\themModelIndepUpperBound}{If $\frac{V}{k} \le NL\left(1-\frac{1+\eps}{N}\sum_{i=1}^t f_i\right)$ for fixed $\eps$ and $t$, then except with probability $\exp\left(-\frac{\eps^2N\sum_{i=1}^t p_i}{2(1+\eps)^2}\right)$, a rational adversary will crack at most $\sum_{i: f_i > j} f_i  + \mu(N,L,j)$ user passwords where $\mu(N,L,j) = $
\[ \displaystyle \sum_{i: 0 < f_i \leq j} f_i \sum_{\ell=0}^{j-1}\binom{N-1}{\ell} \left( \frac{1}{NL}\right)^\ell\left(\frac{NL-1}{NL}\right)^{N-\ell-1}  \ .\]
}
\begin{theorem} \label{thm:ModelIndependentUB1}
\themModelIndepUpperBound
\end{theorem}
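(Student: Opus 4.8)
The plan is to prove Theorem~\ref{thm:ModelIndependentUB1} by pairing a Chernoff concentration step with a combinatorial counting step that is dual to the ``bad overestimate'' argument behind Theorem~\ref{thm:ModelIndependentLB1}, replacing it with a notion of a \emph{bad underestimate}. The engine of both bounds is the marginal analysis from Section~\ref{sec:prelim}: with $a=1$ the adversary includes the guess $pwd_i$ only when its marginal reward $V p_i$ is at least its marginal cost $k\bigl(1-\sum_{n<i}p_n\bigr)$, i.e.\ only when $\tfrac{V}{k}\,p_i \ge 1-\sum_{n<i}p_n$. Where the lower bound extracted a \emph{sufficient} condition for cracking ($p_i \ge \tfrac{1}{NL}$), here I want the dual \emph{necessary} condition, namely that a password with true probability $p_i < \tfrac{1}{NL}$ is \emph{not} cracked, so that every cracked password has $p_i \ge \tfrac{1}{NL}$.

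First I would carry out the concentration step, which is the source of the failure probability. Let $\lambda_t=\sum_{i=1}^t p_i$ be the true cumulative mass of the $t$ most likely passwords and let $X$ be the number of sampled users whose password lies in this top-$t$ set, so $X\sim\mathrm{Bin}(N,\lambda_t)$ and $\mathbf{E}[X]=N\lambda_t$. Since the $t$ empirically most frequent passwords have total count at least that of any fixed set of $t$ passwords, we always have $\sum_{i=1}^t f_i \ge X$. Applying the Chernoff lower tail with deviation $\delta=\tfrac{\eps}{1+\eps}$ gives
\[ \Pr\!\left[X < \tfrac{N\lambda_t}{1+\eps}\right] \le \exp\!\left(-\frac{\eps^2 N\lambda_t}{2(1+\eps)^2}\right), \]
which is exactly the stated bound. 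On the complementary good event, $(1+\eps)\tfrac1N\sum_{i=1}^t f_i \ge (1+\eps)\tfrac{X}{N}\ge \lambda_t$, so the hypothesis $\tfrac{V}{k}\le NL\bigl(1-\tfrac{1+\eps}{N}\sum_{i=1}^t f_i\bigr)$ upgrades to the clean inequality $\tfrac{V}{k}\le NL(1-\lambda_t)$ phrased in terms of the true distribution.

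Next I would establish the necessary condition on this good event. If $p_i<\tfrac{1}{NL}$ then the marginal reward obeys $V p_i < \tfrac{V}{NL}\le k(1-\lambda_t)\le k\bigl(1-\sum_{n<i}p_n\bigr)$, where the middle step is the good-event inequality and the last step uses $\sum_{n<i}p_n\le\lambda_t$; hence marginal cost dominates marginal reward and $pwd_i$ is not guessed, so every cracked password has $p_i\ge\tfrac1{NL}$. I would then split the cracked users by empirical frequency: those with $f_i>j$ are bounded trivially by the first term $\sum_{i:f_i>j}f_i$. For a cracked password with $0<f_i\le j$ we have a bad underestimate --- a password of true probability $p_i\ge\tfrac1{NL}$ that nonetheless appears at most $j$ times. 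Writing $F$ for its count and using the identity $\mathbf{E}\bigl[F\,\mathbf 1[F\le j]\bigr]=Np\,\Pr[\mathrm{Bin}(N-1,p)<j]$ together with stochastic monotonicity in $p$ (so $\Pr[\mathrm{Bin}(N-1,p)<j]\le\Pr[\mathrm{Bin}(N-1,\tfrac1{NL})<j]$ whenever $p\ge\tfrac1{NL}$), the expected number of such users is at most $\mu(N,L,j)$, yielding the second term.

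The step I expect to be the real obstacle is the necessary-condition argument, because the inequality $\sum_{n<i}p_n\le\lambda_t$ only holds for passwords whose \emph{true} rank is at most $t+1$, whereas the statement is phrased through the empirically sorted $f_i$, for which true and empirical order can disagree. The delicate points are therefore (i) controlling the gap between the empirical and true orderings and (ii) ruling out the adversary extending its optimal threshold deep into the low-probability tail, where $1-\sum_{n<i}p_n$ is tiny and a small $p_i$ could in principle still clear the marginal-cost bar. This is precisely what the choice of the cut-off $t$ and the restriction $\tfrac{V}{k}\le NL\bigl(1-\tfrac{1+\eps}{N}\sum_{i=1}^t f_i\bigr)<NL$ are designed to prevent, and reconciling the two orderings carefully is the crux of turning the heuristic ``cracked $\Rightarrow p_i\ge\tfrac1{NL}$'' into a rigorous bound.
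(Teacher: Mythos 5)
Your proposal is, in outline, the paper's own proof. Your Chernoff step is exactly the paper's Claims~\ref{claim:first} and~\ref{claim:chernoff} (yielding $MC(t)\ge\bigl(1-\frac{1+\eps}{N}\sum_{i=1}^t f_i\bigr)k$ on the good event), your necessary condition ``cracked $\Rightarrow p_i\ge\frac{1}{NL}$'' is the paper's Claim~\ref{claim:ModelIndependentUB1}, and your bad-underestimate count via monotonicity of the binomial CDF is the paper's Claims~\ref{claim:probabilityBadUnderestimate} and~\ref{claim:numberBadUnderestimates}.

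However, the two places where you hedge are precisely the places where the paper's own proof is loose, so you should not expect the paper to supply what you are missing. (i) The obstacle you call the crux is genuine and is \emph{not} resolved in the paper: the proof of Claim~\ref{claim:ModelIndependentUB1} concludes that a rational attacker ``never chooses to guess $pwd_i$'' directly from $Vp_i<MC(t)$, i.e.\ it compares the reward of a tail password against the marginal cost at the fixed threshold $t$, whereas the relevant marginal cost at $pwd_i$'s true rank $r_i>t$ is $k\bigl(1-\sum_{n<r_i}p_n\bigr)$, which can be far smaller than $MC(t)$. This is not a cosmetic omission: for a geometric tail $p_s\propto q^s$ one has $1-\lambda_{s-1}=\frac{p_s}{1-q}$, so the marginal gain $Vp_s-k(1-\lambda_{s-1})=p_s\bigl(V-\frac{k}{1-q}\bigr)$ has constant sign across the entire tail; once $V/k>\frac{1}{1-q}$ (compatible with the theorem's hypothesis when $NL(1-\lambda_t)$ is large) the utility is strictly increasing past rank $t$, and an attacker whose optimum enters the tail cracks \emph{every} password there, including those with $p_i\le\frac{1}{NL}$. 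So this step needs either a real argument or an additional hypothesis. (ii) Your counting step, carried out rigorously as you set it up, gives a bound of $N\cdot\sum_{\ell=0}^{j-1}\binom{N-1}{\ell}\left(\frac{1}{NL}\right)^\ell\left(\frac{NL-1}{NL}\right)^{N-1-\ell}$ on the expected number of bad-underestimate users (sum $Np_w\Pr[\mathrm{Bin}(N-1,p_w)\le j-1]$ over passwords with $p_w\ge\frac{1}{NL}$ and use $\sum_w p_w\le 1$), \emph{not} the stated $\mu(N,L,j)$, which multiplies the binomial tail by the realized count $\sum_{i:0<f_i\le j}f_i$. The asserted inequality does not follow: if every tail probability sits just above $\frac{1}{NL}$, then essentially every low-frequency user is a bad underestimate, and the number of such users exceeds $\mu$ (which discounts that count by a factor strictly below one). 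The paper's Claim~\ref{claim:numberBadUnderestimates} makes the same substitution of a realized count into an expectation (its statement is not even a complete sentence), so on this point as well your proposal reproduces the paper's argument, looseness included, rather than diverging from it.
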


The proof of Theorem \ref{thm:ModelIndependentUB1} is in appendix \ref{apdx:MissingProofs}. Briefly, we apply Chernoff bounds to show that, if $\frac{V}{k} \leq NL\left(1-\frac{1+\eps}{N}\sum_{i=1}^t f_i\right)$, then with high probability the number of user passwords in our dataset that a rational adversary cracks is at most 
\[ \sum_{i: f_i\geq j} f_i + \sum_{i} f_i \times C_i \  . \]
Here, $C_i$ denotes the event that we have a $(j,L)$-bad underestimate for the password $pwd_i$. We then separately upper bound the sum $\sum_{i} f_i \times C_i$ to obtain the bound in Theorem \ref{thm:ModelIndependentUB1}.

\subsection{Applications}
Theorems \ref{thm:ModelIndependentLB1} and \ref{thm:ModelIndependentUB1}  allows us to derive different upper and lower bounds by plugging in different values of $j$ and $L$. For example, by increasing $j$ we decrease the term $\frac{N}{(j-1)!L^{j-1}}$ in Theorem \ref{thm:ModelIndependentLB1}, but we also decrease the sum $\sum_{i:f_i \geq j} f_i$. Increasing (resp. decreasing) $L$ is equivalent to assuming the adversary has a higher (resp. lower) value for cracked passwords, which intuitively allows us to establish higher lower bounds (resp. smaller upper bounds) on the percentage of passwords cracked. 

\subsubsection{Lower Bounds} Applying Theorem \ref{thm:ModelIndependentLB1} we  can derive specific lower bounds for each of the datasets studied by \cite{WangW16} as well as for the Yahoo! frequency corpus. For most datasets we obtain our lower bound by setting $j=2$ and $L=10$. For the Yahoo! and RockYou datasets we obtained better lower bounds by setting $j=3$ and $L=10$. The result appears below:

%5 212345
%4 355243
%3 713732      $ 0.7 million passwords that were each chosen by exactly 3 users (2.1e6 users)
%2 2261395    $ 3.3 million passwords that were each chosen by exactly 2 users (6.6e6 users)
%1 29452171  % 29.4 million unique passwords
%(1 - 14326970 / 32581870 - 1/10)
\[
\resizebox{8cm}{!}
{
\begin{tabular}{|c|cc|c|c|}\hline
Dataset & Unique PWs & Total PWs & $\frac{V}{k}$ & $\%$ cracked \\\hline
RockYou & 14,326,970 & 32,581,870 & $3.2582\times10^8$ & 46.03 \\\hline
000webhost & 10,583,709 & 15,251,073 & $1.5251\times10^8$ & 20.60 \\\hline
Battlefield & 417,453 & 542,386 & $5.4239\times10^6$ & 13.03 \\\hline
Tianya & 12,898,437 & 30,901,241 & $3.0901\times10^8$ & 48.26 \\\hline
Dodonew & 10,135,260 & 16,258,891 & $1.6259\times10^8$ & 27.66 \\\hline
CSDN & 4,037,605 & 6,428,277 & $6.4283\times10^7$ & 27.19 \\\hline
Mail.ru & 2,954,907 & 4,932,688 & $4.9327\times10^7$ & 30.01 \\\hline
Gmail & 3,132,028 & 4,929,090 & $4.9291\times10^7$ & 26.46 \\\hline
Flirtlife.de & 115,589 & 343,064 & $3.3406\times10^6$ & 56.04 \\\hline
\textcolor{blue}{Yahoo!} & \textcolor{blue}{$2.94\times10^7$} & \textcolor{blue}{$7\times10^7$} & \textcolor{blue}{$7\times10^8$} & \textcolor{blue}{51} \\\hline
\end{tabular}

}
\]

{\bf \noindent Remark: } When $j=1$ we have $\sum_{i: f_i\geq j} f_i - \frac{N}{(j-1)!L^{j-1}} = N - N = 0$ meaning that Theorem \ref{thm:ModelIndependentLB1} provides no lower bound on the $\%$ of cracked passwords. 
At first glance this may appear to be a shortcoming of the theorem. 
However, we observe that it is impossible to obtain better lower bounds without making assumptions about the password distribution. 
In particular, let $\mathcal{X}_1$ (resp. $\mathcal{X}_2$) be the uniform distribution over a set of $2^{3n}$ (resp. $2^{8n}$) passwords. 
Observe that $\mathcal{X}_1$ and $\mathcal{X}_2$ can induce dramatically different rational attacker behavior (e.g., if the value of a password is $2^{3n}k$, the adversary will crack $100\%$ of passwords if the true password distribution is $\mathcal{X}_1$ and $0\%$ of passwords if the true distribution is $\mathcal{X}_2$). 
However, if we draw $N=2^n$ samples from $\mathcal{X}_1$ and $\mathcal{X}_2$, then the frequency lists for the two password distributions will be indistinguishable ($f_1=f_2=\ldots=f_N=1$) by birthday bounds ($N\ll 2^{1.5n}$). 
%However, a rational adversary with a value $v=2^{3n} k$ will always crack $100\%$ of passwords from distribution $\mathcal{X}_1$ (since $MR(i) \geq vp_i \geq k$ for each $i$) and $0\%$ of passwords from distribution $\mathcal{X}_2$ (since $p_iv = 2^{-8n}2^{3n}k \ll k$). 

\subsubsection{Upper Bounds} Similarly, we may use Theorem \ref{thm:ModelIndependentUB1} to derive model independent upper bounds on the percentage of Yahoo! passwords cracked by a rational adversary as shown in Figure~\ref{fig:mia:cracked}. As Figure~\ref{fig:mia:cracked} shows we could potentially use memory hard functions to reduce the $\%$ of cracked passwords to $\approx 20\%$ without increasing authentication time past $1$ second. This is particularly, impressive when one considers that an attacker only needs a single guess to achieve success rate $1\%$! 
%\jnote{When $n=2^{20}$ (0.1 s) we have $k^{\$} = 2.57\times 10^{-6}$. Setting $V=\$30$ we get $V/k \approx 1.17 \times 10^{7}$. Increasing $n=2^{21}$ we have $k^{\$} = 2^2\times 2.57\times 10^{-6}$ and $V/k=2.92\times 10^6$. At $n=2^{22}$ we get $k^{\$} = 2^4\times 2.57\times 10^{-6}$ and  $7.31\times 10^{5}$. Finally, when $n=2^{23}$ (0.8 s) we have $k^{\$} = 2^6\times 2.57\times 10^{-6}$ and $V/k=1.83\times 10^{-5}$. When $V = \$4$ and $n=2^{23}$ we get $V/k \approx 2.44 \times 10^{4}$.}
\ignore{\[
\begin{tabular}{|c|c|}\hline
$V/k$ & $\%$ cracked \\\hline
$10^8$ & $100$ \\\hline
$5\times10^7$ & $99$ \\\hline
$10^7$ & $61.38$ \\\hline
$5\times10^6$ & $56.53$ \\\hline
$10^6$ & $52.42$ \\\hline
$5\times10^5$ & $42.64$ \\\hline
$10^5$ & $37.46$ \\\hline
$5\times10^4$ & $26.30$ \\\hline
$10^4$ & $22.24$ \\\hline
\end{tabular}
\]}

\begin{table}
\begin{tabular}{|c|c|c|c|c|c|c|c|c|c|}\hline
$V/k$             & $10^8$    & $5\times10^7$ & $10^7$   & $5\times10^6$ & $10^6$ \\\hline 
$\%$ cracked  & $100$      & $99$               & $61.38$  & $56.53$           & $52.42$  \\\hline\hline
$V/k$ &   & $5\times10^5$  & $10^5$  & $5\times10^4$ & $10^4$  \\\hline
$\%$ cracked & & $42.64$           & $37.46$  & $26.30$           & $22.24$  \\\hline
\end{tabular}
\caption{Model Independent Upper Bound $\%$ cracked}
\end{table}

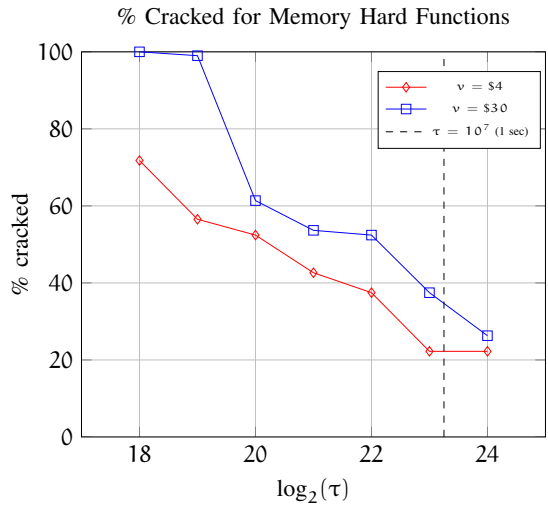
\begin{figure}[htb]
\centering
\begin{tikzpicture}[scale=0.9] 
\begin{axis}[title style={align=center},
	 title={$\%$ Cracked for Memory Hard Functions},
   xlabel={$\log_2(\tau)$},
   ylabel={$\%$ cracked},
	 xmin={17},
	 xmax={25},
	 ymin={0},
	 ymax={100},
   ylabel shift = -3pt,
   grid=major,
   legend style = {font=\tiny, at={(.99,.95)}, anchor=north east},
	 legend entries = {$v=\$4$, $v=\$30$, $\tau=10^7$ (1 sec)}
  ] 
%30/((2^24)*7*10^-15+(2^24)^2*10^-15*7/3000)
%run program
%[0.2223594, 0.2223594, 0.3746360285714286, 0.4263993, 0.5241556013459862, 0.5652648555459768, 0.7177611031124516]
%[1, 0.9900191, 0.6137915769743909, 0.5365185845088717, 0.5241556013459862, 0.3746360285714286, 0.2630084]
\addplot[color=red,mark=diamond] coordinates {(24,22.236) (23,22.236) (22,37.464) (21,42.64) (20,52.42) (19,56.53) (18,71.78)};
\addplot[color=blue,mark=square] coordinates {(24,26.3) (23,37.464) (22,52.42) (21,53.65) (20,61.38) (19,99) (18,100)};
\addplot[style=dashed] coordinates {(23.25,0) (23.25,100)};
\end{axis}
\end{tikzpicture}
\caption{Memory Hard Functions: $\%$ cracked by value $v^{\$} \in \{\$4,\$30\}$ adversary against an ideal MHF with running time parameter $\tau$. }
\label{fig:mia:cracked}
\end{figure}

%\paragraph{Discussion} 
%!TEX root=sp-main.tex
\section{Related Work}

\subsubsection{Password Cracking} The issue of offline password cracking has been known for decades~\cite{Morris1979}. Password cracking tools have improved steadily as researchers have explored probabilistic password models ~\cite{Narayanan2005}, Probabilistic Context Free Grammars for passwords~\cite{Weir2009,Kelley2012,Veras2014}, Markov chain models~\cite{Castelluccia2012,Castelluccia2013,Ma2014study,cracking:usenix15} and even neural networks~\cite{Melicher2016}. Attackers may also use public resources (e.g., quotes from the Internet Movie Database or project Gutenberg to crack sentence based passwords~\cite{Yan2000,YBAG04}) as well as `training data' from previous breaches at companies like RockYou or Tianya to improve cracking algorithms. Improved password cracking tools make it all the more crucial to develop secure tools for key-stretching (e.g., data-independent MHFs) to minimize the number of guesses an attacker can try. Allodi has studied the economics of the black market for certain attacks and malware, which may be useful in understanding how password cracking markets may work \cite{russianBlackMarket}.

\subsubsection{Improving Password Strength} Efforts to encourage (or force) users to select stronger passwords have shown limited success~\cite{campbell2011impact,Komanduri2011,Shay2010,Stanton2005,Inglesant2010,Shay2014} and often induce high usability costs~\cite{Adams1999}.  Users can be encouraged to select stronger passwords by providing feedback during the password creation policy (e.g., ~\cite{Komanduri2014,Ur2012,de2014very}) or by providing clear instructions for the user to follow when creating passwords~\cite{Yan2000,BlockiKCD15}. Another extensive line of research explored the use of password composition policies in which a user is required to select a password satisfying certain requirements e.g., contains numbers and/or capital letters~\cite{campbell2011impact,Komanduri2011,Shay2010,Stanton2005,Shay2014,Shay2016}. Password composition policies also introduce a high usability cost~\cite{Inglesant2010,NIST2014,Florencio2014lisa,Adams1999}, and they typically do not increase password strength significantly. In fact, sometimes these policies result in weaker user passwords~\cite{blockiPasswordComposition,Komanduri2011}. Similarly, password strength meters often provide inconsistent feedback~\cite{Ur2012,de2014very} and they often fail to persuade users to select strong passwords. 

Another line of research has focused on helping users to generate and remember passwords. One prominent suggestion is to turn a phrase or a sentence into a password. It has been claimed that these passwords are as strong as random ones~\cite{Yan2000,YBAG04}, and this has been promoted by NIST and by security experts such as Bruce Schneier~\cite{SchneierChoosingSecurePasswords}. However, subsequent research indicates that these suggestions are less secure than previously believed~\cite{Kuo2006,Yang2016ccs}. Another line of research seeks to develop and promote secure and usable strategies for password management when the user needs to create and remember multiple passwords~\cite{blocki2013naturally,Florencio2014,blum2015publishable,Blocki2017HumanComputable}. However, all of these schemes require a motivated user. Bonneau and Schechter~\cite{BonneauS14} and Blocki et al.~\cite{BlockiKCD15} showed that users are capable of memorizing higher entropy secrets (e.g., $56$ bits) by following spaced repetition schedules.

\subsubsection{Other Defenses Against Offline Attacks} If an organization has multiple authentication servers then they could distribute storage and/or computation of the password hashes across multiple servers~\cite{brainard2003new,camenisch2012practical,everspaugh2015pythia,phoenix}. Juels and Rivest~\cite{rivestHoneywords} proposed storing the hashes of fake passwords (honeywords) and using a second auxiliary server to detect authentication attempts with honeywords (alerting the organization that an breach has occurred). The expensive requirement to purchase and maintain extra servers may prevent widespread adoption of these proposals. Even if these defenses were adopted there is still a clear need to use secure key-stretching mechanisms --- an adversary who breaches both servers can still mount an offline attack. Another line of research has sought to include the solution(s) to hard artificial intelligence problems in the password hash so that an offline attacker needs human assistance to verify each password guess~\cite{canetti2006mitigating,blockiGOTCHA,Blocki2016POH}. These solutions increase user workload during authentication e.g., by requiring the user to solve a CAPTCHA puzzle~\cite{canetti2006mitigating,Blocki2016POH}.

\subsubsection{Modeling the Distribution of User Selected Passwords} Malone and Kevin initially explored the feasibility of modeling the distribution of user password choices using Zipf's law~\cite{malone2012investigating}. Wang et al.~\cite{WangZipfLaw14} and Wang and Wang~\cite{WangW16} continued this line of work by providing improved techniques to fit Zipf's law parameters to a dataset. Bonneau~\cite{bonneau2012yahoo} took a different approach: collect and analyze a massive password frequency corpus with permission from Yahoo! The Yahoo! dataset was recently released using a differentially private algorithm~\cite{DPPLists}. We elaborate on Zipf's law and the Yahoo! frequency corpus at length in the body of the paper. 

\subsection{Key-Stretching} Key-stretching was proposed as early as 1979~\cite{Morris1979} with the goal of protecting lower-entropy secrets like passwords against offline attacks by making it economically infeasible for an offline attacker to try millions or billions of guesses. Traditionally key stretching has been performed using hash iteration e.g., PBKDF2~\cite{kaliski2000pkcs} and BCRYPT~\cite{provos1999bcrypt}. However, password hash functions like PBKDF2 and BCRYPT require minimal memory to evaluate and thus passwords protected by these hash functions are highly vulnerable to attackers with customized hardware~\cite{Antminer}. Memory hard functions (MHFs), first explicitly introduced by Percival~\cite{Per09}, are a promising tool for constructing an ideal key-stretching function. MHFs are motivated by the observation that the cost of storing/retrieving items from memory is relatively constant across different computer architectures. At a high level a memory hard function is moderately expensive to compute and most of the costs  associated with computing the function are memory related (e.g., storing/retrieving items from memory). Ideally we want the Area x Time complexity of computing a MHF to scale with $\tau^2$, where $\tau$ denotes the running time on a standard PC. Intuitively, to compute the MHF once the attacker must dedicate $\tau$ blocks of memory for $\tau$ time steps, which ensures that the cost of computing the function is equitable across different computer architectures (memory on an ASIC is still expensive). By contrast, Area x Time complexity to compute BCRYPT or PBKDF2 is simply $\tau$. Recall that we want to increase costs quickly to minimize delay during authentication. If costs scale with $\tau^2$ then we can rapidly drive up costs, and if computation requires memory then an adversary will not be able to significantly reduce guessing costs by constructing an ASIC. Almost all of the entrants to the recent Password Hashing Competition (PHC)~\cite{PHC} claimed some form of memory-hardness.  

\subsubsection{Data (In)dependent Memory Hard Functions} 
There is a type of MHF called a data-independent MHF (iMHF) which is designed to be resistant to side-channel attacks such as cache timing~\cite{Ber05,forler2013catena}. These functions have a data access pattern independent of the input. Multiple attacks have been shown in several iMHFs \cite{BK15,AS15,AB16,AB16b,TCC:BloZho17,CCS:AlwBloHar17,EC:AlwBloPie17}. Data dependent MHFs such as SCRYPT~\cite{Per09} have the previously mentioned side-channel vulnerabilities. Even so, SCRYPT has been found to be optimally memory hard with respect to AT complexity~\cite{ACKKPT16,ACPRT17}. The authors of Argon2~\cite{Argon2}, winner of the password hashing competition~\cite{PHC}, now recommend running in {\em hybrid mode} Argon2id to balance side-channel resistance and resistance to iMHF attacks.

% !TEX root =sp-main.tex
\section{Discussion} \label{sec:Discussion}
Our economic analysis decisively shows that traditional key-stretching tools like PBKDF2 and BCRYPT fail to provide adequate protection for user passwords, while memory hard functions do provide meaningful protection against offline attackers. It is time for organizations to upgrade their password hashing algorithms and adopt modern key-stretching such as memory hard functions~\cite{Per09,PHC}. Alternatively, could a creative organization adapt customized Bitcoin mining rigs for use in password authentication? For example, the Antminer S9~\cite{Antminer}, currently available on Amazon for approximately $\$3,000$, is capable of computing SHA256 $14$ trillion times per second. If the organization stored salted and peppered~\cite{manber1996simple,BlockiD16} password hash values $u,s_u, SHA256(pwd_u|s_u|p_u)$  then it could potentially use the Antminer S9, or a similar Bitcoin mining rig, to validate a password by quickly enumerating over a (very) large space of secret pepper values $p$ (briefly, a secret salt value that is not stored which even an honest party must brute force). 

While our analysis demonstrates that the use of memory hard functions can significantly reduce the fraction of cracked passwords, the damage of an offline attack may still be significant. Thus, we recommend that organizations adopt distributed password hashing~\cite{brainard2003new,camenisch2012practical,everspaugh2015pythia,phoenix} whenever feasible so that an attacker who only breaches one authentication server will not be able to mount an offline attack. Furthermore, we recommend that organizations take additional measures to mitigate the affect of an authentication server breach. Solutions might include mechanisms {\em detect} password breaches through the use of honey accounts or honey passwords\cite{rivestHoneywords}, multi-factor authentication and fraud detection/correction algorithms to prevent suspicious/harmful behavior~\cite{NDSS:FJDBG16}. 

%In addition to improved password hashing strategies organizations must find ways to mitigate the affect of an authentication server breach. Solutions might include mechanisms to detect password breaches~\cite{rivestHoneywords}, distributed password hashing~, multi-factor authentication and fraud detection/correction algorithms to prevent suspicious/harmful behavior. 

While solid options for password hashing and key-derivation exist~\cite{Per09,PHC,Argon2,CCS:AlwBloHar17} the reality is that many organizations and developers select suboptimal password hashing functions~\cite{bonneau2011,PasswordStorageUserStudy}. Thus, there is a clear need to provide developers with clear guidance about selecting secure password hash functions. On a positive note recent 2017 NIST guidelines do {\em suggest} the use of memory hard functions. However, NIST guidelines still allows for the user of PBKDF2 with just $10,000$ hash iterations. Based on our analysis we advocate that password hashing standards should be updated to require the use of memory hard functions for password hashing and disallow the use of non-memory hard functions such as BCRYPT or PBKDF2. It may be expedient for policy makers to audit and/or penalize organizations that fail to follow appropriate standards for password hashing. 

We recommend that users primarily focus on selecting passwords that are strong enough to resist targeted online attacks~\cite{wang2016targeted} as there is a often a vast gap between the required entropy to resist online and offline attacks~\cite{bonneau2015passwords}. Extra user effort to memorize a high entropy password might be completely wasted if an organization adopts poor password hashing algorithms like SHA1, MD5~\cite{AMWeak} or the identity function~\cite{bonneau2011}. This effort would likely be more productively spent on trying to reduce password reuse~\cite{Florencio2014}.

\section{Acknowledgments}
We would like the thank the reviewers for their insightful comments. We would also like to thank Ding Wang for sharing code for computing Zipf fittings. The work was supported by the National Science Foundation under NSF Awards \#1649515 and \#1704587. Ben Harsha was partially supported by a Intel Graduate Research Assistantship through CERIAS at Purdue. The opinions expressed in this paper are those of the authors and do not necessarily reflect those of the National Science Foundation or Intel.

\bibliographystyle{IEEEtran}
\bibliography{abbrev0,crypto,jit}
\appendix
% !TEX root=sp-main.tex

\section{Missing Proofs} \label{apdx:MissingProofs}

\begin{remindertheorem}{Theorem \ref{thm:ModelIndependentLB1}}
\thmLowerBound
\end{remindertheorem}

\begin{proofof}{Theorem \ref{thm:ModelIndependentLB1}}

We observe that a user password $pwd^t = pwd_i$ will be certainly cracked if $V\times \Pr[pwd] = V p_i \geq k$ since the marginal cost of including an extra guess $pwd$ in the dictionary is at most $k$. Thus, the adversary will compromise at least $\sum_{i: p_i \geq k/V} f_i$ accounts. The problem with this lower bound is that we need to know  $p_i= \Pr[pwd_i]$ for each password $pwd_i$ to compute it. However, the  values $p_i$ are unknown if we do not make assumptions about the shape of the password distribution. However, we can lower bound this quantity. In particular, we say that the estimate $\hat{p}_i = f_i/N$ is a {\em $(j,L)$-bad overestimate} if $p_i < \frac{1}{NL}$, but $f_i \geq j$. Let $B_i$ be an indicator random variable for the that $\hat{p}_i$ is $(j,L)$-bad. Then the sum $\sum_{i} f_i \times B_i$ computes the total number of users whose password got a $(j,L)$-bad overestimate. The proof now follows from Claims \ref{claim:ModelIndependentLB1}, \ref{claim:probabilityBadOverestimate} and \ref{claim:numberBadOverestimates}. Claim \ref{claim:ModelIndependentLB1} lower bounds the fraction of cracked passwords in terms of the events $B_i$.
\begin{claim} \label{claim:ModelIndependentLB1}
If $\frac{V}{k} \geq NL$ then the number of user passwords in our dataset that a rational adversary cracks is at least 
\[ \sum_{i: f_i\geq j} f_i - \sum_{i} f_i \times B_i \  . \]
\end{claim}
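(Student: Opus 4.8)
The plan is to run a direct counting argument that isolates the passwords the adversary is guaranteed to crack and then relates that guaranteed set to the two sums appearing in the claim. First I would recall the marginal-cost observation already established in the surrounding proof: a value-$V$ adversary certainly cracks record $pwd^t = pwd_i$ whenever $V p_i \geq k$, since the marginal cost of appending one more guess to the dictionary is at most $k$. Hence the number of cracked user records is at least $\sum_{i : p_i \geq k/V} f_i$.

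Next I would invoke the hypothesis $V/k \geq NL$, which gives $k/V \leq \frac{1}{NL}$. Consequently every password with $p_i \geq \frac{1}{NL}$ satisfies $p_i \geq k/V$ and is therefore cracked, so the number of cracked records is at least $\sum_{i : p_i \geq 1/(NL)} f_i$. This replaces the unknown threshold $k/V$ by the cleaner threshold $1/(NL)$ at the cost of only weakening the bound.

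The core step is to rewrite the first sum in the claim by splitting the index set $\{i : f_i \geq j\}$ according to the true probability. By definition, $B_i = 1$ exactly when $f_i \geq j$ and $p_i < \frac{1}{NL}$ (a $(j,L)$-bad overestimate), so the indices with $f_i \geq j$ partition into those with $p_i \geq \frac{1}{NL}$ and those with $B_i = 1$. This yields the identity
\[ \sum_{i : f_i \geq j} f_i - \sum_i f_i B_i = \sum_{i : f_i \geq j,\; p_i \geq 1/(NL)} f_i \ . \]
Since the right-hand side ranges only over indices that additionally satisfy $p_i \geq \frac{1}{NL}$, it is at most $\sum_{i : p_i \geq 1/(NL)} f_i$, which we already used as a lower bound on the cracked count. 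Chaining these inequalities proves the claim.

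I do not expect a serious obstacle here: the statement is essentially bookkeeping once the guaranteed-crack threshold is translated from $k/V$ to $1/(NL)$. The only points requiring care are (i) checking the inequality directions, so that dropping the extra constraint $f_i \geq j$ and enlarging the summation range both move the bound in the favorable direction (all $f_i \geq 0$), and (ii) confirming that $B_i$ is genuinely the indicator of the event $\{f_i \geq j\} \cap \{p_i < 1/(NL)\}$, so that the partition of $\{i : f_i \geq j\}$ is exact. The genuinely probabilistic content --- bounding $\mathbf{E}\left[\sum_i f_i B_i\right]$ --- is deferred to the companion claims and is not needed for this step.
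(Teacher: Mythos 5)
Your proposal is correct and follows essentially the same route as the paper: the paper defines $S$ (users with $f_i \geq j$) and $T \subseteq S$ (users whose password received a $(j,L)$-bad overestimate), notes that everyone in $S\setminus T$ has $p_i \geq \frac{1}{NL}$ and hence marginal reward $Vp_i \geq k \geq \max_t MC(t)$, so they are certainly cracked, and computes $|S\setminus T| = \sum_{i:f_i\geq j} f_i - \sum_i f_i B_i$ --- which is precisely your partition identity. Your write-up is if anything slightly more careful in making the set decomposition and inequality directions explicit.
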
 
\begin{proof}
Suppose that a user selects a password $pwd_i$ with $p_i \geq \frac{1}{NL}$. Since $Vp_i > k \geq \max_{t} MC(t)$ the marginal reward of guessing $pwd_i$ always exceeds the marginal cost. Thus, a rational attacker {\em must} eventually guess $pwd_i$. However, if $p_i < \frac{1}{NL}$ then either $f_i < j$ or $\hat{p}_i$ is a $(j,L)$-bad overestimate of $p_i$. Let $S$ denote the set of users who picked a password $i$ such that $f_i \geq j$ and let $T \subseteq S$ denote the set of users whose password got a $(j,L)$-bad overestimate. Any user in the set $S\backslash T$ will be compromised eventually. Thus, at least $\left|S\backslash T\right| = \sum_{i: f_i\geq j} f_i- \sum_{i} f_i \times B_i$ since  $|T|=\sum_{i} f_i \times B_i$ and  $|S|=\sum_{i: f_i\geq j} f_i$. 
\end{proof}
Claim \ref{claim:probabilityBadOverestimate} bounds the probability of the event $B_{pwd}$ --- that is the probability that we observe password $pwd_i$, with $p_i < \frac{1}{NL}$, at least $f_i \geq j$ times conditioned on the event that we observe $pwd_i$ at least once.  

\begin{claim} \label{claim:probabilityBadOverestimate}
$\Pr[B_i|f_i\ge 1] \leq \frac{1}{(j-1)!L^{j-1}}$
\end{claim}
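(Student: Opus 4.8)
The plan is to treat $f_i$ as a $\mathrm{Binomial}(N,p_i)$ random variable and bound the conditional tail probability directly. First I would dispose of the easy case: if $p_i \geq \frac{1}{NL}$ then the event $B_i$ (a $(j,L)$-bad overestimate) is impossible, since by definition $B_i$ requires $p_i < \frac{1}{NL}$; in that case $\Pr[B_i \mid f_i \geq 1] = 0$ and the claim holds trivially. So I would fix attention on $q := p_i < \frac{1}{NL}$, i.e. $Nq < \frac{1}{L}$, where $B_i$ reduces to the event $\{f_i \geq j\}$. Writing the conditional probability as a ratio of tail probabilities, $\Pr[B_i \mid f_i \geq 1] = \frac{\Pr[f_i \geq j]}{\Pr[f_i \geq 1]}$, the task splits into upper bounding the numerator and lower bounding the denominator.

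For the numerator I would use a union bound over the $\binom{N}{j}$ choices of which $j$ of the $N$ samples equal $pwd_i$, each such event having probability $q^j$: this gives $\Pr[f_i \geq j] \leq \binom{N}{j} q^j \leq \frac{(Nq)^j}{j!}$. For the denominator, $\Pr[f_i \geq 1] = 1 - (1-q)^N$, I would apply the second-order Bonferroni (truncated inclusion–exclusion) inequality over the $N$ single-success events to obtain $\Pr[f_i \geq 1] \geq Nq - \binom{N}{2}q^2 = Nq\left(1 - \frac{(N-1)q}{2}\right)$.

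Combining the two bounds yields $\Pr[B_i \mid f_i \geq 1] \leq \frac{(Nq)^{j-1}}{j!\left(1 - (N-1)q/2\right)}$. At this point I would invoke $Nq < \frac{1}{L}$ twice: once as $(Nq)^{j-1} < L^{-(j-1)}$, and once as $\frac{(N-1)q}{2} < \frac{1}{2L}$, so that the denominator factor is at least $1 - \frac{1}{2L}$. This leaves the bound $\frac{1}{(j-1)!\,L^{j-1}} \cdot \frac{1}{j\left(1 - 1/(2L)\right)}$, and it only remains to verify that the trailing factor is at most $1$. For $j \geq 2$ and $L \geq 1$ one has $j\left(1 - \frac{1}{2L}\right) \geq 2 - \frac{1}{L} \geq 1$, so the factor is $\leq 1$ and the claim follows (the case $j=1$ being vacuous, since the stated right-hand side is then $1$).

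The main obstacle is obtaining a clean enough lower bound on the denominator. A naive choice such as $\Pr[f_i \geq 1] \geq \Pr[f_i = 1] = Nq(1-q)^{N-1}$ introduces an awkward factor $(1-q)^{N-1}$, while the crude estimate $\Pr[f_i \geq 1] \geq \frac{Nq}{2}$ loses a factor of $2$ that would already break the target bound at $j=2$. Retaining the second-order Bonferroni term and exploiting $Nq < \frac{1}{L}$ is exactly what supplies the right amount of slack — through the inequality $\frac{1}{j} \leq 1 - \frac{1}{2L}$ for $j \geq 2$ — to land on the stated constant $\frac{1}{(j-1)!\,L^{j-1}}$.
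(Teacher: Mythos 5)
Your proof is correct, but it takes a genuinely different route from the paper's. The paper never forms the ratio $\Pr[f_i \geq j]/\Pr[f_i \geq 1]$: instead it observes that, conditioned on one occurrence of $pwd_i$, the event $f_i \geq j$ forces $j-1$ \emph{additional} samples to equal $pwd_i$, and a union bound over which samples those are gives
$\Pr[B_i \mid f_i \geq 1] \leq \binom{N}{j-1} p_i^{j-1} \leq \frac{N^{j-1}}{(j-1)!}\,p_i^{j-1} \leq \frac{1}{(j-1)!\,L^{j-1}}$,
using $Np_i < 1/L$ exactly once. You instead bound the numerator by $\Pr[f_i \geq j] \leq \binom{N}{j}p_i^j \leq (Np_i)^j/j!$ and lower bound the denominator by the second-order Bonferroni inequality, $\Pr[f_i \geq 1] \geq Np_i\left(1 - \tfrac{(N-1)p_i}{2}\right)$, then recover the factor of $j$ you lost in the numerator via $j\left(1 - \tfrac{1}{2L}\right) \geq 1$. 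Both arguments are valid, and your accounting (including the vacuous $j=1$ case) checks out. The paper's argument buys brevity and full generality: it holds for every $j \geq 1$ and every $L > 0$ in one stroke. Yours buys a certain self-containedness --- it manipulates only unconditional binomial tail bounds, with no conditioning on the position of a first occurrence --- but pays with the case split at $j=1$ and the extra hypothesis $L \geq 1$ (plus $L > 1/2$ for positivity of the Bonferroni factor), without which the trailing factor $\frac{1}{j(1 - 1/(2L))}$ need not be at most $1$. That hypothesis is harmless here, since the paper only ever instantiates $L = 10$, but it is worth noting that the claim as stated does not impose it.
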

\begin{proof}
We first observe that 
\[\Pr[B_i|f_i\ge 1] \leq {N \choose j-1} p_i^{j-1} \ . \]
Recall that for an event which is $(j,L)$-bad, we have that by definition, $p_i<\frac{1}{NL}$. 
Thus
\begin{align*}
{N \choose j-1} p_i^{j-1} &= \frac{N!}{(j-1)!(N-j+1)!}p_i^{j-1} \\
&\le\frac{N^{j-1}}{(j-1)!}p_i^{j-1}\le\frac{1}{(j-1)!L^{j-1}}.
\end{align*}
\end{proof}

Finally, Claim \ref{claim:numberBadOverestimates} shows that we cannot have too many bad overestimates. 

\begin{claim} \label{claim:numberBadOverestimates}
$\mathbb{E}\left[\sum_{i} f_i \times B_i\right] \leq \frac{N}{(j-1)!L^{j-1}}$
\end{claim}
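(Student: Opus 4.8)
The plan is to count the users whose password receives a $(j,L)$-bad overestimate directly, one sample at a time, exploiting the fact that the $N$ draws from $\mathcal{X}$ are i.i.d. Observe that $\sum_i f_i B_i$ is exactly the total number of users whose chosen password $pwd_i$ satisfies both $p_i < \frac{1}{NL}$ and $f_i \geq j$. Writing $f_i = \sum_{t=1}^N X_t^{(i)}$, where $X_t^{(i)} = \mathbf{1}[pwd^t = pwd_i]$, and applying linearity of expectation, $\mathbb{E}\left[\sum_i f_i B_i\right] = \sum_i \sum_{t=1}^N \Pr[X_t^{(i)} = 1, B_i]$. By symmetry of the samples this equals $N \sum_i \Pr[X_1^{(i)} = 1, B_i]$, so it suffices to bound the probability that the first user's password is $(j,L)$-bad.

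Next I would fix an index $i$ with $p_i < \frac{1}{NL}$ (indices with $p_i \geq \frac{1}{NL}$ contribute $0$, since there $B_i \equiv 0$) and write $\Pr[X_1^{(i)}=1, B_i] = p_i \cdot \Pr[f_i \geq j \mid X_1^{(i)} = 1]$. Conditioned on user $1$ having chosen $pwd_i$, the event $f_i \geq j$ requires at least $j-1$ of the remaining $N-1$ users to also choose $pwd_i$, so $\Pr[f_i \geq j \mid X_1^{(i)}=1] = \Pr[\mathrm{Bin}(N-1,p_i) \geq j-1]$. The key inequality here is the same union bound over $(j-1)$-element subsets that powers Claim~\ref{claim:probabilityBadOverestimate}, namely $\Pr[\mathrm{Bin}(N-1,p_i) \geq j-1] \leq \binom{N-1}{j-1} p_i^{j-1}$, which yields $\Pr[X_1^{(i)}=1, B_i] \leq \binom{N-1}{j-1} p_i^{j}$.

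Finally I would sum over $i$ and normalize. Since every surviving term has $p_i < \frac{1}{NL}$, I would split off one factor, $p_i^{j} = p_i^{j-1}\cdot p_i \leq \left(\frac{1}{NL}\right)^{j-1} p_i$, so that $\sum_{i:\, p_i < 1/(NL)} p_i^{j} \leq \left(\frac{1}{NL}\right)^{j-1}\sum_i p_i \leq \left(\frac{1}{NL}\right)^{j-1}$. Combining this with $\binom{N-1}{j-1} \leq \frac{N^{j-1}}{(j-1)!}$ gives $\mathbb{E}\left[\sum_i f_i B_i\right] \leq N \cdot \frac{N^{j-1}}{(j-1)!} \cdot \left(\frac{1}{NL}\right)^{j-1} = \frac{N}{(j-1)!\, L^{j-1}}$, as claimed.

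The step I expect to be the main obstacle is correctly handling the extra multiplicative factor $f_i$ appearing in $\sum_i f_i B_i$: it is tempting to invoke Claim~\ref{claim:probabilityBadOverestimate} directly, but that claim controls only the conditional probability $\Pr[B_i \mid f_i \geq 1]$ and not the frequency-weighted count. The per-user decomposition above (equivalently, the identity $\ell \binom{N}{\ell} = N\binom{N-1}{\ell-1}$) is precisely what absorbs this factor, converting the weighted tail $\mathbb{E}[f_i \mathbf{1}[f_i \geq j]]$ into $N p_i \Pr[\mathrm{Bin}(N-1,p_i) \geq j-1]$ and leaving a clean application of $\sum_i p_i \leq 1$ to finish.
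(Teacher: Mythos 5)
Your proof is correct and takes essentially the same route as the paper: expanding $f_i = \sum_t X_t^{(i)}$ and invoking symmetry is exactly the paper's re-indexing of $\sum_i f_i B_i$ as a sum of per-sample indicators $Y_t$ ("password of sample $t$ is bad yet sampled at least $j-1$ more times"), and your tail bound $\Pr\left[\mathrm{Bin}(N-1,p_i)\ge j-1\right]\le \binom{N-1}{j-1}p_i^{j-1}$ is the same union bound over $(j-1)$-subsets that the paper imports from Claim~\ref{claim:probabilityBadOverestimate}. The only cosmetic difference is bookkeeping: you carry the factor $p_i^{j}$ to the end and then split off $p_i^{j-1}\le \left(\frac{1}{NL}\right)^{j-1}$ and use $\sum_i p_i\le 1$, whereas the paper applies the $p_i<\frac{1}{NL}$ bound inside the per-sample probability first and then sums the resulting uniform bound over the $N$ samples.
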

\begin{proof}
Consider drawing $N$ samples $x_1,\ldots,x_N$ from our password distribution. Let $Y_i$ be and indicator random variable for the event that the password $x_i$ was sampled at least $j-1$ additional times even though $\Pr[x_i] \leq \frac{1}{NL}$.   Observe that
\[\sum_{i} f_i \times B_i = \sum_{i \leq N} Y_i \ . \]
By Claim \ref{claim:probabilityBadOverestimate} we have $\Pr[Y_i=1] \leq \frac{1}{(j-1)!L^{j-1}}$ for all $i \leq N$.
Thus, 

\begin{align*}
\mathbb{E}\left[\sum_{i} f_i \times B_i\right]  &= \mathbb{E}\left[\sum_{i \leq N} Y_i \right]\\
& \leq  N \max_{i} \Pr[Y_i=1] \\
&\le\frac{N}{(j-1)!L^{j-1}} \ .
\end{align*}

%Thus the statement follows immediately from Claim \ref{claim:probabilityBadOverestimate} by substituting $\frac{1}{(j-1)!L^{j-1}}$ for $\mathbb{E}\left[ B_i\right]$ in the above sum.
\end{proof}

\end{proofof}

\begin{remindertheorem}{Theorem \ref{thm:ModelIndependentUB1}}
\themModelIndepUpperBound
\end{remindertheorem}

\begin{proofof}{Theorem \ref{thm:ModelIndependentUB1}}
Given $N$ independent samples $pwd^1,\ldots, pwd^N \leftarrow \mathcal{X}$ we use $\textsf{Pop}_t = \{pwd_1,\ldots,pwd_t\}$ to denote the $t$ most common passwords from these samples and let 
 $X_i$ be an indicator variable for the event $ pwd^i \in \textsf{Pop}_t$. 

\begin{claim}\label{claim:first}
\[\sum_{i=1}^N X_i\le \sum_{i=1}^t f_i.\]
\end{claim}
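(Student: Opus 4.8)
The plan is to recognize the left-hand side as a purely combinatorial counting quantity and identify it with the total frequency of the $t$ most popular passwords. By definition $X_j$ is the indicator of the event $pwd^j \in \textsf{Pop}_t$, where $\textsf{Pop}_t = \{pwd_1,\ldots,pwd_t\}$ is the set of the $t$ empirically most frequent passwords. Hence $\sum_{j=1}^N X_j$ is exactly the number of the $N$ draws that land on one of these $t$ passwords. The goal is to show this count never exceeds $\sum_{i=1}^t f_i$.

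First I would partition the contributing draws according to which popular password they equal. A draw $pwd^j$ contributes to $\sum_{j=1}^N X_j$ precisely when $pwd^j = pwd_i$ for some $i \le t$, and by the definition of $f_i$ the number of draws equal to $pwd_i$ is exactly $f_i$. Summing over the $t$ passwords in $\textsf{Pop}_t$ then gives $\sum_{j=1}^N X_j = \sum_{i=1}^t f_i$, which immediately yields the claimed inequality, in fact with equality. The argument is a double-counting identity that holds pointwise for every realization of the samples, so no probabilistic machinery is needed at this stage.

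The only point requiring a word of care is the degenerate situation in which the dataset contains fewer than $t$ distinct passwords, or in which there are ties at rank $t$ so that $\textsf{Pop}_t$ is not uniquely pinned down. In the former case the missing frequencies satisfy $f_i = 0$ and the identity is unaffected; in the latter case any admissible choice of $\textsf{Pop}_t$ consists of passwords attaining the $t$ largest frequencies, so the grouped sum still equals $\sum_{i=1}^t f_i$. I therefore do not anticipate any real obstacle here: the statement is a deterministic identity whose purpose is simply to express $\sum_{j=1}^N X_j$ in terms of the observable frequencies, setting up the Chernoff concentration bound on this quantity that drives the remainder of the upper-bound proof.
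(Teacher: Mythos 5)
Your proof takes the definition of $\textsf{Pop}_t$ at face value --- the $t$ \emph{empirically} most frequent passwords --- and under that reading your double-counting identity $\sum_{j=1}^N X_j = \sum_{i=1}^t f_i$ is of course correct. But that reading cannot be the intended one, and this is where the gap lies. The claim exists to feed into Claim~\ref{claim:chernoff}, which applies a Chernoff bound to $\sum_{i=1}^N X_i$ treating the $X_i$ as independent indicators with $\Pr[X_i=1]=\sum_{i=1}^{t} p_i$. That is only legitimate if $\textsf{Pop}_t$ is a \emph{fixed, sample-independent} set of $t$ passwords --- namely the attacker's first $t$ guesses, i.e.\ the $t$ passwords of largest \emph{true} probability (recall that in this section the empirical-frequency indexing no longer guarantees $p_i \geq p_{i+1}$). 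If $\textsf{Pop}_t$ were instead the random set of empirically most frequent passwords, as in your proof, the $X_i$ would be neither independent nor have success probability $\sum_{i=1}^t p_i$, and the concentration step (and hence the theorem) would collapse. So your interpretation makes Claim~\ref{claim:first} trivially true at the cost of making it useless for the proof it serves.

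Under the correct interpretation the claim is a genuine inequality, not an identity, and the idea your proof is missing is a rearrangement argument: the $t$ passwords of $\textsf{Pop}_t$ occupy \emph{some} empirical ranks $i_1 > \cdots > i_t$, which need not be $1,\ldots,t$; counting hits gives $\sum_{j=1}^N X_j = \sum_{j=1}^t f_{i_j}$, and since the empirical frequencies are sorted ($f_1 \geq f_2 \geq \cdots$), the sum of any $t$ distinct frequencies is at most the sum of the $t$ largest, so $\sum_{j=1}^t f_{i_j} \leq \sum_{j=1}^t f_j$. This is exactly the paper's proof, and the inequality is strict precisely when the true ranking and the empirical ranking disagree on the top $t$. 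In fairness, the paper's own phrase ``the $t$ most common passwords from these samples'' invites your reading, but the role the claim plays --- bounding the hit count of the attacker's true-probability-ordered guess set by an observable quantity --- forces the other one.
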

\begin{proof}
Let $pwd_1,\ldots, pwd_t, \ldots$ denote the list of observed passwords ordered by observed frequency. Let $i_1 > \ldots >i_t$ be given such that $\textsf{Pop}_t = \{pwd_{i_1},\ldots, pwd_{i_t}\}$. Now we have \[\sum_{j=1}^N X_j = \sum_{j=1}^t f_{i_j} \leq \sum_{j=1}^t f_j \ . \]

\end{proof}

\jnote{This claim will probably move to the appendix, but we should quantify the probability in terms of $\epsilon$ and $N$ in the claim itself. Note that we will typically have $\sum_{i=1}^t p_i \geq 0.1$ so we could set $\epsilon = 0.001$ and the result still holds except with probability $0.03$. We should say something like this in the text...}. 
\begin{claim}\label{claim:chernoff}
We have 
\[\sum_{i=1}^{t} p_i\le\frac{(1+\eps)}{N}\sum_{i=1}^N X_i \]
except with probability 
\[\PPr{\sum_{i=1}^N X_i\le\frac{N}{1+\eps}\sum_{i=1}^t p_i}\le\exp\left(-\frac{\eps^2N\sum_{i=1}^t p_i}{2(1+\eps)^2}\right).\]
\end{claim}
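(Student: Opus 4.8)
The plan is to read the stated inequality as a single multiplicative Chernoff lower-tail bound on the counting variable $\sum_{i=1}^N X_i$. Each $X_i$ indicates that the $i$-th sample falls in the block $\textsf{Pop}_t=\{pwd_1,\dots,pwd_t\}$, whose total probability mass is $\sum_{j=1}^t p_j$, so the natural mean is $\mu = N\sum_{i=1}^t p_i$. First I would rewrite the event of interest in standard deviation form: setting $\delta=\frac{\eps}{1+\eps}$ gives $1-\delta=\frac{1}{1+\eps}$ and hence $(1-\delta)\mu=\frac{N}{1+\eps}\sum_{i=1}^t p_i$, so the event $\sum_{i=1}^N X_i\le\frac{N}{1+\eps}\sum_{i=1}^t p_i$ is precisely $\sum_i X_i\le(1-\delta)\mu$. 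The ``good'' conclusion $\sum_{i=1}^t p_i\le\frac{1+\eps}{N}\sum_{i=1}^N X_i$ is just the complementary event rearranged, so everything reduces to bounding this one lower tail.

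Granting (conditional) independence, the bound is immediate from the textbook estimate $\Pr\!\left[\sum_i X_i\le(1-\delta)\mu\right]\le\exp(-\mu\delta^2/2)$. Substituting $\delta^2=\eps^2/(1+\eps)^2$ and $\mu=N\sum_{i=1}^t p_i$ yields
\[
\exp\!\left(-\tfrac{\mu\delta^2}{2}\right)=\exp\!\left(-\frac{\eps^2 N\sum_{i=1}^t p_i}{2(1+\eps)^2}\right),
\]
which is exactly the right-hand side of the claim. So modulo the independence issue this is a purely mechanical plug-in, and no further work is required.

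The one genuinely non-routine point --- and the step I expect to be the main obstacle --- is that the $X_i$ are \emph{not} independent: the block $\textsf{Pop}_t$ is itself the empirically most frequent set and therefore a function of all $N$ samples, so whether $pwd^i$ lands in it depends on the other draws. To handle this cleanly I would not apply Chernoff to the random set, but instead pass to a fixed reference set. Let $S^\star$ be the $t$ passwords of largest \emph{true} probability and let $Y_i$ be the indicator that $pwd^i\in S^\star$; these are honestly i.i.d.\ Bernoulli with mean $p(S^\star)=\sum_{pwd\in S^\star}\Pr[pwd]$, so the Chernoff computation above applies to $\sum_i Y_i$ without qualification. Two extremal facts then bridge the gap: since $\textsf{Pop}_t$ maximizes the observed frequency we have $\sum_i X_i=N\hat\lambda_t\ge\sum_i Y_i$, and since $S^\star$ maximizes the true mass we have $p(S^\star)\ge\sum_{j=1}^t p_j$. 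The first inclusion gives $\Pr[\sum_i X_i\le c]\le\Pr[\sum_i Y_i\le c]$ for every threshold $c$, and applying this at $c=\frac{N}{1+\eps}\sum_{j=1}^t p_j\le\frac{N}{1+\eps}p(S^\star)$ reduces the claim to the i.i.d.\ lower tail for $\sum_i Y_i$; the second fact ($p(S^\star)\ge\sum_j p_j$) ensures the resulting exponent only improves. With this reduction in place the remaining estimate is the routine Chernoff calculation already displayed.
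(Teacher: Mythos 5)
Your proposal is correct, and its computational core is exactly the paper's proof: identify the mean $\mu = N\sum_{i=1}^t p_i$, set $\delta = \eps/(1+\eps)$ so that $(1-\delta)\mu = \frac{N}{1+\eps}\sum_{i=1}^t p_i$, and invoke the multiplicative lower-tail bound $\PPr{\sum_i X_i \le (1-\delta)\mu}\le\exp\left(-\mu\delta^2/2\right)$. The paper's proof consists of precisely this plug-in and nothing more: it asserts $\PPr{X_i=1}=\sum_{i=1}^{t} p_i$, concludes $\EEx{\sum_{i=1}^N X_i}=N\sum_{i=1}^{t} p_i$, and "applies Chernoff bounds," silently treating the $X_i$ as i.i.d. Where you genuinely go beyond the paper is the decoupling step. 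Your worry is legitimate: with $\textsf{Pop}_t$ defined in the text as the $t$ \emph{empirically} most frequent passwords, the $X_i$ are functions of the entire sample, they are not independent, and $\sum_{i=1}^t p_i$ is itself a random quantity, so the paper's one-line Chernoff application is not literally valid. Your fix --- passing to the fixed set $S^\star$ of the $t$ truly most likely passwords, applying Chernoff to the honestly i.i.d.\ indicators $Y_i$, and bridging via the two pointwise dominations $\sum_i X_i \ge \sum_i Y_i$ and $p(S^\star) \ge \sum_{j=1}^t p_j$ --- is sound, yields the stated bound (with an exponent that can only improve), and works under either reading of $\textsf{Pop}_t$. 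It is worth noting that the paper itself seems to oscillate between the two readings: the proof of Claim~\ref{claim:first}, which distinguishes indices $i_1 > \cdots > i_t$ from $1,\ldots,t$, only makes sense if $\textsf{Pop}_t$ is a fixed reference set rather than the empirical top-$t$, and under that fixed-set reading the paper's i.i.d.\ Chernoff step is fine. In short: same key inequality, but your version supplies the stochastic-dominance reduction that makes the argument rigorous, at the cost of a slightly longer proof.
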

\begin{proof}
Since $\PPr{X_i=1}=\sum_{i=1}^{t} p_i$, then $\EEx{\sum_{i=1}^N X_i}=N\sum_{i=1}^{t} p_i$. 
Then applying Chernoff bounds,
\[\PPr{\sum_{i=1}^N X_i\le\frac{N}{1+\eps}\sum_{i=1}^t p_i}\le\exp\left(-\frac{\eps^2N\sum_{i=1}^t p_i}{2(1+\eps)^2}\right).\]
\end{proof}

\begin{claim}
With high probability,
\[MC(t)\ge\left(1-\frac{1+\eps}{N}\sum_{i=1}^t f_i\right)k.\]
\end{claim}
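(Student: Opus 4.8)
The plan is to read off the claim directly from the marginal-cost identity in Equation~(\ref{eq:mc}) together with the two claims just established. Recall that the marginal cost of raising the threshold from $t-1$ to $t$ is
\[ MC(t) = k\left(1-\sum_{j=1}^{t-1} p_j\right), \]
so proving the claim reduces to producing a high-probability \emph{upper} bound on the true probability mass $\sum_{j=1}^{t-1} p_j$ in terms of the observed frequencies $f_i$. Everything needed for this is already packaged in Claims~\ref{claim:first} and~\ref{claim:chernoff}.

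First I would dispose of the off-by-one: since $p_t \ge 0$ we have $\sum_{j=1}^{t-1} p_j \le \sum_{j=1}^{t} p_j$, so it suffices to bound the top-$t$ mass $\sum_{j=1}^t p_j$ from above. This is precisely what the chain of the two claims delivers. Applying Claim~\ref{claim:chernoff}, on the complement of the low-probability event $\exp\!\left(-\eps^2 N \sum_{i=1}^t p_i/(2(1+\eps)^2)\right)$ we obtain $\sum_{i=1}^t p_i \le \tfrac{1+\eps}{N}\sum_{i=1}^N X_i$; then Claim~\ref{claim:first} replaces the random count $\sum_{i=1}^N X_i$ by the deterministic quantity $\sum_{i=1}^t f_i$, yielding $\sum_{i=1}^t p_i \le \tfrac{1+\eps}{N}\sum_{i=1}^t f_i$ on that same high-probability event.

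Stringing these inequalities together gives $\sum_{j=1}^{t-1} p_j \le \tfrac{1+\eps}{N}\sum_{i=1}^t f_i$ with high probability, and substituting into the marginal-cost identity produces
\[ MC(t) = k\left(1-\sum_{j=1}^{t-1}p_j\right) \ge k\left(1-\frac{1+\eps}{N}\sum_{i=1}^t f_i\right), \]
which is exactly the stated bound.

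I do not expect a genuine obstacle here; the only thing to watch is bookkeeping about inequality directions. Each step must loosen in the direction of an \emph{upper} bound on the probability sum, and in particular replacing $\sum_{j=1}^{t-1}p_j$ by $\sum_{j=1}^{t}p_j$ is harmless precisely because we are after a \emph{lower} bound on $MC(t)$. The ``with high probability'' qualifier is inherited verbatim from Claim~\ref{claim:chernoff}, so no fresh concentration argument is required; one only needs to note that Claim~\ref{claim:first} holds deterministically, so it does not degrade the failure probability.
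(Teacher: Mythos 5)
Your proof is correct and follows essentially the same route as the paper's: combine Claims~\ref{claim:first} and~\ref{claim:chernoff} to get $\sum_{i=1}^t p_i \le \frac{1+\eps}{N}\sum_{i=1}^t f_i$ with high probability, then use the identity $MC(t)=\left(1-\sum_{i=1}^{t-1}p_i\right)k$ together with $p_t \ge 0$. Your version merely spells out the bookkeeping (the off-by-one and the fact that Claim~\ref{claim:first} is deterministic) that the paper leaves implicit.
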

\begin{proof}
By Claims~\ref{claim:first} and \ref{claim:chernoff},
\[\sum_{i=1}^t p_i\le\frac{1+\eps}{N}\sum_{i=1}^t f_i.\]
The proof follows from the observation that $MC(t)=\left(1-\sum_{i=1}^{t-1}p_i\right)k$ and $p_t\ge 0$.
\end{proof}

Now, we define $\hat{p}_i = f_i / N$ as a $(j, L)$-\textit{bad underestimate} if $p_i > \frac{1}{NL}$, but $f_i\le j$. 
Then define $C_i$ as the indicator variable for the event that $\hat{p}_i$ is a $(j,L)$-bad underestimate and $f_i\ge 1$.

\begin{claim} \label{claim:ModelIndependentUB1}
If $\frac{V}{k} \leq NL\left(1-\frac{1+\eps}{N}\sum_{i=1}^t f_i\right)$ then the number of user passwords in our dataset that a rational adversary cracks is at most 
\[ \sum_{i: f_i\geq j} f_i + \sum_{i} f_i \times C_i \  . \]
\end{claim}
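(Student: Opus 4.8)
The plan is to mirror the lower-bound argument of Theorem~\ref{thm:ModelIndependentLB1} in a dual fashion. There a password was \emph{certainly} cracked once its true probability was large enough that marginal reward dominated the (uniformly bounded) marginal cost; here I want to show a password is \emph{certainly not} cracked once its true probability is small enough that marginal reward falls below marginal cost. First I would combine the hypothesis with the preceding claim bounding $MC(t)$. That claim gives, on a high-probability event, $MC(t)\ge\left(1-\frac{1+\eps}{N}\sum_{i=1}^t f_i\right)k$; together with the assumed bound $\frac{V}{k}\le NL\left(1-\frac{1+\eps}{N}\sum_{i=1}^t f_i\right)$ this yields $V\le NL\cdot MC(t)$, i.e. $MC(t)\ge \frac{V}{NL}$, on the same event. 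All subsequent reasoning is conditioned on this event, so the statement inherits the failure probability $\exp\left(-\frac{\eps^2N\sum_{i=1}^t p_i}{2(1+\eps)^2}\right)$ supplied by Claim~\ref{claim:chernoff}.

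Next I would isolate the passwords that the rational attacker cannot profitably guess. Any password $pwd_i$ with true probability $p_i\le\frac{1}{NL}$ has marginal reward $MR=V p_i\le \frac{V}{NL}\le MC(t)$. Since the marginal cost $MC(s)=k\left(1-\lambda_{s-1}\right)$ is non-increasing in the number of guesses $s$, I would argue that expanding the guessing list to include such a password never increases the attacker's utility, so a rational attacker leaves every $pwd_i$ with $p_i\le\frac{1}{NL}$ uncracked. Consequently every cracked password must satisfy $p_i>\frac{1}{NL}$.

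It then remains to count the passwords with $p_i>\frac{1}{NL}$ using only the observable frequencies, exactly as in the lower bound. I would split them according to $f_i$: passwords with $f_i\ge j$ contribute at most $\sum_{i:f_i\ge j} f_i$ users and are charged to the first term. A cracked password with $f_i< j$ must then have $p_i>\frac{1}{NL}$ while $f_i\le j$, which is precisely the definition of a $(j,L)$-bad underestimate, so it is counted by $C_i$ and contributes at most $\sum_i f_i\times C_i$. Adding the two groups gives the claimed upper bound $\sum_{i:f_i\ge j} f_i+\sum_i f_i\times C_i$.

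The main obstacle is the second step, and it is the genuinely delicate point. Unlike the lower bound, where $MC\le k$ supplies a \emph{uniform} upper bound on marginal cost valid at every guess, here the marginal cost decays toward $0$ as the attacker guesses more, so the inequality $MR\le MC$ can be asserted cleanly only at the chosen threshold $t$ and not for guesses made far beyond it. Making the ``never profitable to include'' argument rigorous therefore requires care about the order in which guesses are made (true-probability order for the attacker versus observed-frequency order for the indices $f_i$) and about controlling the residual probability mass beyond threshold $t$. This is exactly what forces the hypothesis to be phrased in terms of the cumulative observed frequency $\sum_{i=1}^t f_i$ and what introduces the Chernoff failure probability, rather than yielding a clean deterministic statement.
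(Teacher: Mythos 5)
Your proposal retraces the paper's own proof of Claim~\ref{claim:ModelIndependentUB1}: condition on the high-probability event of Claim~\ref{claim:chernoff} so that $MC(t)\ge\left(1-\frac{1+\eps}{N}\sum_{i=1}^t f_i\right)k\ge \frac{V}{NL}$; argue that no password with $p_i\le\frac{1}{NL}$ is ever guessed; then count what remains by splitting into $\{i: f_i\ge j\}$ and the $(j,L)$-bad underestimates indicated by $C_i$. Your third paragraph is exactly the paper's counting step ($|S\cup T|\le \sum_{i:f_i> j}f_i+\sum_i f_i\times C_i$), so up to that point the two arguments coincide.

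The genuine problem is the step you yourself flag as delicate, and you are right that it is the crux: neither your proposal nor the paper closes it. The paper disposes of it in one sentence --- ``Since $Vp_i<\left(1-\frac{1+\eps}{N}\sum_{i=1}^t f_i\right)k\le MC(t)$ the marginal reward of guessing $pwd_i$ never exceeds the marginal cost'' --- but this compares $Vp_i$ against the marginal cost at the analyst's \emph{fixed} index $t$, not at the position where $pwd_i$ actually sits in the attacker's probability-ordered guessing sequence. Since $MC(s)=k(1-\lambda_{s-1})$ decays toward $0$ as the attacker exhausts probability mass, the inequality at $t$ says nothing about later positions; and your proposed fix (expanding the list never helps \emph{because} $MC$ is non-increasing) points the wrong way, since monotonicity of $MC$ is precisely what makes late guesses cheap. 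Moreover the utility increments $Vp_s-MC(s)$ are not monotone, so the optimal prefix can contain locally unprofitable guesses that are subsidized by later profitable ones. The obstacle is not cosmetic: take one password of probability $\frac{1}{2}$ and $M\approx 0.55\,NL$ passwords of probability $\frac{1}{2M}<\frac{1}{NL}$ each. A direct computation shows that guessing all $M+1$ passwords beats stopping after the first guess by $\frac{V}{2}-\frac{k(M+1)}{4}$, so the attacker enumerates the entire tail whenever $\frac{V}{k}>\frac{M+1}{2}\approx 0.275\,NL$, while the hypothesis of the claim (with $t=1$, $\eps=0.1$, $j=2$) permits $\frac{V}{k}$ up to roughly $0.45\,NL$ with high probability. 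For $\frac{V}{k}=0.4\,NL$ the attacker therefore cracks all $N$ users, yet every tail password has $C_i=0$ and the claimed bound evaluates to roughly $0.55\,N$. So a pointwise marginal comparison cannot establish the claim, and an aggregate suffix-sum comparison over the block of low-probability guesses cannot rescue it from the stated hypothesis either. In short: your reconstruction is faithful to the paper, and the gap you identified is real --- it is a gap in the paper's own proof, which asserts rather than resolves exactly the step you were uneasy about.
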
 
\begin{proof}
Suppose that a user selects a password $pwd_i$ with $p_i \leq \frac{1}{NL}$. 
Since $Vp_i < \left(1-\frac{1+\eps}{N}\sum_{i=1}^t f_i\right)k \leq MC(t)$ the marginal reward of guessing $pwd_i$ never exceeds the marginal cost. 
Thus, a rational attacker {\em never} chooses to guess $pwd_i$. 
If $p_i > \frac{1}{NL}$ then either $f_i > j$ or $\hat{p}_i$ is a $(j,L)$-bad underestimate of $p_i$. 
Let $S$ denote the set of users who picked a password $i$ such that $f_i > j$ and let $T \subseteq S$ denote the set of users whose password got a $(j,L)$-bad underestimate. \jnote{$T$ is not a subset of $S$ though the derivation below looks correct to me}
Only the users in the set $S\cup T$ may be compromised eventually.
Thus, at most $\left|S\cup T\right| \leq \sum_{i: f_i > j} f_i + \sum_{i} f_i \times C_i$ since  $|T|=\sum_{i: f_i \le j} f_i \times C_i$ and  $|S|=\sum_{i: f_i> j} f_i$. 
\end{proof}
Then the following immediately holds, noting that there can be at most $NL$ passwords which are $(j,L)$-bad underestimates:
\begin{corollary}
If $\frac{V}{k} \leq NL\left(1-\frac{1+\eps}{N}\sum_{i=1}^t f_i\right)$ then the number of user passwords in our dataset that a rational adversary cracks is at most 
\[ \sum_{i: f_i\geq j} f_i + \sum_{i}^{i+NL} f_i\  . \]
\end{corollary}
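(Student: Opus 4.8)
This is the mirror image of the lower bound (Theorem~\ref{thm:ModelIndependentLB1}): there a password is guaranteed cracked once its marginal reward $Vp_i$ beats the worst-case marginal cost, whereas here I certify that a password is \emph{never} worth guessing because its marginal reward stays below the marginal cost. The plan is to (i) show a rational attacker only guesses passwords whose true probability exceeds $\frac1{NL}$; (ii) split the cracked passwords into the frequently observed ones ($f_i>j$, contributing $\sum_{i:f_i>j}f_i$) and the ``$(j,L)$-bad underestimates'' (true probability above $\frac1{NL}$ but observed at most $j$ times); and (iii) bound the number of bad underestimates by $\mu(N,L,j)$. The one real wrinkle relative to the lower bound is that the marginal cost $MC(t)=k\bigl(1-\sum_{i<t}p_i\bigr)$ involves the \emph{unknown} $p_i$, so I must first rewrite it using the observed $f_i$; this rewriting is exactly what produces the ``except with probability'' clause.

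\textbf{Step 1: controlling $MC(t)$ via Chernoff.} For the fixed $t$ in the hypothesis, let $X_m$ indicate whether sample $pwd^m$ lies among the $t$ most probable passwords, so that $\PPr{X_m=1}=\sum_{i=1}^t p_i$ and $\EEx{\sum_{m=1}^N X_m}=N\sum_{i=1}^t p_i$. I would first record the deterministic bound $\sum_{m=1}^N X_m\le\sum_{i=1}^t f_i$, which holds because the $t$ largest observed frequencies dominate the total frequency of any fixed set of $t$ passwords. A lower-tail Chernoff bound then gives
\[
\sum_{i=1}^t p_i\le\frac{1+\eps}{N}\sum_{m=1}^N X_m\le\frac{1+\eps}{N}\sum_{i=1}^t f_i
\]
except with probability $\exp\!\left(-\frac{\eps^2N\sum_{i=1}^t p_i}{2(1+\eps)^2}\right)$, which is the stated failure probability. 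On this good event, using~\eqref{eq:mc} and $p_t\ge 0$,
\[
MC(t)=k\Bigl(1-\sum_{i=1}^{t-1}p_i\Bigr)\ge k\Bigl(1-\sum_{i=1}^{t}p_i\Bigr)\ge\Bigl(1-\tfrac{1+\eps}{N}\sum_{i=1}^t f_i\Bigr)k.
\]

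\textbf{Step 2: which passwords are never cracked.} Using the hypothesis $\frac{V}{k}\le NL\bigl(1-\frac{1+\eps}{N}\sum_{i=1}^t f_i\bigr)$, any password with $p_i\le\frac1{NL}$ has marginal reward $Vp_i\le\frac{V}{NL}\le\bigl(1-\frac{1+\eps}{N}\sum_{i=1}^t f_i\bigr)k\le MC(t)$ on the good event, so its marginal reward never exceeds the marginal cost and a rational attacker leaves it uncracked. Hence every cracked password satisfies $p_i>\frac1{NL}$. Splitting these, the ones with $f_i>j$ contribute at most $\sum_{i:f_i>j}f_i$, while the rest are the bad underestimates, whose total user count is $\sum_i f_i\,C_i$, where $C_i$ indicates that $p_i>\frac1{NL}$ yet $f_i\le j$. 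It remains to bound $\sum_i f_i\,C_i$ by $\mu(N,L,j)$.

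\textbf{Step 3: bounding the bad underestimates (the crux).} This is where I expect the real work, and it is what generates $\mu$. I would count bad underestimates user-by-user: conditioned on a user having drawn $pwd_i$, the total number of copies of $pwd_i$ equals $1+\mathrm{Bin}(N-1,p_i)$, so $pwd_i$ is observed at most $j$ times precisely when the remaining $N-1$ draws produce at most $j-1$ further copies, an event of probability $\sum_{\ell=0}^{j-1}\binom{N-1}{\ell}p_i^{\ell}(1-p_i)^{N-1-\ell}$. The key observation is that this binomial lower tail is monotonically \emph{decreasing} in $p_i$, so for a bad underestimate (where $p_i>\frac1{NL}$) it is bounded by the worst case $p_i=\frac1{NL}$, namely $\sum_{\ell=0}^{j-1}\binom{N-1}{\ell}\bigl(\frac1{NL}\bigr)^{\ell}\bigl(\frac{NL-1}{NL}\bigr)^{N-\ell-1}$. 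Summing this common per-user bound over the candidate users --- those whose password was observed between $1$ and $j$ times, i.e.\ weighting by $\sum_{i:0<f_i\le j}f_i$ --- yields exactly $\mu(N,L,j)$; adding the $\sum_{i:f_i>j}f_i$ term from Step~2 then completes the proof. The principal obstacles are treating the conditioning on $f_i\ge1$ correctly and justifying the monotonicity step that replaces the unknown $p_i$ by its worst-case value $\frac1{NL}$, after which the remaining binomial-tail expression is already the one displayed in the statement.
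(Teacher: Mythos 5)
Your Steps 1 and 2 faithfully reproduce the paper's route to the claim that underlies this corollary: the deterministic domination $\sum_m X_m \le \sum_{i=1}^t f_i$, the lower-tail Chernoff bound with exactly the stated failure probability, the resulting bound $MC(t)\ge\bigl(1-\frac{1+\eps}{N}\sum_{i=1}^t f_i\bigr)k$, and the conclusion that on the good event a rational attacker never guesses any password with $p_i\le\frac{1}{NL}$, so every cracked password either has $f_i>j$ or is a $(j,L)$-bad underestimate (this is Claim~\ref{claim:ModelIndependentUB1}). But your Step 3 then proves the wrong statement. The binomial-tail computation with the monotonicity-in-$p_i$ argument is precisely the paper's proof of Theorem~\ref{thm:ModelIndependentUB1} (its Claims~\ref{claim:probabilityBadUnderestimate} and~\ref{claim:numberBadUnderestimates}), and it delivers the bound $\mu(N,L,j)$ --- not the corollary's second term $\sum_{i}^{i+NL} f_i$. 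These are genuinely different quantities: your per-user tail bound, summed over users, controls only the \emph{expected} number of bad-underestimate users, so it cannot simply be added to $\sum_{i:f_i>j}f_i$ as a pointwise bound on the good event without a further Markov-type step (incurring yet another failure probability), and in any case $\mu(N,L,j)$ neither equals nor obviously dominates the deterministic term in the corollary.

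What the corollary actually needs is a one-line pigeonhole observation that your proposal never states, and which is the entirety of the paper's proof: a $(j,L)$-bad underestimate requires $p_i>\frac{1}{NL}$, and since $\sum_i p_i=1$ there can be at most $NL$ such passwords. Hence the bad-underestimate users contribute at most the sum of $f_i$ over at most $NL$ indices --- which is exactly the (admittedly sloppily typeset) term $\sum_{i}^{i+NL} f_i$, and since each such password has $f_i\le j$ this mass is at most $NLj$. With that observation the corollary follows deterministically from the splitting you already established at the end of your Step 2; no binomial tails, no expectations, no extra randomness beyond the Chernoff event. So the gap is concrete: you are missing the counting step, and the probabilistic machinery you substituted for it proves the theorem rather than the corollary.
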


\begin{claim} \label{claim:probabilityBadUnderestimate}
\[\Pr\left[C_i~\vline~ f_i \geq 1 \right]\le\sum_{\ell=0}^{j-1}\binom{N-1}{\ell} \left( \frac{1}{NL}\right)^\ell\left(\frac{NL-1}{NL}\right)^{N-\ell-1}\]
\end{claim}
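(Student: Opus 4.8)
The plan is to reduce the claim to a statement about a binomial lower tail and then invoke monotonicity of that tail in the success probability. Recall that $f_i$, the number of the $N$ independent samples that equal $pwd_i$, is distributed as $\mathrm{Bin}(N,p_i)$, and that $C_i=1$ precisely when $p_i > \frac{1}{NL}$ and $1 \le f_i \le j$. Hence, working throughout conditioned on $f_i \ge 1$, the event $C_i$ is exactly the event $f_i \le j$ (the inequality $p_i > \frac{1}{NL}$ being a fixed property of the distribution rather than a random event).

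First I would condition on the guaranteed occurrence. Given $f_i \ge 1$, designate one sample that equals $pwd_i$; the event $f_i \le j$ then forces the remaining $N-1$ samples to contain at most $j-1$ further copies of $pwd_i$. Since those $N-1$ samples are independent and each equals $pwd_i$ with probability $p_i$, the number of additional copies is distributed as $\mathrm{Bin}(N-1,p_i)$, so
\[
\Pr[C_i \mid f_i \ge 1] \;\le\; \Pr\bigl[\mathrm{Bin}(N-1,p_i) \le j-1\bigr] \;=\; \sum_{\ell=0}^{j-1}\binom{N-1}{\ell} p_i^{\ell}(1-p_i)^{N-1-\ell}.
\]
Next I would eliminate the unknown $p_i$ using the defining inequality $p_i > \frac{1}{NL}$ for a bad underestimate. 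The key fact is that the lower tail of a binomial is nonincreasing in its success probability: for $q < p_i$ one couples the two experiments by drawing a single uniform $U_m \in [0,1]$ for each of the $N-1$ trials and declaring a success under parameter $p_i$ (resp. $q$) whenever $U_m \le p_i$ (resp. $U_m \le q$). Because $\{U_m \le q\} \subseteq \{U_m \le p_i\}$, the count under $q$ never exceeds the count under $p_i$, so $\mathrm{Bin}(N-1,q)$ is stochastically dominated by $\mathrm{Bin}(N-1,p_i)$ and therefore $\Pr[\mathrm{Bin}(N-1,p_i)\le j-1] \le \Pr[\mathrm{Bin}(N-1,q)\le j-1]$. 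Applying this with $q = \frac{1}{NL}$ and noting $1-\frac{1}{NL} = \frac{NL-1}{NL}$ yields
\[
\Pr[C_i \mid f_i \ge 1] \;\le\; \sum_{\ell=0}^{j-1}\binom{N-1}{\ell}\left(\frac{1}{NL}\right)^{\ell}\left(\frac{NL-1}{NL}\right)^{N-\ell-1},
\]
which is exactly the claimed bound. This mirrors the single-occurrence conditioning already used for the overestimate in Claim~\ref{claim:probabilityBadOverestimate}, except that here the full binomial lower tail replaces the one-term estimate $\binom{N}{j-1}p_i^{j-1}$.

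I expect the monotonicity step to be the only genuinely delicate point: the reduction in the second paragraph is essentially bookkeeping, but the passage from the unknown $p_i$ to the explicit parameter $\frac{1}{NL}$ relies on the direction of the inequality $p_i > \frac{1}{NL}$ lining up with the direction in which the lower tail moves, and I would spell out the coupling above rather than differentiate the tail in $p$ so that the monotonicity is transparent. I would also be careful in the second paragraph to stress that it is the act of conditioning on $f_i \ge 1$ and then exposing one fixed occurrence that licenses treating the remaining $N-1$ draws as an independent $\mathrm{Bin}(N-1,p_i)$; this is the step where the conditioning must be handled honestly, and it is worth a sentence of justification even though the resulting inequality is the intuitive one.
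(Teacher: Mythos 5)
Your second step is where the proof breaks. You explicitly interpret the conditioning as conditioning on the event $\{f_i \ge 1\}$, and then claim that ``designating one occurrence'' yields
\[
\Pr[C_i \mid f_i \ge 1] \;\le\; \Pr\bigl[\mathrm{Bin}(N-1,p_i) \le j-1\bigr].
\]
This inequality is false in general; it actually goes the other way. Take $N=2$, $j=1$, $p_i = 1/2$: then
\[
\Pr[f_i \le 1 \mid f_i \ge 1] \;=\; \frac{2p_i(1-p_i)}{1-(1-p_i)^2} \;=\; \frac{2}{3},
\qquad
\Pr\bigl[\mathrm{Bin}(1,\tfrac12) \le 0\bigr] \;=\; \frac{1}{2}.
\]
The culprit is exactly the conditioning fallacy you flagged: conditioning on the \emph{event} $\{f_i\ge 1\}$ is not the same as conditioning on a fixed sample equalling $pwd_i$. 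The law of $f_i$ given $\{f_i \geq 1\}$ is stochastically \emph{smaller} than the size-biased law $1+\mathrm{Bin}(N-1,p_i)$ (the latter is what ``expose one occurrence, the rest are independent'' actually describes), so its lower tail is \emph{larger}, not smaller. Nor can you pick a designated occurrence that leaves the other $N-1$ samples unconditioned: designating, say, the first occurrence forces the earlier samples to avoid $pwd_i$. So the step you said ``is worth a sentence of justification even though the resulting inequality is the intuitive one'' is not merely unjustified --- the intuitive inequality is reversed.

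The repair is to read the conditioning the way the paper does, which is also the reading that Theorem \ref{thm:ModelIndependentUB1} needs (the claim is applied with weight $f_i$, i.e.\ summed over users): condition on a \emph{fixed} user $t$ having chosen $pwd_i$. Then the number of occurrences of $pwd_i$ among the other $N-1$ independent samples is exactly $\mathrm{Bin}(N-1,p_i)$, so the first step is an exact equality,
\[
\Pr[f_i \le j \mid pwd^t = pwd_i] \;=\; \sum_{\ell=0}^{j-1}\binom{N-1}{\ell}p_i^\ell(1-p_i)^{N-\ell-1},
\]
with no inequality needed; this is precisely the first line of the paper's proof. Your third step is then fine, and is in fact tighter than the paper's: the coupling argument shows the whole lower tail $\Pr[\mathrm{Bin}(N-1,p)\le j-1]$ is nonincreasing in $p$ with no side conditions, whereas the paper's term-by-term substitution of $\frac{1}{NL}$ for $p_i$ implicitly needs each summand $p^\ell(1-p)^{N-\ell-1}$ to be decreasing at $p_i$, which is why its proof invokes $p_i>\frac{1}{NL}\geq\frac{j}{N}$ and $j<N/2$. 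Keep the coupling; replace the event-conditioning reduction with the per-user conditioning.
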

\begin{proof}
Recall that for $C_i=1$, we require $p_i>\frac{1}{NL} \geq \frac{j}{N}$ but $f_i\le j$. 
Then for $j<N/2$,
\begin{align*}
\Pr\left[C_i~\vline~ f_i \geq 1 \right]&=\sum_{\ell=0}^{j-1}\binom{N-1}{\ell}p_i^\ell(1-p_i)^{N-\ell-1}\\
&\le \sum_{\ell=0}^{j-1}\binom{N-1}{\ell} \left( \frac{1}{NL}\right)^\ell\left(\frac{NL-1}{NL}\right)^{N-\ell-1}  \\
\end{align*}
%\jnote{To derive above note that each term $\binom{N-1}{\ell}p_i^\ell(1-p_i)^{N-\ell-1}$ is maximized when $p_i = \frac{\ell}{N-1}$, but that we have $p_i>\frac{1}{NL}> \frac{\ell}{N-1}$ for each $\ell < j$. Plugging in $\frac{1}{NL}$ thus gives an upper bound for each term.} 
%
%\jnote{We lose way too much in the next steps...I would either omit them or use Sterling to continue simplification. In particular, we definitely need the term $\left(\frac{NL-1}{NL}\right)^{N-k-1}$. if $L=0.5$ and $N=7\times 10^7$ then $\left(1-\frac{1}{NL}\right)^{N-100} \approx 0.135$ and if $L= 1/7$ then this term is $\approx 0.0009$. It may be ok not to simplify further since readers will be most interested in the final numerical result. }
%\begin{align*}
%&\le\sum_{\ell=1}^{j-1}\binom{N}{\ell}\left(\frac{1}{NL}\right)^\ell\\
%&\le\frac{j-1}{L}
%\end{align*}
\end{proof}

\begin{claim} \label{claim:numberBadUnderestimates}
$\sum_{i: 0 < f_i \leq j} f_i \times \mathbb{E}\left[ C_i~ \vline~ f_i \geq 1\right] $.
\end{claim}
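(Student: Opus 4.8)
The plan is to establish Claim~\ref{claim:numberBadUnderestimates} as the under-estimate analogue of Claim~\ref{claim:numberBadOverestimates}, namely to upper bound $\mathbb{E}\left[\sum_i f_i \times C_i\right]$, the expected number of users whose chosen password receives a $(j,L)$-bad underestimate. First I would recast the frequency-weighted sum $\sum_i f_i \times C_i$ as a sum over the $N$ individual draws $x_1,\ldots,x_N$ from $\mathcal{X}$. For each draw $t \in \{1,\ldots,N\}$, let $C_t'$ be the indicator for the event that the sampled password $x_t$ has true probability $\Pr[x_t] > \frac{1}{NL}$ yet appears at most $j$ times in total among all $N$ samples. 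Because every distinct password flagged by $C_i$ is observed exactly $f_i$ times, each of its $f_i$ occurrences sets the corresponding $C_t'$ to $1$, so $\sum_i f_i \times C_i = \sum_{t=1}^N C_t'$. By linearity of expectation and symmetry of the draws this yields $\mathbb{E}\left[\sum_i f_i \times C_i\right] = \sum_{t=1}^N \Pr[C_t' = 1] = N\,\Pr[C_1' = 1]$.

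Next I would condition on the identity of the sampled password. Writing $\Pr[C_1' = 1] = \sum_i p_i\,\Pr[C_1' = 1 \mid x_1 = pwd_i]$ and observing that, conditioned on $x_1 = pwd_i$, the total count of $pwd_i$ is $1 + \mathrm{Bin}(N-1,p_i)$, the inner conditional probability equals (for those $i$ with $p_i > \frac{1}{NL}$) the probability that $\mathrm{Bin}(N-1,p_i) \le j-1$. This is exactly the quantity $\Pr[C_i \mid f_i \ge 1]$ already controlled in Claim~\ref{claim:probabilityBadUnderestimate}. Identifying the factor $N p_i$ with the expected (hence empirical) number of observations $f_i$ of $pwd_i$ then rewrites the per-draw expression in the frequency-weighted form $\sum_{i:\,0 < f_i \le j} f_i \times \mathbb{E}[C_i \mid f_i \ge 1]$, and substituting the explicit binomial-tail bound of Claim~\ref{claim:probabilityBadUnderestimate} turns this into $\mu(N,L,j)$.

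Combining this expectation bound with Claim~\ref{claim:ModelIndependentUB1} then completes the proof of Theorem~\ref{thm:ModelIndependentUB1}: under the hypothesis $\frac{V}{k} \le NL\left(1 - \frac{1+\eps}{N}\sum_{i=1}^t f_i\right)$, the number of cracked passwords is at most $\sum_{i:\,f_i > j} f_i$ plus the total count of bad underestimates, whose expectation we have just bounded by $\mu(N,L,j)$.

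I expect the main obstacle to be the bookkeeping in the per-draw reformulation, and in particular the passage from the true probabilities $p_i$ to the observed frequencies $f_i$. Since each $f_i$ is itself random, the identification of $N p_i$ with $f_i$ is only exact in expectation; the rigorous anchor is the identity $\mathbb{E}\left[\sum_i f_i \times C_i\right] = N\,\Pr[C_1' = 1]$, and one must check that the event defining $C_t'$ --- count at most $j$, conditioned on $pwd_i$ being drawn --- aligns with the conditional event $\{C_i \mid f_i \ge 1\}$ bounded in Claim~\ref{claim:probabilityBadUnderestimate} rather than with an unconditional probability. Once this alignment is verified, inserting the closed-form binomial tail is routine.
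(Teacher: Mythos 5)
Your proposal is considerably more ambitious than what the paper actually does for this claim, and its final step contains a genuine gap --- one you flag yourself but do not close, and which in fact cannot be closed. The paper's entire proof of Claim~\ref{claim:numberBadUnderestimates} is a one-line substitution: the claim, as stated, is just the expression $\sum_{i:0<f_i\leq j} f_i \times \mathbb{E}\left[C_i \mid f_i\geq 1\right]$, and the paper bounds it by replacing each factor $\mathbb{E}\left[C_i \mid f_i\geq 1\right]=\Pr\left[C_i \mid f_i\geq 1\right]$ with the binomial-tail bound of Claim~\ref{claim:probabilityBadUnderestimate}, which yields exactly $\mu(N,L,j)$. The paper never performs the per-draw expectation computation you propose; that style of argument appears only in its proof of the overestimate analogue, Claim~\ref{claim:numberBadOverestimates}, for the lower bound.

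The rigorous portion of your argument is correct and gives
$\mathbb{E}\left[\sum_i f_iC_i\right]=N\Pr\left[C_1'=1\right]=\sum_{i:p_i>1/(NL)}Np_i\Pr\left[\mathrm{Bin}(N-1,p_i)\leq j-1\right]\leq N\beta$,
where $\beta$ denotes the tail quantity $\sum_{\ell=0}^{j-1}\binom{N-1}{\ell}\left(\frac{1}{NL}\right)^\ell\left(\frac{NL-1}{NL}\right)^{N-\ell-1}$. But the last move --- identifying $Np_i$ with $f_i$ and the index set $\left\{i:p_i>1/(NL)\right\}$ with $\left\{i:0<f_i\leq j\right\}$ so as to land on $\mu(N,L,j)=\beta\sum_{i:0<f_i\leq j}f_i$ --- is not mere bookkeeping; it is false in general. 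Take $\mathcal{X}$ uniform over $M<NL$ passwords. Then every password has $p_i=1/M>1/(NL)$, so every observed password with $f_i\leq j$ is a bad underestimate, and $\sum_i f_iC_i=\sum_{i:0<f_i\leq j}f_i=\mu/\beta>\mu$ deterministically whenever the sum is nonzero, since $\beta<1$. Hence the frequency-weighted quantity $\mu$ does not upper bound $\mathbb{E}\left[\sum_i f_iC_i\right]$, and no expectation-level identification of $Np_i$ with $f_i$ can rescue that step. What your per-draw argument legitimately proves is the weaker but clean bound $N\beta$ (the exact analogue of Claim~\ref{claim:numberBadOverestimates}); if you want a rigorous version of Theorem~\ref{thm:ModelIndependentUB1} along your lines, you should state it with $N\beta$ in place of $\mu(N,L,j)$, since that is what your approach actually supports.
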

\begin{proof}
Follows immediately from Claim \ref{claim:probabilityBadUnderestimate} by substituting into $\mathbb{E}\left[C_i~ \vline~ f_i \geq 1\right]$ in the above sum.
\end{proof}

\end{proofof}

\section{Extra Figures}

\begin{figure}
\centering
\includegraphics[scale=0.31]{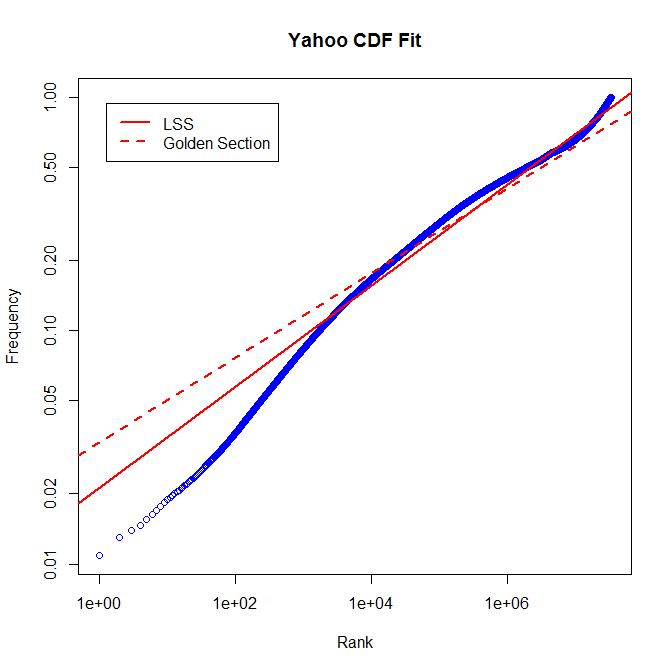}
\caption{Yahoo! CDF-Zipf Fittings}
\label{fig:BothFit}
\end{figure}

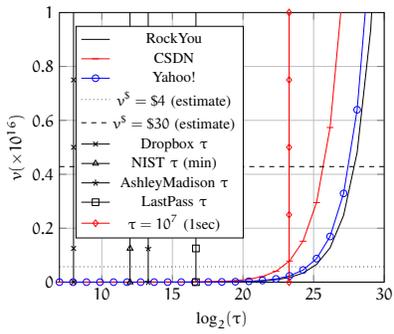
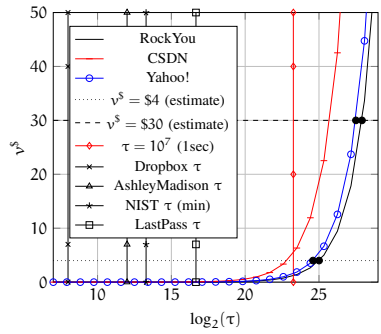
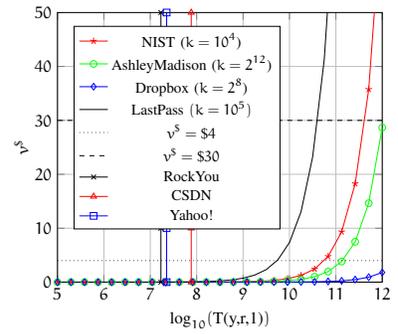
\begin{figure*}[!h]
\centering
\subfigure[$v/k = T(y,r,1)$ for RockYou, CSDN and Yahoo!]{
\centering
\begin{tikzpicture}[scale=0.63] 
\begin{axis}[title style={align=center},
   xlabel={$\log_2(\tau)$},
   ylabel={$v(\times10^{16})$},
	 xmin={7},
	 xmax={30},
	 ymin={0},
	 ymax={1},
   ylabel shift = -3pt,
   grid=major,
	 cycle list = {{red, mark=circle},  {orange, mark=none}, {purple, mark=none}},
   legend style = {font=\small, at={(.05,.95)}, anchor=north west},
	 legend entries = {RockYou, CSDN, Yahoo!, $v^\$=\$4$ (estimate),$v^\$=\$30$ (estimate), Dropbox $\tau$, NIST $\tau$ (min), AshleyMadison $\tau$, LastPass $\tau$, $\tau=10^7$ (1sec)}
  ]

\addplot[color=black, mark=r, domain=7:30]{1.69657*10^7*(2^x)/10^16};
%CSDN
\addplot[color=red, mark=-, domain=7:30]{7.63300*10^7*(2^x)/10^16};
%enter yahoo T(y,r) value here:
\addplot[color=blue, mark=o, domain=7:30]{2.25435*10^7*(2^x)/10^16};
\addplot[style=dotted, domain=7:30]{0.057143};
\addplot[style=dashed, domain=7:30]{0.057143*30.0/4.0};
\addplot[mark=x] coordinates{ (8, 0)  (8,0.125) (8,0.5) (8,0.75) (8, 50)};
\addplot[mark=triangle] coordinates{ (12, 0)(12,0.125)  (12,0.5) (12,0.75) (12, 50)};
\addplot[mark=star] coordinates {(13.2877,0) (13.2877,0.125) (13.2877,0.5) (13.2877,0.75) (13.2877,50)};
\addplot[mark=square] coordinates{ (16.66667, 0) (16.66667,0.125) (16.66667,0.5) (16.66667,0.75) (16.66667, 50)};
\addplot[color=red,mark=diamond] coordinates {(23.2635,0) (23.2635,0.25) (23.2635,0.50) (23.2635,0.750) (23.2635,1)};
\end{axis} 
\end{tikzpicture}
%\caption{$v/k = T(y,r)$ for RockYou and Yahoo! plus $\tau=k$ for Dropbox, AshleyMadison and LastPass.}
\label{fig:0.8:1}}
\subfigure[$v^{\$}$ vs. $\tau$ for $v = k \times T(y,r,1)$. ]{
\begin{tikzpicture}[scale=0.63] 
\begin{axis}[title style={align=center},
  xlabel={$\log_2(\tau)$},
  ylabel={$v^\$$},
	xmin={7},
	xmax={29},
	ymin={0},
	ymax={50},
  ylabel shift = -3pt,
  grid=major,
  legend style = {font=\small, at={(.05,.95)}, anchor=north west},
	legend entries = {RockYou, CSDN, Yahoo!, $v^{\$}=\$4$ (estimate), $v^{\$}=\$30$ (estimate),$\tau=10^7$ (1sec), Dropbox $\tau$, AshleyMadison $\tau$, NIST $\tau$ (min), LastPass $\tau$}
  ] 
\addplot[color=black, domain=7:29]{(1.69657*10^7)*(7*10^-15)*(2^x)};
%CSDN
\addplot[color=red, mark=-,domain=7:29]{(7.63300*10^7)*(7*10^-15)*(2^x)};
%Enter yahoo T(y,r) value here:
\addplot[color=blue, mark=o, domain=7:29]{(2.25435*10^7)*(7*10^-15)*(2^x)};
\addplot[color=black, style=dotted, domain=7:29]{4};
\addplot[color=black, style=dashed, domain=7:29]{30};
\addplot[color=red,mark=diamond] coordinates {(23.2635,0) (23.2635,20) (23.2635,40) (23.2635,50)};
\addplot[mark=x] coordinates{ (8, 0)  (8,7) (8,40) (8, 50)};
\addplot[mark=triangle] coordinates{ (12, 0)(12,7)  (12,40) (12, 50)};
\addplot[mark=star] coordinates {(13.2877,0) (13.2877,7) (13.2877,40) (13.2877,50)};
\addplot[mark=square] coordinates{ (16.66667, 0) (16.66667,7) (16.66667,40)  (16.66667, 50)};
%rockyou coordinates
\addplot[mark=*] coordinates {(25.00545,4)};
\addplot[mark=*] coordinates {(27.91234,30)};
%tianya coordinates
%\addplot[mark=*] coordinates {(24.57985,4)};
%\addplot[mark=*] coordinates {(27.48674,30)};
%yahoo coordinates
%solve((2.25435*10^7)*(7*10^-15)*(2^x)=30)
\addplot[mark=*] coordinates {(24.5954,4)};
\addplot[mark=*] coordinates {(27.5022,30)};
\end{axis} 
\end{tikzpicture}
%\caption{$v^{\$}$ vs. $\tau$ for $v = k \times T(y,r)$. }
\label{fig:0.8:2}}
\subfigure[$v^{\$}$ versus $T(y,r,1)$ when $v = k\times T(y,r,1)$, at fixed values of $k$]{
\begin{tikzpicture}[scale=0.63] 
\begin{axis}[title style={align=center},
   xlabel={$\log_{10}($T(y,r,1)$)$},
   ylabel={$v^{\$}$},
	 xmin={5},
	 xmax={12},
	 ymin={0},
	 ymax={50},
	 %extra x ticks = {7.04034, 6.741211, 5.728109, 7.35503},
	 %extra x tick labels = {},
	 %extra x tick style = {major grid style=red, tick align=outside, tick style=red},
   ylabel shift = -3pt,
   grid=major,
	 cycle list = {{red, mark=none},  {orange, mark=none}, {yellow, mark=none}, {purple, mark=none}},
	 legend style = {font=\small, at={(.05,.95)}, anchor=north west},
	 legend entries = {NIST ($k=10^4$), AshleyMadison ($k=2^{12}$), Dropbox ($k=2^8$), LastPass $(k=10^5)$, $v^{\$}=\$4$, $v^{\$}=\$30$, RockYou, CSDN, Yahoo!}
  ] 
\addplot[color=red,mark=star, domain=5:12]{(10^x)*(7*10^-15)*(10^4)};
\addplot[color=green,mark=o, domain=5:12]{(10^x)*(7*10^-15)*(2^12)};
\addplot[color=blue, mark=diamond, domain=5:12]{(10^x)*(7*10^-15)*(2^8)};
\addplot[color=black, domain=5:11]{(10^x)*(7*10^-15)*(5000+10^5)};
\addplot[color=black, style=dotted, domain=5:12]{4};
\addplot[color=black, style=dashed, domain=5:12]{30};

%rockyou
\addplot[mark=x] coordinates{ (7.229572, 0) (7.229572, 10) (7.229572, 50)};
%CSDN
\addplot[mark=triangle, color=red] coordinates{ (7.882695, 0) (7.882695, 10)  (7.882695, 50)};
%insert log(T(y,r)) base 10 for Yahoo here
\addplot[mark=square, color=blue] coordinates{ (7.353021, 0) (7.353021, 10) (7.353021, 50)};

\end{axis} 
\end{tikzpicture}
%\caption{$v^{\$}$ versus $T(y,r)$ when $v = k\times T(y,r)$, at fixed values of $k$ selected by AshleyMadison, Dropbox and LastPass. Vertical lines show thresholds $T(y,r)$ for  RockYou and Yahoo! }
\vspace{-0.2cm} 
\label{fig:0.8:3}}
\caption{No Diminishing Returns ($a=1$)}
\end{figure*}
\end{document}